\newcommand{\Exp}{\ensuremath{\operatorname{\mathbb{E}}}}
\renewcommand{\Pr}{\ensuremath{\operatorname{\mathbb{P}}}}
\newcommand{\DD}{\mathcal{D}}
\renewcommand{\SS}{\mathcal{S}}
\DeclareMathOperator{\swap}{\mathtt{swap}}
\DeclareMathOperator{\adm}{\mathtt{adm}}
\DeclareMathOperator{\lopt}{\mathtt{lopt}}
\DeclareMathOperator{\id}{\mathtt{id}}
\DeclareMathOperator{\mdev}{mdev}
\DeclareMathOperator{\dev}{tdev}
\DeclareMathOperator{\mdsp}{mdsp}
\DeclareMathOperator{\dsp}{tdsp}
\newcommand{\FourSplitLine}[4]{
\draw[{Circle[scale=0.7]}-stealth,black,very thick] 
(#1 * \Side, 0.65 * \Side) -- 
(#1 * \Side, #2 * \Side) --
(#4 * \Side, #3 * \Side) --
(#4 * \Side, - 1.5 * \Side);}
\newcommand{\SplitLine}[3]{\FourSplitLine{#1}{#2}{#2}{#3}}
\begin{document}

\date{}
\title{Naively Sorting Evolving Data is Optimal and Robust}

\author{
George Giakkoupis\thanks{George Giakkoupis was supported in part by Agence Nationale de la Recherche (ANR) under project ByBloS (ANR-20-CE25-0002), and by Inria Action Exploratoire DisEvo.}\\
\textit{Inria} \\
Rennes, France \\[-0.07cm]  {\footnotesize \texttt{george.giakkoupis@inria.fr} }
\and
Marcos Kiwi\thanks{Marcos Kiwi was supported by FB210005, BASAL funds for Centers of Excellence from ANID-Chile, and GrHyDy ANR-20-CE40-0002.} \\
\textit{Universidad de Chile} \\
Santiago, Chile \\[-0.07cm]
{\footnotesize \texttt{mk@dim.uchile.cl} }
\and Dimitrios Los\thanks{Dimitrios Los was supported by the LMS Early Career Fellowship (ref.~2024-19), and Inria Action Exploratoire DisEvo.} \\
\textit{Inria} \\
Rennes, France \\[-0.07cm]
{\footnotesize \texttt{mail@dimitrioslos.com}}}

\maketitle

\begin{abstract}
We study comparison sorting in the \emph{evolving data model}, 
introduced by Anagnostopoulos, Kumar, Mahdian and Upfal (2011), 
where the true total order changes \emph{while} the sorting algorithm is processing the input.
More precisely, each comparison operation of the algorithm is followed by a sequence of evolution steps, where an evolution step perturbs the rank of a random item by a ``small'' random value. The goal is to maintain an ordering that remains close to the true order over time.  Previous works have analyzed adaptations of classic sorting algorithms, assuming that an evolution step changes the rank of an item by just one, and that a fixed constant number~$b$ of evolution steps take place between two comparisons.
In fact, the only previous result achieving optimal linear total deviation, by Besa Vial, Devanny, Eppstein, Goodrich and Johnson (2018a), applies just for $b=1$.

We analyze a very simple sorting algorithm suggested by Mahdian (2014), which samples a random pair of adjacent items in each step and swaps them if they are out of order.
We show that the algorithm  achieves and maintains, with high probability, optimal total deviation, $O(n)$,  and optimal maximum deviation, $O(\log n)$, 
under very general model settings. 
Namely, the perturbation introduced by each evolution step is sampled from a general distribution of bounded moment generating function, and we just require that the \emph{average} number of evolution steps between two sorting steps be bounded by an (arbitrary) constant, where the average is over a linear number of steps.

The key ingredients of our proof are a novel potential function argument that inserts ``gaps'' in the list of items, and a general analysis framework which separates the analysis of sorting from that of the evolution steps, and is applicable to a variety of settings for which previous approaches do not apply. 
Our results settle conjectures and open problems in the three aforementioned works, and  provide theoretical support that simple quadratic algorithms are optimal and robust for sorting evolving data, as empirically observed by Besa Vial, Devanny, Eppstein, Goodrich  and Johnson (2018b).
\end{abstract}

\section{Introduction}
\label{sec:intro}

We consider the problem of sorting in the \emph{evolving data model}, introduced by Anagnostopoulos, Kumar, Mahdian and Upfal~\cite{AnagnostopoulosKMU09,AnagnostopoulosKMU11}.
In this model, the input data are changing slowly while the algorithm is processing them.
Changes to the data typically follow some
simple stochastic process, and they are not directly communicated to the algorithm when they occur; instead the algorithm has query access to the data.
The challenge for the algorithm is then to maintain over time 
an output close to the correct output for the current state of the data.
This model is different from the standard dynamic (graph) algorithm model~\cite{EppsteinGI10}, which assumes that changes are adversarial but are reported to the algorithm.
Also, the focus of the evolving data model is on query complexity rather than computational complexity, in particular, on the rate of queries of the algorithm relative to the rate at which the data are changing. 

Several classic problems have been studied in the evolving data model, from  stable matching with evolving preferences~\cite{KanadeLM16}, to 
community detection~\cite{AnagnostopoulosALLLM16}
and PageRank computation in evolving graphs~\cite{BahmaniKMU12} (see section Other Related Work).
We focus on the problem of sorting a set $S$ of~$n$ items, when the true underlying total order $\rho$ of $S$ changes gradually over time.
The sorting algorithm, which may only perform pairwise comparisons, 
is tasked with maintaining an ordering $\nu$ on $S$ that is a close approximation to the true order $\rho$,
where $\rho$ and $\nu$ are permutations on $S$. %

The above sorting problem arises naturally in various real-life settings where ranks are evolving over time, and comparisons tend to be slow or expensive.
For example, ranking tennis or chess players whose abilities change over time, based on head-to-head matches \cite{Glickman02}; maintaining a ranking of a set of sports teams, whose competitiveness varies during a season;
ranking political candidates, where comparisons involve running a debate or doing a poll; 
ranking movies, songs, or websites, where comparisons involve an online survey or A/B testing.

Sorting was the first problem
to be studied in the evolving data model~\cite{AnagnostopoulosKMU11}.
The precise model considered was that each step of the algorithm is followed by~$b$ evolution steps, for some constant $b \geq 1$. 
A step of the algorithm involves a pairwise comparison between the ranks in $\rho$ for two items in $S$ (specified by the algorithm),
followed by arbitrary updates to the maintained order $\nu$. 
An evolution step consists of a \emph{random adjacent rank swap}, i.e., swapping $\rho(i)$ and $\rho(i+1)$ for an index $i\in [n-1]$  sampled uniformly at random. 
No memory or computational constraints are imposed on the sorting algorithm, which is executed by a central agent.

It was shown 
in \cite{AnagnostopoulosKMU11} that repeated application of classic randomized Quicksort achieves, after the first $O(n\log n)$ steps, maximum deviation 
$\max_{s\in S}|\nu^{-1}(s) - \rho^{-1}(s)|= O(\log n)$ 
between the maintained order $\nu^{-1}(s)$ and the true order $\rho^{-1}(s)$ of any item $s\in S$ with high probability (w.h.p.),\footnote{With high probability refers to probability of at least $1 - n^{-c}$ for some constant $c > 0$.} and thus  total deviation $\sum_{s\in S}|\nu^{-1}(s) - \rho^{-1}(s)| = O(n\log n)$.\footnote{In fact, the actual bound was expressed in terms of the Kendall-tau distance which is equal to the total deviation within a factor of 2 (see \cref{def:permutation-distances} and Eq.\cref{eq:Kvsdis}.)}
Moreover, a refined algorithm, which runs Quicksort interleaved 
with copies of Quicksort on smaller ranges of the maintained list, achieves total deviation $O(n\log\log n)$.
On the lower bound side, it was  shown that for any algorithm the total deviation is $\Omega(n)$ w.h.p., and this was conjectured to be tight for any $b \geq 1$. %

Besa Vial et al.~\cite{VialDEGJ18} confirmed the conjecture for the special case of $b=1$, for repeated application of Insertion Sort. More specifically, they proved that Insertion Sort achieves $O(n)$ total deviation w.h.p., after the first $O(n^2)$ steps, and when combined with Quicksort reduces the steps required to $O(n\log n)$. Their analysis relied heavily on the assumption that $b = 1$, and so they left open the problem of achieving linear total deviation in the general case $b \geq 2$.

In parallel, experimental work by Besa Vial et al.~\cite{VialDEGJ18a} suggested that several quadratic algorithms, namely, Bubble Sort, Cocktail Sort, and Insertion Sort, all achieve linear total deviation for any constant rate $b$ of adjacent rank swaps,
and even when random adjacent swaps are replaced by more general, local rank perturbations (as we discuss below).
They posed as an interesting problem to provide theoretical analyses supporting these empirical results.

\paragraph{Our Contributions.}
Since the problem of evolving sorting was first introduced~\cite{AnagnostopoulosKMU09}, it has been suspected that the following basic randomized algorithm is asymptotically optimal, achieving linear total deviation (see, for example, the talk by Mahdian~\cite{talk2014}).
\begin{quote}
    \emph{Na\"ive Sort}: In each step, chose a uniformly random pair of adjacent items and swap them if they are out of order.\footnote{We coined the name `Na\"ive Sort' after failing to find an established name for this algorithm in the literature.} 
\end{quote}
Na\"ive Sort has several attractive properties: 
It is extremely simple; 
it has minimal memory requirements, in fact, there is no need to maintain any state from one step to the next;
and it is inherently parallelizable.
Also, it has quadratic comparison complexity (w.h.p.)~\cite{GasieniecSS23}, similarly to Bubble Sort, Cocktail Sort, and Insertion Sort.

We show that Na\"ive Sort achieves indeed optimal asymptotic deviation, in a robust manner.
We analyze Na\"ive Sort under a general model, which extends the original model in two ways.
First, instead of random adjacent rank swaps, we assume more general \emph{local rank perturbations}:
\begin{quote}
    \emph{Local rank perturbation step}: 
    We sample an integer $s$ from a distribution of zero mean and bounded moment generating function; %
    then we successively swap the rank of a uniformly random item with the ranks of its~$s$ succeeding items in $\rho$ if $s>0$, or with the ranks of its $-s$ preceding items in $\rho$ if $s<0$
    (see \cref{def:local-rank-perturbation}).

\end{quote}
Second, rather than assuming that the number of evolution steps between two consecutive sorting steps is a fixed number $b$, we just assume that it is bounded \emph{on average}, where the averaging is over a linear number of steps.
\begin{quote}
    \emph{Bounded average rate of  evolution steps}:
    Let $b_i$, for $i\geq1$, be the number of evolution steps between the $i$-th and $(i+1)$-th sorting steps.
    Then,
    \begin{equation}
        \label{eq:bounded-rate}
        \sum_{i=t+1}^{t+n} b_i \leq b\cdot n,\
        \text{ for all }
        t\geq0,
    \end{equation}
    where $b$ can be an arbitrary constant
    (see \cref{def:bounded average rate if mixing steps}).
\end{quote}
Our main result can then be stated as follows.

\begin{theorem}
    \label{thm:intro-first}
    Under evolution steps that are local rank perturbations and occur at bounded average rate, it holds for any $t=\Omega(n^2)$ large enough, %
    that after $t$ steps of Na\"ive Sort, the maximum deviation between the maintained order %
    and the true order %
    is $O(\log n)$  and the total deviation is $O(n)$ w.h.p.
\end{theorem}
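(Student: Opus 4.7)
The plan is to design a potential function on the pair $(\nu_t, \rho_t)$ and show it satisfies a negative drift under the coupled dynamics of Na\"ive Sort and the evolution steps. Following the framework outlined in the abstract, I would decompose the per-step drift into a ``sort contribution'' and an ``evolution contribution,'' bound each separately, and handle the averaged rate condition~\cref{eq:bounded-rate} by amortizing a budget of $b$ evolution steps against each sort step via a Doob-style decomposition over windows of $n$ consecutive steps. Standard concentration (an exponential supermartingale argument, or iterated Markov bounds in the exponential case) then converts the expected bounds into the claimed w.h.p.\ statements.

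For the total-deviation bound, I would use a ``gap'' potential $\Phi_t = \sum_{i=1}^{n-1} g_i^{(t)}$, where $g_i^{(t)} = |\{s \in S : \nu_t^{-1}(s) \le i < \rho_t^{-1}(s)\}|$ counts the items straddling cut $i$; this equals the total deviation up to a factor of two. A Na\"ive Sort step is equivalent to picking a cut $i$ uniformly at random and decreasing $g_i$ by one exactly when the two adjacent items are inverted, so its expected decrease of $\Phi$ is proportional to the number of adjacent inversions, which a Diaconis--Graham-style inequality lower bounds by $\Omega(\Phi/n)$. Hence the expected per-sort-step drop is $\Omega(\Phi/n^2)$. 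A local rank perturbation of magnitude $|\sigma|$ changes $\Phi$ by at most $O(|\sigma|)$, and the bounded MGF assumption yields $\Exp[|\sigma|] = O(1)$. Balancing drifts gives $\Exp[\Phi_t] = O(bn) = O(n)$, upgradeable to a w.h.p.\ bound by a supermartingale tail inequality.

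For the max-deviation bound, I would upgrade $\Phi$ to the exponential potential $\Psi_t = \sum_{i=1}^{n-1} \bigl( e^{\alpha\, g_i^{(t)}} - 1 \bigr)$, with $\alpha > 0$ small enough to lie inside the radius of convergence of the perturbation MGF. Since a max deviation of $D$ forces some $g_i^{(t)} \ge D/2$, a polynomial bound on $\Exp[\Psi_t]$ implies $D = O(\log n)$ w.h.p.\ by Markov's inequality and a union bound over polynomially many steps. Each evolution step inflates $\Psi$ by at most the MGF factor $M(\alpha)$; each sort step should contract $\Psi$ by roughly a factor $1 - \Omega(\alpha/n)$, because convexity of $e^{\alpha x}$ amplifies the effect of decreasing $g_i$ at positions where the profile is tall, exactly where adjacent inversions are likely. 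The main obstacle will be this sort-step contraction for $\Psi$: unlike the linear $\Phi$, the exponential potential is sensitive to the local profile of the $g_i$'s and can temporarily fail to contract when the profile is flat or when large $g_i$ values are isolated; handling this requires the novel ``gap insertion'' device suggested by the abstract, which effectively inserts virtual separators into the list so that contraction can be averaged cleanly over a neighborhood of each peak. A secondary difficulty is the amortization under bursty evolution permitted by~\cref{eq:bounded-rate}: since up to $bn$ evolution steps can fall between two consecutive sort steps, the drift inequality must be established in expectation over windows of length $\Theta(n)$ rather than step by step, which I would handle by introducing an additive correction term that absorbs intra-window fluctuations and turns the process into a supermartingale at the window scale.
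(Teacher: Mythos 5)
Your proposal has the right high-level instincts — separating sort drift from evolution drift, using an exponential potential for concentration, and recognizing that a ``gap'' device is needed — but the concrete potential you propose and the drift claims you attach to it are both broken, in ways that the paper itself identifies as the central obstacles.

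\paragraph{The linear potential on cut-crossing counts does not have the claimed drift.} You claim that the number of adjacent inversions is lower-bounded by $\Omega(\Phi/n)$ via a Diaconis--Graham-type inequality. This is false. Take $\pi_t = (\tfrac n2{+}1,\dots,n,1,\dots,\tfrac n2)$: then $\Phi_t=\sum_i g_i^{(t)}=\Theta(n^2)$, so $\Phi_t/n=\Theta(n)$, but the number of adjacent inversions is exactly one (only positions $\tfrac n2,\tfrac n2{+}1$ are out of order). Diaconis--Graham relates total deviation to the \emph{total} inversion count, not the \emph{adjacent} inversion count, which can be as small as $1$ for any non-identity permutation. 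This is precisely the paper's counterexample to naive potentials. Moreover, even if your rate $\Omega(\Phi/n^2)$ per sort step were correct, balancing it against a per-evolution-step increase of $O(1)$ at rate $b$ per sort step gives a steady state $\Phi_t=O(bn^2)$, which is vacuous; your stated conclusion ``$\Exp[\Phi_t]=O(bn)=O(n)$'' does not follow from the drifts you wrote down. To get a linear bound you need a \emph{multiplicative} contraction $1-\Omega(1/n)$, which a linear additive potential cannot exhibit (it drops by at most $1$ per step).

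\paragraph{The cut-count profile does not control the maximum deviation.} Your reduction ``a max deviation of $D$ forces some $g_i^{(t)}\ge D/2$'' is also false. For the cyclic shift $\pi_t=(2,3,\dots,n,1)$, the maximum deviation is $n-1$ (item $1$ is at position $n$), yet $g_i^{(t)}=1$ for every $i$, so $\max_i g_i^{(t)}=1$. A single item displaced far contributes $+1$ to $g_i$ over a long range of cuts, but never makes any single $g_i$ large. Consequently the exponential potential $\sum_i(e^{\alpha g_i^{(t)}}-1)$ can be $O(n)$ while the maximum deviation is $\Theta(n)$, so polynomial control of this potential yields no bound on $\mdev$ at all. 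This makes the max-deviation half of your argument unrecoverable with this choice of potential.

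\paragraph{What the paper actually does.} The paper's potential is not built on the cut-crossing counts $g_i$ but on per-item displacements. It works with $\pi_t$ directly, inserts literal gap symbols into the sequence $(\pi_t(1),\dots,\pi_t(n))$ to form a list $l_t$ of length $\approx dn$, and uses $\Phi_t=\sum_{l_t[j]\ne\bot}(e^{\alpha|j-d\cdot\tau_t(l_t[j])|}-1)$, where the targets $\tau_t$ absorb the evolution steps. The gaps guarantee (via \cref{lem:potential_of_header_dominates}) that the \emph{head} of each sorted block dominates the block's total contribution by a constant factor, which is exactly what fixes the plateau/sorted-block counterexample and yields the multiplicative drop $1-\Omega(\alpha/n)$ in \cref{lem:new_sorting_step_drift}. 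The separate potential $\Psi_t$ on the target permutation $\sigma_t$ then isolates the random-walk behavior of the evolution steps. This decomposition — and the $\theta$-filtering/phase argument for the linear total-deviation bound — is not recovered by your proposal, and the obstacles you correctly sense in your last paragraph are in fact fatal to the $g_i$-based route rather than merely technical.
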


Our linear bound on the total deviation is optimal due to the lower bound of \cite{AnagnostopoulosKMU11}.
The logarithmic bound on the maximum deviation is also optimal.
In fact, we prove a more general bound of $O(b\log n)$ on the maximum deviation, which applies even when $b$ in Eq.~\cref{eq:bounded-rate} is super-constant.
It is not difficult to prove that this is optimal (see~\cref{lem:max_and_total_deviation_lb}).

The requirement in \cref{thm:intro-first} that $t = \Omega(n^2)$, for the number of sorting steps $t$ before the claimed deviation bounds are achieved, is in place to cover the worst-case initialization.
We show that a more refined requirement suffices, namely, that  $t = \Omega(n\cdot (\Delta+\log n))$, where $\Delta$ is the maximum deviation 
initially
(and moreover this is tight; see \cref{lem:convergence_time_lb}).
Also, similarly to~\cite{VialDEGJ18}, 
it is possible to combine Na\"ive Sort with Quicksort to ensure it suffices that $t = \Omega(n \cdot \log n)$, for all initializations.

Prior to our work, no algorithm was known to achieve linear total deviation, even in the original model (with random adjacent rank swaps, and constant number $b$ of evolution steps between sorting steps), when $b\geq 2$.
In particular, the algorithms of \cite{AnagnostopoulosKMU11} achieve super-linear total deviation, and \cite{VialDEGJ18} prove linear total deviation only for $b=1$ (and have no  upper bounds on the maximum deviation).

Thus, \cref{thm:intro-first} settles the conjecture of \cite{AnagnostopoulosKMU11} (and the open problem in~\cite{VialDEGJ18}), that there is an algorithm achieving linear total deviation for their original model in the general case of  $b \geq 2$, and also the more specific conjecture of \cite{talk2014} that this is achievable by Na\"ive Sort.

More general models have been considered before only experimentally, in \cite{VialDEGJ18a}.
In fact, the \emph{hot spot adversary} proposed there is a special case of our model of local rank perturbations, where  $s$ is sampled from a geometric distribution with constant success probability and then its sign is flipped with probability~$1/2$.\footnote{Actually, the hot spot adversary was an inspiration for our local rank perturbation model.} 
Our results thus provide theoretical support for the empirical evidence in~\cite{VialDEGJ18a} that simple quadratic algorithms are optimal and robust for sorting evolving data, even under more general assumptions than considered empirically before.

\paragraph{Technical Overview.}
We give now an overview of the key ideas used in the proof of \cref{thm:intro-first}.
Instead of working with $\nu$ and $\rho$, it suffices to work with a single permutation $\pi$ on $[n]$ such that $\pi(i) = j$ if $\nu(i) = \rho(j)$.
The maximum and total deviation of $\pi$ (from the identity permutation) are then the same as respectively the maximum and total deviation between
$\nu^{-1}$ and $\rho^{-1}$. 
Na\"ive Sort under local rank perturbations can be described in terms of $\pi$ as follows.
Let $\pi_t$ denote the permutation after $t$ steps.
\begin{itemize}
 \item If step $t+1$ is a sorting step, then a random pair $\pi_{t}(i),\pi_t(i+1)$ is chosen and is swapped if it is out of order, i.e., if $\pi_{t}(i) > \pi_t(i+1)$
 \item If step $t+1$ is an evolution step, which we call a \emph{mixing step} from now on, then a random item $i$ and a random perturbation $s$ are picked, and $\pi^{-1}_{t}(i)$ is swapped with its $s$ succeeding or its $-s$ preceding items (if $s<0$) in the inverse permutation $\pi_t^{-1}$.
\end{itemize}

The first key idea we employ is a novel potential function argument, which relies on inserting ``gaps'' between the elements of permutation $\pi_t$.
For Na\"ive Sort in the non-evolving setting (i.e., in the absence of mixing steps), there is already an elegant potential function argument that shows $\Theta(n^2)$ running time w.h.p., by Gasieniec, Spirakis and Stachowiak \cite{GasieniecSS23}.
Their proof first applies the standard 0/1 principle, which reduces the problem to sorting a list of 0's and 1's \cite[Section 5.3.4]{Knuth98a}, and then analyzes a potential function where each 1 contributes an exponential term, whose exponent depends on the number of succeeding~0's and the number of preceding~1's.  

Unluckily, the 0/1 principle no longer applies under mixing steps. 
Yet, it is still instructive to attempt to adapt their potential function for a 0/1 list, to our permutation setting.
A natural choice is potential function  $\sum_i\left(2^{|i-\pi_t(i)|} - 1\right)\cdot 2^{\pi_t(i)}$, where $|i-\pi_t(i)|$ is the deviation of item at position $i$ from its correct position $\pi_t(i)$.
We first observe that the simpler potential $\sum_i\left(2^{|i-\pi_t(i)|} - 1\right)$ does not work, even without mixing steps. The problem is due to sorted blocks (i.e., maximal contiguous sequences of sorted items) whose contribution to the potential can drop only if the first or last item in the block is selected. 
E.g., in permutation $\pi_t=(\tfrac{n}{2} + 1, \ldots ,n,1,\ldots ,\tfrac{n}{2})$ the potential decreases only if $n$ and $1$ are swapped, and thus the simpler potential drops in expectation just by a factor of $1 - \Theta(n^{-2})$ %
(whereas we would like to drop by a factor of $1 - \Omega(n^{-1})$).
This issue is handled by the additional factor of $2^{\pi_t(i)}$, which gives more weight to the contribution of larger items.
Even though this works fine if there are no mixing steps, it no longer works with mixing steps.
Such a step may drastically increase the potential, e.g., by swapping items $i$ and $i+1$, for a large $i$, when the two items are already at the right position, thus increasing their contribution to the potential from 0 to  $2^i+2^{i+1}$.

Instead of analyzing directly the evolution of permutation $\pi_t$, in our proof we consider a larger list $l_t$ of size roughly~$nd$, for some constant integer $d > 1$. 
The list is obtained by inserting ``gaps'', denoted by~$\bot$, to the permutation list $(\pi_t(1), \pi_t(2), \ldots,\pi_t(n))$.
The sorted state of $l_t$ is the one where $l_t[d\cdot i] = i$ for each $i\in[n]$, and $l_t[j]=\bot$ for the remaining indices $j$.
In general, the placement of gaps can be arbitrary, subject to satisfying a simple \emph{local invariant}, namely, that there is no gap to the left (right) of item $i$ if its target position~$d\cdot i$ is to its left (resp.\ right).
Each sorting and mixing step on $\pi_t$ translates then directly into a similar step in $l_t$, involving the same items.

The above elementary transformation of the problem, from sorting a permutation $\pi_t$ to sorting a list $l_t$ with gaps, serves the following purpose.
The natural potential function $\sum_i\left(2^{|i-\pi_t(i)|} - 1\right)$, which, as we saw earlier, does not behave as we wanted, works extremely well when applied to $l_t$; the precise potential function in now $\sum_{l_t[j]\neq\bot} \left(e^{\alpha\cdot |j-d\cdot l_t[j]|} - 1\right)$, where $\alpha>0$ is a constant.
In particular, the problem we had before with sorted blocks is no longer an issue, because now the contribution to the potential function of a sorted block (without gaps) is dominated by the contribution of the first (or last) item of the block.
It is then not difficult to show that, in the absence of mixing steps, the exponential function drops by the desirable rate of $1-\Omega(n^{-1})$.
Hence, there is no need for the exponential weight factor used in the approach based on~\cite{GasieniecSS23}, which was a main obstacle to our handling of mixing steps.
Further, the simpler potential function facilitates the technique we describe next, for analyzing independently sorting and mixing steps.

The second key idea of the proof is a way to separate the analysis of the sorting steps from that of the mixing steps.
The idea is to move
the target position of each item~$i$ in the list $l_t$ during the process. 
For that, we rely on an additional permutation $\tau_t$ on $[n]$, and let the target position of item $i$ in list $l_t$ be $d\cdot \tau_t(i)$ ($\tau_0$ is the identity permutation, thus the initial target position of $i$ is $d\cdot i$).  
Each time a mixing step occurs, 
we modify also~$\tau_t$ by swapping $\tau_t(i)$ with its $-s$ preceding or $s$ succeeding items in $\tau_t$, where~$i$ and $s$ are the item and perturbation selected at the mixing step;
when a sorting step occurs, $\tau_t$ does not change.
In addition, in each step we must maintain a basic invariant, called \emph{admissibility condition}, 
which may lead to swapping some pairs in $\tau_t$ at the end of the step (see \cref{def:admissible}); but such swaps are favorable for our analysis.

Using $d\cdot \tau_t(l_t[j])$ as the target position for each element $l_t[j]$, instead of $d\cdot l_t[j]$, we can apply the same potential function analysis as in the case without mixing steps; precisely the potential function is
$\Phi_t = \sum_{l_t[j]\neq\bot} \left(e^{\alpha\cdot |j-d\cdot \tau_t(l_t[j])|} - 1\right)$.
In the analysis of $\Phi_t$ we can ignore mixing steps because our construction ensures that a mixing step cannot increase $\Phi_t$.
And as before we have that $\Phi_t$ decreases in expectation by a factor of $1-\Omega(n^{-1})$ with each sorting step.

We analyze separately the deviation of $\tau_t$ using another potential function, %
$\Psi_t = \sum_i e^{\alpha'\cdot|i - \tau_t(i)|} = \sum_i e^{\alpha'\cdot|\tau_t^{-1}(i)-i|}$.
In the absence of sorting steps, and if it were $s=1$ for all mixing steps, then the differences $\tau_t^{-1}(i)-i$ would evolve similarly to simple (unbiased) random walks.
Our construction ensures that sorting steps cannot increase potential $\Psi_t$, and the assumption that $s$ has bounded MGF means that  $\tau_t^{-1}(i)-i$ is still close to a simple random walk.
We thus use $\Psi_t$ to show that the maximum deviation of $\tau_t$ after $t = k \cdot n$ steps is at most $O(\sqrt{k\log n})$ w.h.p.

From the two results above, for the sorting and mixing steps, we have that starting from a maximum deviation of $k$ for $\pi_0$, and thus $\Phi_0 \leq n\cdot e^{O(k)}$, it takes $t = O(k\cdot n)$ steps before $\Phi_t = O(n)$ (as the potential drops in expectation by a factor of $1-\Omega(n^{-1})$ with each sorting step), 
and at that point the maximum deviation of $\tau_t$ is $O(\sqrt{k\log n})$.  
It follows that the maximum deviation of $\pi_t$ at that moment is $O(\log n + \sqrt{k\log n})$.
Repeating the argument, we show that for any initialization, after 
$t = \Theta(n^2)$ steps, the maximum deviation of $\pi_t$ is $O(\log n)$ w.h.p.

The last key component of the analysis is a method to refine the above argument that bounds the maximum deviation, to establish a linear bound on the total deviation for~$\pi_t$. 
As is, the above argument gives just $O(n\log n)$ total deviation.
The problem is that potential $\Psi_t$ can only be used to bound the maximum deviation of $\tau_t$, and provides little information about the total deviation of~$\tau_t$.

We show that unlike the maximum deviation of $\tau_t$ which grows to $O(\sqrt{k\log n})$ after $t = kn$ steps, most of the deviation terms $|i-\tau_t(i)|$ are concentrated around~$\sqrt{k}$.
In particular, the sum of all deviation terms larger than $k^{2/3}$ is $o(n)$ w.h.p.
This facilitates the following approach.
Starting from maximum deviation $k=\Theta(\log n)$ for~$\pi_t$, and thus $\Phi_t \leq n\cdot e^{O(k)}$, we consider a phase of $O(kn)$ steps until $\Phi_t = O(n)$. At the end of the phase we perform a ``partial target reset,'' which roughly amounts to setting  $\tau_t(i) = i$ for all  $i$ such that $|\tau_t(i) - i| \leq k^{2/3}$ and leaving unchanged the remaining entries.
The unchanged entries contribute just $o(n)$ to the total deviation of $\tau_t$, while 
the partial reset ensures that potential $\Phi_t$ is %
$n\cdot e^{O(k^{2/3})}$.
We then repeat the above, with the length of the phase reduced from $O(nk)$ to $O(nk^{2/3})$. %
After a few such phases 
we obtain that $\Phi_t = O(n)$ and also the total deviation of $\tau_t$ is $O(n)$, which together imply the desired linear total deviation of~$\pi_t$.

We formalize this argument by introducing another permutation $\sigma_t$ on $[n]$,
which is reset to the identity permutation at the beginning of each of the phases above, and is updated according to the same rules as $\tau_t$ (which is only partially reset).
Potential $\Psi_t$ is then defined in terms of $\sigma_t$ rather than $\tau_t$.
And the proof proceeds by showing that in each phase, $\tau_t$ and~$\sigma_t$ remain close to each other, while potential $\Phi_t$ drops to $O(n)$, and potential $\Psi_t$ does not increase much.

We note that previous techniques were sensitive to the specific settings analyzed. Here, we demonstrate that the use of potential function arguments allows us to overcome several obstacles that hindered those techniques, and analyze an extremely general setting. Even though we focus on the fundamental problem of sorting, we expect that techniques therein extend also to other problems with evolving data. %

\paragraph{Other Related Work.}

A problem related to evolving sorting is the well-studied problem of noisy sorting~\cite{FeigeRPU94,BravermanM08,GuX23}, where there is a fixed underlying total order, but the outcome of pairwise comparisons may be incorrect.
Depending on the precise setting, the goal  may be to find the most likely total order, e.g., when comparisons are permanently incorrect, or to find the correct order with a small number of comparisons, e.g., when failure probabilities of comparisons are independent.
Unlike noisy sorting, in evolving sorting comparisons are accurate but the underlying total order is changing.

The biased shuffling algorithm studied by Benjamini et al. \cite{Benjamini2005} (see also \cite{DiaconisR00}), can be viewed as a noisy version of na\"ive sorting, where each comparison is faulty independently with probability $f < 1/2$.
The mixing time of this noisy na\"ive sorting was shown to be $O(n^2)$. 
It would be interesting to study na\"ive sorting in a setting combining both evolving ranks and noisy comparisons.
Our current analysis does not work if we allow for noisy comparisons.

Besides sorting, several problems have been studied in the evolving data model: 
selecting the $k$-th %
element or the top-$k$ elements under evolving rankings~\cite{AnagnostopoulosKMU11,HuangLSZ17};
stable matching with evolving preferences~\cite{KanadeLM16};
label tracking on trees with evolving label-to-vertex mappings~\cite{AcharyaM22};
and various problems on evolving graphs, 
including $(s,t)$-connectivity and minimum spanning tree~\cite{AnagnostopoulosKMUV12},
densest subgraph computation~\cite{EpastoLS15},
community detection on the stochastic block model~\cite{AnagnostopoulosALLLM16},
and PageRank computations~\cite{BahmaniKMU12,OhsakaMK15,MoL21}.
In these works, the evolution that the graph undergoes typically involves changing the list of edges by adding a new edge or deleting an existing edge at each step, or changing the ranking of the edge weights.
The %
minimum spanning tree algorithm for evolving edge weights proposed in~\cite{AnagnostopoulosKMUV12}, and a generalization of it that finds a basis of minimum weight in an evolving matroid model, both use an evolving sorting algorithm as a 
component to sort the weights.

\paragraph{Road Map.}

In \cref{sec:prelims} we introduce some terminology and notation, define the processes and settings we analyze, and provide some auxiliary lemmas. In \cref{sec:max_deviation,sec:total_deviation} we establish \cref{thm:intro-first}. More specifically, in \cref{sec:max_deviation} we prove the $O(b \cdot \log n)$ bound on the maximum deviation (\cref{thm:mdev_whp}) and the bound on the convergence time (\cref{cor:max_deviation_convergence_time}). In \cref{sec:total_deviation}, we prove the $O(n)$ bound on the total deviation for any constant $b \geq 1$. In \cref{sec:lower_bounds}, we complement the upper bounds with some asymptotically tight lower bounds.

\section{Preliminaries}
\label{sec:prelims}

\subsection{Some Standard Definitions} %

Let $[n] = \{1,2,\ldots,n\}$. 
By $\SS_n$ we denote the set of permutations of $[n]$, and by $\id_n$ the identity permutation. 
For any $\pi\in \SS_n$, and $i_1,i_2\in [n]$, let $\swap(\pi, i_1, i_2)$ be the permutation $\pi'$ defined as $\pi$ with~$\pi(i_1)$ and $\pi(i_2)$ swapped, i.e.,
\[
  \pi'(i) = \begin{cases}
     \pi(i_2), & \text{if }i = i_1, \\ 
     \pi(i_1), & \text{if }i = i_2, \\
     \pi(i), & \text{otherwise}.
 \end{cases}
\]
We will use the following distance measures between two permutations.
\begin{definition} [Permutation Distances]
    \label{def:permutation-distances}
    For any permutations $\pi,\pi'\in \SS_n$,
    their
    \emph{maximum deviation}
    is
    \[
        \mdev(\pi,\pi') 
        =
        \max_{i\in [n]} |\pi(i)-\pi'(i)|
        ,
    \]
    their
    \emph{total deviation} (also known as 
    \emph{Spearman’s footrule}~\cite{DiakonisG77}) is
    \[
        \dev(\pi,\pi') = \sum_{i\in [n]} |\pi(i)-\pi'(i)|,
    \]
    and their \emph{Kendall-tau distance}~\cite{K38} (also known as \emph{bubble-sort distance}) is
    \[
      K(\pi,\pi')=|\{(i,j)\in [n]^2 : \pi(i)<\pi(j), \pi'(i)>\pi'(j)\}|.
    \]
    Also $\mdev(\pi) = \mdev(\pi,\id_n)$ is just called  the maximum deviation of $\pi$;
    and similarly for the other two distances.    
\end{definition}
The following close relation between total deviation and Kendall-tau distance was shown in
\cite{DiakonisG77},
\begin{equation}
    \label{eq:Kvsdis}
    K(\pi,\pi')\leq \dev(\pi,\pi') \leq 2\cdot K(\pi,\pi').
\end{equation}

\subsection{Evolving Sorting Model and Na\"ive Sort}
\label{sec:basic-model}

The evolving sorting model we consider in this paper is a generalization of the original model introduced in~\cite{AnagnostopoulosKMU11}.
First, we describe the original model before we present its generalization.
Then, we give a more concise representation, and apply the model to Na\"ive Sort.

\subsubsection{Original Model}
Let
$S$ be a set on $n$ items, and
without loss of generality suppose that $S=[n]$. We have an infinite sequence $(\nu_t,\rho_t)_{t\geq 0}$ on pairs of permutations from $\SS_n^2$, where $\rho_t$ denotes the underlying total order of set $S$ after~$t$ steps, and $\nu_t$ denotes the maintained approximation to $\rho_t$.
Each step is either a sorting step or a mixing step.

A \emph{sorting step} $t$ consists of a single comparison between a pair of items in $S$, which returns their relative order in $\rho_{t-1}$, followed by arbitrary modifications to~$\nu_{t-1}$, which yield $\nu_{t}$.
The pair to be compared and the subsequent modifications to $\nu_{t-1}$ are decided by the sorting algorithm.
Also, $\rho_t = \rho_{t-1}$.

A \emph{mixing step} $t$ consists of a single \emph{random adjacent rank swap}, which samples a uniformly random index $i\in [n-1]$, and swaps $\rho_{t-1}(i)$ and $\rho_{t-1}(i+1)$, i.e., $\rho_t = \swap(\rho_{t-1},i,i+1)$.
Also, $\nu_t = \nu_{t-1}$.

The order of mixing and sorting steps is such that each sorting step is followed by exactly $b\geq 1$ mixing steps, where $b$ is an integer parameter of the model that is assumed to be a constant (independent of $n$).

\subsubsection{Generalized Model}

We generalize two properties of the original model above.
First, we let a mixing step consist of a more general operation, called \emph{local rank perturbation}.

\begin{definition}[Local Rank Perturbation]
    \label{def:local-rank-perturbation}

    Let $\DD$ 
    be a distribution on the integers that has zero mean and 
    bounded moment generation function (MGF), i.e., if $D$ 
    is a random variable with distribution $\DD$, then $\Ex{D}=0$ and for some constants $\lambda, c>0$, $$\Ex{e^{\lambda\cdot |D|}} \leq c.$$
    We assume that $\DD$ is given as a parameter of the model, and is called \emph{perturbation distribution}.
    If step $t$ is a mixing step, 
    then a uniformly random integer $j$ is sampled from $[n]$, a random integer $s$ is sampled from distribution~$\DD$,
     and
    $\rho_{t}$ 
    is obtained by successively swapping $\rho_{t-1}(j)$ with $\rho_{t-1}(j+1), \rho_{t-1}(j+2),\ldots,\rho_{t-1}(\min\{j+s,n\})$ if $s>0$, or by successively swapping $\rho_{t-1}(j)$ with $\rho_{t-1}(j-1), \rho_{t-1}(j-2),\ldots,\rho_{t-1}(\max\{j+s,1\})$ if $s<0$.
    
\end{definition}

For $\DD$ being the uniform distribution over $\{-1,1\}$, we obtain a random adjacent rank swap.
While if $s$ is sampled from a geometric distribution with constant success probability and then its sign is flipped with probability $1/2$, we obtain the \emph{hot spot adversary} proposed and studied empirically in \cite{VialDEGJ18a}. 

The second generalization is that we allow the number of mixing steps between two sorting steps to vary, as long as the \emph{average} is bounded by a constant.
The formal description of this requirement is as follows. 

\begin{definition}[Bounded Average Rate of Mixing Steps]
    \label{def:bounded average rate if mixing steps}
    Let $b_i$, for $i\geq1$, be the number of mixing steps between the $i$-th and $(i+1)$-th sorting steps.
    Then,
    \begin{equation}
        \label{eq:bounded-rate-mix}
        \sum_{i=t+1}^{t+n} b_i \leq b\cdot n,\
        \text{ for all }
        t\geq0,
    \end{equation}
    where $b$ is a constant parameter of the model.
    We also assume that the sequence of $b_i$ is chosen independently of the randomness used in the sorting and evolution steps.\footnote{For the purposes of the analysis, we can assume it is fixed in advance, before the process starts.}    
\end{definition}

By letting $b_i = b$ for all $i$, we obtain the original model of \cite{AnagnostopoulosKMU11}. 
Our model also encompasses a wide class of settings, including those with $b_i$'s chosen independently from any reasonable distribution of bounded mean. %
Note that since the number of steps in the evolving sorting model is unbounded, it was necessary to restrict the range of the sum in \cref{eq:bounded-rate-mix} to involve a finite number of sorting steps. 
Our decision that the number of sorting steps be precisely $n$ was arbitrary, but we need that it is at most $O(n)$ in some parts of the analysis.

\subsubsection{Evolving Sorting with Na\"ive Sort}

For each $t\geq0$, the pair $\nu_t, \rho_t$ of permutations  gives rise to a permutation~$\pi_t\in \SS_n$, such that 
$\pi_t(i) = j$ if $\nu_t(i) = \rho_t(j)$.\footnote{Or, more concisely, $\pi_t=\rho_t^{-1}\circ \nu_t$.}
Note that the sorted case of $\nu_t = \rho_t$ corresponds to $\pi_t = \id_n$. 
Since 
$\mdev(\nu_t^{-1},\rho_t^{-1}) = \mdev(\pi_t)$ and $\dev(\nu_t^{-1},\rho_t^{-1}) = \dev(\pi_t)$, it suffices for our analysis to focus on the sequence of $\pi_t$ instead. 

For convenience, below we give the definition of sorting and mixing steps in terms of $\pi_t$, when Na\"ive Sort is used.
\begin{itemize}
    \item \emph{Sorting step}: 
    sample a uniformly random $i\in [n-1]$; if $\pi_{t-1}(i) > \pi_{t-1}(i+1)$ 
    then 
    $\pi_t = \swap(\pi_{t-1}, i,i+1)$,
    otherwise $\pi_t = \pi_{t-1}$.
    
    \item \emph{Mixing step}: 
    sample a uniformly random $i\in[n]$, and a random $s$ from distribution $\DD$; 
    if $s >0$, 
    then $\pi^{-1}_t$ is obtained by successively swapping $\pi^{-1}_{t-1}(i)$ with $\pi^{-1}_{t-1}(i+1),\pi^{-1}_{t-1}(i+2),\ldots,\pi^{-1}_{t-1}(\min\{i+s,n\})$;
    and if $s < 0$, $\pi^{-1}_t$ is obtained by successively swapping $\pi^{-1}_{t-1}(i)$ with $\pi^{-1}_{t-1}(i-1),\pi^{-1}_{t-1}(i-2),\ldots,\pi^{-1}_{t-1}(\max\{i+s,1\})$.\footnote{Equivalently, if $s >0$,
    $\pi_t$ is obtained by successively swapping item value $i$ in $\pi_{t-1}$ with item values $i+1,i+2,\ldots,\min\{i+s,n\}$;
    and if $s < 0$,  $\pi_t$ is obtained by successively swapping item value $i$ in $\pi_{t-1}$ with values $i-1,i-2,\ldots,\max\{i+s,1\}$.}    
\end{itemize}

By $(\mathcal{F}_t)_{t\geq 0}$ we denote the natural filtration of the underlying process, where $\mathcal{F}_t$ includes the random choices made in each step $t'\in \{1,\ldots, t\}$.

\subsection{Auxiliary Lists and Permutations}
\label{sec:auxiliary-process}

Next we describe some auxiliary components used in the proof. 

\subsubsection{Lists with Gaps and Target Permutations}

To facilitate the analysis of  $\pi_t$, we consider  a triple $l_t,\tau_t,\sigma_t$, where $l_t$ is a list and $\tau_t,\sigma_t$ are permutations from $\SS_n$.

List $l_t$ has size roughly~$nd$, for some constant integer $d > 1$, and it is obtained by inserting $\bot$ elements, called \emph{gaps}, to the list $(\pi_t(1), \pi_t(2), \ldots)$.
Each non-gap element $l_t[j]=i$ has a \emph{target position} indicated by permutation $\tau_t$, which is position $d\cdot \tau_t(i)$. 
For example, if $\tau_t = \id_n$ is the identity permutation, then the target position of each $l_t[j]\neq\bot$ is position $d\cdot l_t[j]$.
As we will see, sorting steps will tend to bring non-gap elements closer to their target position, and $l_t$ will be considered sorted if every element $l_t[j]\neq\bot$ is precisely at its target position, i.e., $d\cdot \tau_t(l_t[j]) = j$.
For example, if $\tau_t = \id_n$ then $l_t$ is sorted if $l_t[d\cdot i] = i$ for all $i\in[n]$ (and $l_t[j]=\bot$ for the remaining indices $j$).

In general, the placement of gaps in $l_t$ can be arbitrary, subject to a simple \emph{local optimality} requirement, which states that there should be no gap to the right (left) of an element if its target position is to its right (resp.\ left).
In other words, we cannot move any single element in $l_t$ closer to its target position without reordering non-gap elements.

To formally describe $l_t$, we will need the following definitions of a \emph{$d$-padding} list, and operation $\lopt$ that makes a list locally optimal.

\begin{definition}[$d$-Padding \& Local Optimimality]\label{def:padding}
    Let $\pi,\tau\in \SS_n$ be  permutations and $d> 1$ an integer. 
    A \emph{$d$-padding} of $\pi$ is a list $l$ of length $N = (n+1)\cdot d-1$ such that $l$ contains list $(\pi(1),\ldots,\pi(n))$ as a (not necessarily contiguous) subsequence, and the remaining $N - n$
    elements are equal to $\bot$.
    We will just say that $l$ is a $d$-padding, to denote that it is a $d$-padding of some permutation from $\SS_n$.
    
    We say that $l$ is \emph{locally optimal}
    w.r.t.\ permutation $\tau$, 
    if for all $j\in [N]$ with $l[j]\neq \bot$,
    \begin{align}
        & \big(
        j < d\cdot \tau(l[j])
        \implies
        l[j+1] \neq \bot
        \big) \notag \\
        & \qquad \ \land\ 
        \big(
        j > d\cdot \tau(l[j])
         \implies
        l[j-1] \neq \bot
        \big). \label{eq:loc-opt-padding}
    \end{align}
  
    We let $\lopt(l,\tau)$ be the list computed by the following iterative procedure: 
    As long as there is $j\in[N]$ with $l[j]\neq \bot$ that violates condition \cref{eq:loc-opt-padding}, consider the smallest such $j$, and if $j < d\cdot \tau(l[j])$ then
    swap elements $l[j]$ and $l[j+1]$, while if $j > d\cdot \tau(l[j])$ then
    swap elements $l[j-1]$ and $l[j]$.
    Once all $j\in[N]$ with $l[j]\neq \bot$ satisfy \cref{eq:loc-opt-padding}, output the resulting list.
    It is easy to verify that the procedure terminates, and outputs a locally optimal list w.r.t.\ $\tau$.
\end{definition}

We will use the following notions of maximum and total displacement to measure the distance of list $l_t$ from its sorted state as indicated by the target permutation $\tau_t$.

\begin{definition}[Displacements]
    \label{def:list-permutation-distances}
    For a $d$-padding $l$ 
    and a permutation $\tau\in\SS_n$, the \emph{maximum displacement} of~$l$ w.r.t.\ $\tau$ is
    \[
        \mdsp(l,\tau)
        =
        \max_{j\in [N]\colon l[j]\neq \bot} |j-d\cdot \tau(l[j])|,
    \]
    and the \emph{total displacement} of $l$ w.r.t.\ $\tau$ is
    \[
        \dsp(l,\tau) = \sum_{j\in [N]\colon l[j]\neq \bot} |j-d\cdot \tau(l[j])|
        .
    \]
    In the above notation, we may omit $\tau$ if $\tau = \id_n$. %
\end{definition}

We will use the following relations between the displacement of a $d$-padding of a permutation, and the deviation of the permutation.
The proof is given in Appendix~\ref{apndx}.

\begin{restatable}{lemma}{DevDspLem}
    \label{lem:dsp_bounds_dev}
    If $l$ is a $d$-padding of $\pi\in \SS_n$, 
    then 
    $
     \mdev(\pi) \leq \frac{2}{d} \cdot \mdsp(l)
    $
    and
    $
      \dev(\pi) \leq \dsp(l).
    $
\end{restatable}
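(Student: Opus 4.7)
The plan is to prove both inequalities by elementary means, treating the given $d$-padding $l$ as fixed and writing $j_1 < j_2 < \cdots < j_n$ for the positions of its non-gap entries, so that $l[j_i] = \pi(i)$ and $\dsp(l) = \sum_i |j_i - d\pi(i)|$. The key quantitative fact I would establish first, and use in both parts, is the localization bound $|j_v - dv| \leq \mdsp(l)$ for every $v \in [n]$.

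To prove this localization I would argue by counting. Since each of the $v$ values $u \in \{1, \ldots, v\}$ sits at a position $j_{\pi^{-1}(u)} \leq du + \mdsp(l) \leq dv + \mdsp(l)$, there are at least $v$ non-gap positions in $[1, dv + \mdsp(l)]$, so the $v$-th smallest, $j_v$, satisfies $j_v \leq dv + \mdsp(l)$. A symmetric count over the $n - v + 1$ values $u \geq v$, each placed at a position $\geq du - \mdsp(l) \geq dv - \mdsp(l)$, shows that at most $v - 1$ non-gap positions are strictly less than $dv - \mdsp(l)$, hence $j_v \geq dv - \mdsp(l)$. Given this, the maximum-deviation bound follows in one line: for any $i$, the triangle inequality gives
\[
  d\,|i - \pi(i)| = |di - d\pi(i)| \leq |di - j_i| + |j_i - d\pi(i)| \leq \mdsp(l) + \mdsp(l) = 2\,\mdsp(l),
\]
and dividing by $d$ yields $\mdev(\pi) \leq (2/d)\,\mdsp(l)$.

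For the total-deviation bound I would combine the rearrangement inequality with a summed triangle inequality. Since both $(j_1, \ldots, j_n)$ and $(d, 2d, \ldots, dn)$ are sorted in increasing order, the rearrangement inequality says that the identity matching minimizes the $L_1$-cost, so
\[
  \sum_{i=1}^n |j_i - di| \;\leq\; \sum_{i=1}^n |j_i - d\pi(i)| \;=\; \dsp(l).
\]
Applying the triangle inequality term-by-term and summing produces
\[
  d \cdot \dev(\pi) = \sum_i |di - d\pi(i)| \;\leq\; \sum_i |di - j_i| + \sum_i |j_i - d\pi(i)| \;\leq\; 2\,\dsp(l),
\]
and since the standing assumption on $d$-paddings gives $d \geq 2$, we conclude $\dev(\pi) \leq (2/d)\,\dsp(l) \leq \dsp(l)$.

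The step that deserves the most care, and which I regard as the main technical point, is the counting proof of $|j_v - dv| \leq \mdsp(l)$: it is essential that the displacement bound propagates simultaneously to the entire prefix $\{1, \ldots, v\}$ (or suffix $\{v, \ldots, n\}$) of values, so that the $v$-th order statistic $j_v$ can be pinned to a window of width $2\,\mdsp(l)$ around $dv$ even though $l[j_v] = \pi(v)$ need not equal $v$. Everything else is routine triangle- and rearrangement-bookkeeping.
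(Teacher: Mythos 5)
Your proof is correct, and it takes a genuinely different route from the paper's. The heart of your argument is the localization bound $|j_v - dv| \le \mdsp(l)$ on the order statistics of the non-gap positions, derived by a clean counting argument, and then both conclusions follow by the triangle inequality (with the rearrangement inequality $\sum_i |j_i - di| \le \sum_i |j_i - d\pi(i)|$ thrown in for the total-deviation half, which is justified by the exchange lemma the paper records as Lemma~\ref{lem:admissible_swaps_are_good}~$(i)$). The paper, by contrast, proves the maximum-deviation bound by a two-case analysis that introduces the auxiliary value $i' = \min\{\pi(k) : i \le k \le n\}$ and exploits that the relative order of non-gap entries in $l$ matches that in $\pi$; and it proves the total-deviation bound by a ``snap-to-grid'' redistribution argument: move every item to a nearest position in $T = \{jd : j \in [n]\}$, show that ignoring the $n(d-1)$ non-target positions saves at least $2\sum_i |s_i - i|(d-1)$, that re-spreading one item per target costs at most $2\sum_i |s_i - i|$, and that the balance is nonnegative since $d > 1$. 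Your approach is more unified (a single lemma drives both halves), arguably more elementary, and actually yields the slightly stronger conclusion $\dev(\pi) \le \tfrac{2}{d}\,\dsp(l)$, whereas the paper's redistribution argument targets $\dev(\pi) \le \dsp(l)$ directly. The only caveat worth noting is that the implication $\tfrac{2}{d}\,\dsp(l) \le \dsp(l)$ relies on $d \ge 2$, which is indeed guaranteed since Definition~\ref{def:padding} requires $d$ to be an integer strictly greater than $1$.
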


The role of target permutations $\tau_t$ and $\sigma_t$ is to facilitate analyzing independently sorting from mixing steps.
Both evolve in a similar manner, so we focus on $\tau_t$ for now.
Roughly speaking, $\tau_t$ keeps track of all mixing steps, swapping $\tau_t(i)$ and $\tau_t(i+1)$ whenever $\pi^{-1}_t(i)$ and $\pi^{-1}_t(i+1)$ are swapped during a mixing step.
In addition, at the end of each step we may need to perform some additional swaps to $\tau_t$ to ensure that a basic \emph{admissibility condition} is satisfied.
This condition states that if $\pi_t(i)$ and $\pi_t(i+1)$ are sorted in $\pi_t$ (and thus also in $l_t$), then $\tau_t(\pi_t(i))$ and $\tau_t(\pi_t(i+1))$ must also be sorted in $\tau_t$.
This is a natural invariant to maintain, since if it is violated for some $i$, then simply swapping $\tau_t(\pi_t(i))$ and $\tau_t(\pi_t(i+1))$ in  $\tau_t$ brings $\tau_t$ closer to the identity permutation, which can only help our analysis.
We formally define admissibility, and a procedure to make a permutation admissible, as follows.

\begin{definition}[Admissibility]
    \label{def:admissible}
    For permutations $\tau,\pi \in \SS_n$, we say that $\tau$ is \emph{admissible} w.r.t.\ $\pi$ if for all $i\in[n-1]$,
    \begin{equation}
        \label{eq:admissible}
        \pi(i) < \pi(i+1) \implies \tau(\pi(i)) < \tau(\pi(i+1)).
    \end{equation}

    We let $\adm(\tau,\pi)$ be the permutation computed by the following iterative procedure: 
    As long as there is $i\in[n-1]$ that violates condition \cref{eq:admissible}, consider the smallest such $i$, and replace $\tau$ by $\swap(\tau, \pi(i),\pi(i+1))$.
    Once all $i\in[n-1]$ satisfy \cref{eq:admissible}, output the resulting permutation.
    It is easy to verify that the procedure terminates, and outputs an admissible permutation w.r.t.\ $\pi$.
\end{definition}

We are now ready to formally define the sequences of $l_t$, $\tau_t$, and $\sigma_t$. 
We will not yet discuss their initialization (for $t=0$).
In fact, our proof strategy is to occasionally intervene, e.g., to reset (or partly reset) the target distributions to the identity permutation, as described in \cref{sec:total_deviation}.
Sequences $\tau_t$ and $\sigma_t$ differ only on how their values are modified in those interventions.

\begin{definition} [Auxiliary Quantities $l_t,\tau_t,\sigma_t$]
    \label{def:l-tau-sigma}
    Let $d> 1$ be an integer.
    Let $\tau_0$ and $\sigma_0$ be permutations in $\SS_n$ that are admissible w.r.t.\ $\pi_0$,
    and let $l_0$ be a $d$-padding of $\pi_0$ that is locally optimal w.r.t.\ $\tau_0$. 
    For $t\geq 1$, let $\tau'_t$ and $\sigma'_t$ be permutations defined as follows:
    \begin{itemize}
        \item if step $t$ is a mixing step (and $i,s$ are the values sampled at that step), 
        then
        $\tau'_t$ is obtained from $\tau_{t-1}$ by successively swapping $\tau_{t-1}(i)$ with $\tau_{t-1}(i+1),\tau_{t-1}(i+2),\ldots,\tau_{t-1}(\min\{i+s,n\})$ if $s>0$, or with
        $\tau_{t-1}(i-1),
        \ldots,\tau_{t-1}(\max\{i+s,1\})$ if $s<0$;
        and $\sigma'_t$ is obtained identically from $\sigma_{t-1}$;
        
        \item
        if $t$ is a sorting step, then
        $\tau'_t = \tau_{t-1}$ and
        $\sigma'_t = \sigma_{t-1}$; 
    \end{itemize}
    Then,
    \begin{itemize}
        \item $\tau_t = \adm(\tau'_t,\pi_t)$ and 
        $\sigma_t = \adm(\sigma'_t,\pi_t)$; 
        \item $l_t =\lopt(l'_t,\tau_t)$, where $l'_t$ is a $d$-padding of $\pi_t$ with the same gaps as $l_{t-1}$.
    \end{itemize}
\end{definition}

\subsubsection{Potential Functions}

Introducing $l_t$, $\tau_t$, and $\sigma_t$ makes it possible to analyze independently the sorting steps from the mixing steps. 
Our analysis uses exponential potential functions to study the displacement of~$l_t$ w.r.t.\ $\tau_t$, and independently the deviations of $\tau_t$ and~$\sigma_t$. 
Using $l_t$ instead of $\pi_t$ is critical, as our potential functions do not work when applied to $\pi_t$ directly.
The conditions of local optimality  and admissibility are also critical, in ensuring that the potential functions are well-behaved.
Below, we define the two main potential functions we use.

\begin{definition}[Potential Functions]
    \label{def:potential-functions}   
    For any $t\geq 0$ and $j\in [N]$ such that $l_t[j]\neq \bot$, and for any $\alpha>0$,
    let
    \[
        \phi_t(j,\alpha) = 
        e^{\alpha\cdot|j - d\cdot \tau_t(l_t[j])|},
    \] 
    and
    \[
        \Phi_t (\alpha)
        = 
        \sum_{j\in [N]\colon l_t[j]\neq \bot} 
        (\phi_t(j,\alpha) - 1).
    \]
    Also, for any $i\in [n]$, let
    \[
        \psi_t(i,\alpha) = 
        e^{\alpha\cdot|\sigma_t(i) - i|},
    \]  
    and
    \[
        \Psi_t (\alpha)
        = 
        \sum_{i\in [n]} 
        \psi_t(i,\alpha)
        .
    \]
    We will generally omit parameter $\alpha$ from the above notation; the dependency on $\alpha$ will be made explicit only when needed or relevant.
\end{definition}

\subsubsection{Initialization and Interventions}

We discuss now the initialization and interventions to the auxiliary sequences $l_t,\tau_t,\sigma_t$.
Our analysis divides an execution into \emph{phases} of consecutive steps, and at the beginning of each phase we \emph{reset} the auxiliary sequences.
Specifically, in the analysis for the maximum displacement, 
at the beginning of a phase starting after step $t$, we reset $\tau_t$ and $\sigma_t$ to the identity permutation $\id_n$, and replace $l_t$ by  $\lopt(l_t,\id_n)$.
(Note that $\tau_t = \sigma_t$ throughout each phase; in fact just $\tau_t$ is used in the analysis of maximum displacement.)

In the analysis of total displacement, 
at the beginning of each phase
just~$\sigma_t$ is reset to $\id_n$, while~$\tau_t$ changes to $\hat\tau_t$ by a more refined transformation;
and $l_t$ is replaced by $\lopt(l_t,\hat\tau_t)$.
Ideally, we would like to obtain $\hat\tau_t$ from $\tau_t$ by resetting to zero all deviations $|\tau_t(i) - i|$ that are below a certain threshold $\theta$, in a way that does not increase any deviation that is above that threshold.
That way, we can ensure that $\dev(\hat\tau_t)$ is bounded by the sum of the deviations above the threshold, and that $\mdev(\hat\tau_t,\tau_t)$ is at most $\theta$.  
The actual transformation, which we call \emph{$\theta$-filtering}, is a bit different:
it keeps unchanged the elements of $\tau_t$ with deviation above $\theta$, and sorts the remaining ones.
However, it still achieves (within constant factors) the desired bounds on $\dev(\hat\tau_t)$ and $\mdev(\hat\tau_t,\tau_t)$ mentioned above, as we shown in \cref{lem:theta-filtering}.  

\begin{definition}[$\theta$-Filtering]
    \label{def:theta-filtering}
    For $\theta\geq0$, the \emph{$\theta$-filtering} 
    of permutation $\tau\in \SS_n$ is the permutation $\hat\tau\in \SS_n$ defined as follows. %
    Let $I = \{i\in[n]\colon |\tau(i) - i| > \theta\}$.
    Then,
    \begin{itemize}
        \item $\hat\tau(i) = \tau(i)$ for all $i\in I$, and
        \item $\hat\tau(i) < \hat\tau(j)$ for all $i,j\in[n]\setminus I$ with $i<j$.
    \end{itemize}
\end{definition}

\begin{restatable}{lemma}{ThetaFiltering}
\label{lem:theta-filtering}
    If $\hat\tau$ is the $\theta$-filtering of $\tau\in \SS_n$,
    then
    \[
        \mdev(\hat\tau,\tau)\leq 2\theta, 
    \]
    and
    \[
        \dev(\hat\tau)
        \leq 
        4 \sum_{i: | \tau(i) - i | > \theta} | \tau(i) - i |
        .
    \]
\end{restatable}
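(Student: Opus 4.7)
The plan is to set $I = \{i \in [n] : |\tau(i)-i| > \theta\}$ and $J = [n]\setminus I$, and exploit the fact that $\hat\tau$ agrees with $\tau$ on $I$ while $\hat\tau|_J$ is the unique order-preserving bijection from $J$ onto $[n]\setminus\tau(I)$ (note in particular that $\tau|_J$ is also a bijection $J\to[n]\setminus\tau(I)$, so both $\tau$ and $\hat\tau$ permute the same two sets between $J$ and $[n]\setminus\tau(I)$).

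For the maximum-deviation bound, the only nontrivial case is $i\in J$, and my plan is to first establish the intermediate inequality $|\hat\tau(i)-i|\leq \theta$ and then conclude by the triangle inequality $|\hat\tau(i)-\tau(i)|\leq |\hat\tau(i)-i|+|i-\tau(i)|\leq 2\theta$. To prove the intermediate inequality, I would enumerate $J=\{i_1<i_2<\cdots\}$ and $[n]\setminus\tau(I)=\{j_1<j_2<\cdots\}$, so that $\hat\tau(i_k)=j_k$. Because $\tau|_J$ is a bijection with $|\tau(i_\ell)-i_\ell|\leq\theta$ for every $\ell$, the $k$ distinct values $\tau(i_1),\ldots,\tau(i_k)$ all lie in $[1,i_k+\theta]$, forcing $j_k\leq i_k+\theta$; a symmetric count of images lying in $[i_k-\theta,n]$ gives $j_k\geq i_k-\theta$.

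For the total-deviation bound, I would use the standard identity $\dev(\pi)=2\sum_{k=1}^{n-1}c(k,\pi)$ where $c(k,\pi)=|\{i\leq k:\pi(i)>k\}|$, and split $c(k,\hat\tau)$ according to whether $i\in I$ or $i\in J$. The $I$-contribution is $|\{i\in I:i\leq k,\ \tau(i)>k\}|$. For the $J$-contribution, the order-preservation of $\hat\tau|_J$ lets me evaluate it exactly: the count equals $\max(0,\delta(k))$, where $\delta(k)=|\tau(I)\cap[1,k]|-|I\cap[1,k]|$, and a small rewriting shows $\delta(k)=|\{i\in I:i>k,\ \tau(i)\leq k\}|-|\{i\in I:i\leq k,\ \tau(i)>k\}|$. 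Adding the two contributions and doing a case split on the sign of $\delta(k)$ yields $c(k,\hat\tau)=\max\bigl(|\{i\in I:i\leq k,\ \tau(i)>k\}|,\ |\{i\in I:i>k,\ \tau(i)\leq k\}|\bigr)$, which is at most the number of $i\in I$ whose pair $\{i,\tau(i)\}$ straddles $k$. Summing over $k$ and swapping the order of summation gives $\sum_k c(k,\hat\tau)\leq\sum_{i\in I}|\tau(i)-i|$, so $\dev(\hat\tau)\leq 2\sum_{i\in I}|\tau(i)-i|$, which is well within the claimed factor of~$4$.

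I expect the main obstacle to be the $J$-part of the decomposition in the total-deviation argument: one has to translate the order-preservation of $\hat\tau|_J$ into the clean formula $\max(0,\delta(k))$ and then reinterpret $\delta(k)$ combinatorially as the signed crossing count of $k$ by the edges $\{i,\tau(i)\}$ with $i\in I$. Everything else---the triangle-inequality argument for the maximum deviation, the ranking counts inside $J$, and the final double-sum swap---should be essentially routine.
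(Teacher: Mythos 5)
Your proposal is correct, and both halves take a genuinely different route from the paper's. For the maximum-deviation bound, the paper constructs $\hat\tau$ from $\tau$ by iteratively swapping inversions inside $J$ and invokes an auxiliary lemma (Lemma~\ref{lem:admissible_swaps_are_good}~$(ii)$) to show each swap cannot increase $\max\{|\tau(i)-i|,|\tau(j)-j|\}$; your direct counting argument ($k$ distinct images of $J$ in $[1,i_k+\theta]$ force $j_k\leq i_k+\theta$, and symmetrically) proves $|\hat\tau(i)-i|\leq\theta$ on $J$ without any appeal to that lemma, which is a modest but genuine simplification. For the total-deviation bound, the paper reasons about the Kendall-tau distance $K(\hat\tau)$ directly by counting, for each $i\in I$, the inversions $(i,j)$ and $(j,i)$ with $|\hat\tau(j)-j|\leq|\hat\tau(i)-i|$, and then uses $\dev\leq 2K$; you instead invoke the ``crossing'' identity $\dev(\pi)=2\sum_k |\{i\leq k:\pi(i)>k\}|$, split each count by $I$ versus $J$, and use the order-preservation of $\hat\tau|_J$ to collapse the $J$-term to $\max(0,\delta(k))$ and finally get $c(k,\hat\tau)=\max(A(k),B(k))\leq A(k)+B(k)$. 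That crossing-count approach is, if anything, cleaner and yields the tighter bound $\dev(\hat\tau)\leq 2\sum_{i\in I}|\tau(i)-i|$, improving the paper's constant from $4$ to $2$. Both approaches are valid; yours is somewhat more self-contained since it bypasses the auxiliary lemma entirely.
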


\begin{proof}
    We can obtain $\hat\tau$ from $\tau$ using the following iterative procedure: 
    As long as there are indices $i,j\in[n]\setminus I$ such that $i < j$ and $\tau(i) > \tau(j)$, i.e., pair $(\tau(i),\tau(j))$ is an inversion, replace $\tau$ by $\swap(\tau,i,j)$.
    By observing that each step does not increase $\max\{|\tau(i)-i|,\,|\tau(j)-j|\}$ (which is at most $\theta$), it follows by \cref{lem:admissible_swaps_are_good}~$(ii)$ that $|\hat\tau(i)-i|\leq \theta$ for all $i\in [n]\setminus I$.
    Therefore, for every $i \in [n] \setminus I$,
    \[
        |\hat\tau(i)-\tau(i)|
        \leq
        |\hat\tau(i)-i| + |\tau(i)-i|
        \leq
        2\theta.
    \]
    And since $\hat\tau(i) = \tau(i)$ for $i\in I$, it follows $\mdev(\hat\tau,\tau)\leq 2\theta$.

    To bound the total deviation of $\hat\tau$, we first bound its Kendall-tau distance from $\id_n$, $K(\hat\tau)$.
    Let $P = \{(i,j)\colon i<j,\, \hat\tau(i) > \hat\tau(j)\}$ be the set of index pairs for which we have an inversion; clearly $K(\hat\tau) = |P|$.
    For each $(i,j) \in P$, we have that at least one of $i,j$ is in $I$, by the definition of $\hat\tau$.

    Let $i\in I$, and suppose that $\hat\tau(i) > i$ (the case $\hat\tau(i) < i$ is symmetric). 
    We bound the number of pairs $(j,i),(i,j)\in P$ such that $|\hat\tau(j)-j| \leq |\hat\tau(i)-i|$.
    Then, summing over all $i\in I$ will bound~$|P|$.  
    Note that  $(j,i)\notin P$ for all $j$ such that $|\hat\tau(j)-j| \leq |\hat\tau(i)-i|$, because $j < i$ implies 
    \[
        \hat\tau(j) \leq j + |\hat\tau(j)-j| < i + |\hat\tau(i) - i| = \hat\tau(i). 
    \]
    On the other hand, there are at most $2\cdot |\hat\tau(i) - i|$ pairs $(i,j)\in P$ with $|\hat\tau(j)-j| \leq |\hat\tau(i)-i|$, since for $j > \hat\tau(i)+|\hat\tau(i) - i|$,
    \[
        \hat\tau(j) \geq j - |\hat\tau(j)-j| > \hat\tau(i).  
    \]

    It follows that 
    \[
        |P| \leq \sum_{i\in I} (2\cdot |\hat\tau(i) - i|)
        =
        2\sum_{i\in I} |\tau(i) - i|.
    \]
    Finally, from \cref{eq:Kvsdis},
    \[ 
        \dev(\hat\tau) \leq 2K(\hat\tau) = 2|P| \leq 4\sum_{i\in I} |\tau(i) - i|.
        \qedhere
    \]
\end{proof}

\subsection{Blocks}

For the analysis of the displacement of list $l_t$, we decompose the list into sorted \emph{blocks}, defined as follows (see \cref{fig:left_and_right_block_examples} for an illustration).

\begin{definition}[Blocks]
    \label{def:block}
    Let $\pi,\tau \in \SS_n$ be permutations such that $\tau$ is admissible w.r.t.\ $\pi$, and let~$l$ be a $d$-padding of $\pi$ that is locally optimal w.r.t.\ $\tau$. 
    A \emph{right block} of list~$l$ is a contiguous sublist $l[j..k]$
    satisfying the following properties:
    \begin{enumerate}
        \item (absence of gaps) 
        $l[i]\neq\bot$, for all $j\leq i\leq k$, 
        \item (target monotonicity) $\tau(l[j]) < \tau(l[j+1])<\dots< \tau(l[k])$,
        \item (targets to the right) $d \cdot \tau(l[j]) > j$,
        \item (maximality) decreasing $j$'s value or increasing $k$'s violates one of the previous three properties.
    \end{enumerate}
    The \emph{head} of right block $l[j..k]$ is its rightmost element~$l[k]$, and its \emph{tail} is the leftmost element $l[j]$.
    We define a \emph{left block} $l[j.. k]$ similarly, by replacing property~$3)$ by 
    \begin{itemize}
        \item [$3')$] (targets to the left)
        $d\cdot \tau(l[k]) < k$.
    \end{itemize}
    The \emph{head} of left block $l[j..k]$ is $l[j]$ and its \emph{tail} is $l[k]$.
    
    A \emph{stationary block} is a single element $l[i]$ for which $d\cdot\tau(l[i]) = i$. 
    We refer to elements belonging to stationary blocks as \emph{stationary elements}.
\end{definition}

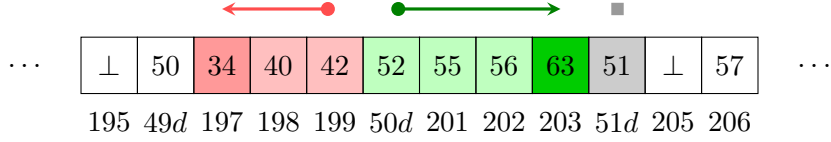
\begin{figure}[t]
\centering
\scalebox{0.68}{
\begin{tikzpicture}
\newcommand{\Side}{0.75}

\fill[color=red!40!white] (2 * \Side,0) rectangle ++(\Side,\Side);
\fill[color=red!25!white] (3 * \Side,0) rectangle ++(\Side,\Side);
\fill[color=red!25!white] (4 * \Side,0) rectangle ++(\Side,\Side);
\fill[color=green!25!white] (5 * \Side,0) rectangle ++(\Side,\Side);
\fill[color=green!25!white] (6 * \Side,0) rectangle ++(\Side,\Side);
\fill[color=green!25!white] (7 * \Side,0) rectangle ++(\Side,\Side);
\fill[color=green!80!black] (8 * \Side,0) rectangle ++(\Side,\Side);
\fill[color=white!80!black] (9 * \Side,0) rectangle ++(\Side,\Side);

\foreach \x in {0,...,11} 
  \draw[draw=black] (\x * \Side,0) rectangle ++(\Side,\Side);

\foreach \x in {195,...,206} 
  \node at (-0.5 * \Side + \x * \Side - 194 * \Side, -0.5 * \Side) {$\x$};

\node[fill=white] at (-0.5 * \Side + 2 * \Side, -0.5 * \Side) {$49d$};
\node[fill=white] at (-0.5 * \Side + 6 * \Side, -0.5 * \Side) {$50d$};
\node[fill=white] at (-0.5 * \Side + 10 * \Side, -0.5 * \Side) {$51d$};

\node at (-1 * \Side, 0.5 * \Side) {$\ldots$};
\node at (0.5 * \Side, 0.5 * \Side) {$\bot$};
\node at (1.5 * \Side, 0.5 * \Side) {$50$};
\node at (2.5 * \Side, 0.5 * \Side) {$34$};
\node at (3.5 * \Side, 0.5 * \Side) {$40$};
\node at (4.5 * \Side, 0.5 * \Side) {$42$};
\node at (5.5 * \Side, 0.5 * \Side) {$52$};
\node at (6.5 * \Side, 0.5 * \Side) {$55$};
\node at (7.5 * \Side, 0.5 * \Side) {$56$};
\node at (8.5 * \Side, 0.5 * \Side) {$63$};
\node at (9.5 * \Side, 0.5 * \Side) {$51$};
\node at (10.5 * \Side, 0.5 * \Side) {$\bot$};
\node at (11.5 * \Side, 0.5 * \Side) {$57$};
\node at (13 * \Side, 0.5 * \Side) {$\ldots$};

\draw[{Circle[scale=0.85]}-stealth,black!50!green,very thick] (5.5 * \Side, 1.5 * \Side) -- (8.5 * \Side, 1.5 * \Side);

\draw[stealth-{Circle[scale=0.85]},very thick,white!30!red] (2.5 * \Side, 1.5 * \Side) -- (4.5 * \Side, 1.5 * \Side);

\draw[{Square[scale=0.85]}-,white!60!black,very thick] (9.4 * \Side, 1.5 * \Side) -- (9.5 * \Side, 1.5 * \Side);

\end{tikzpicture}}
    \caption{An example of a left block (shown in \textcolor{red}{red}), a right block (shown in \textcolor{black!50!green}{green}), and a stationary block (shown in \textcolor{white!40!black}{gray}) of list $l$, where $d = 4$ and  $\tau = \id_n$.
    The head of the left block is denoted by darker red and the head of the right block with darker green.}
    \label{fig:left_and_right_block_examples}
\end{figure}

In the next lemma, we show the crucial property that
the contribution to the potential function of the head of a block is at least a constant factor of the contribution of the entire block. 
We will make use of this property in proving the drop inequality in \cref{lem:new_sorting_step_drift} for the $\Phi$ potential. Intuitively, the larger the number $d-1$ of gaps between two adjacent targets, the more dominant the contribution of the head compared to that of the entire block.

\begin{restatable}{lemma}{HeadPotential}
\label{lem:potential_of_header_dominates}
Consider an arbitrary list $l$ and the potential function $\Phi_t = \Phi_t(\alpha)$ on this list for any real $\alpha > 0$ and any integer $d > 1$. For every step $t \geq 0$ and right block $l_t[i..j]$, it holds that
\begin{align*}
\sum_{k = i}^j \phi_t(k) 
 & \leq \phi_t(j) \cdot \frac{1}{1 - e^{-\alpha \cdot (d-1)}}.
\end{align*}
Similarly, for every left block $l_t[i..j]$, it holds that
\begin{align*}
\sum_{k = i}^j \phi_t(k) 
 & \leq \phi_t(i) \cdot \frac{1}{1 - e^{-\alpha \cdot (d-1)}}.
\end{align*}
\end{restatable}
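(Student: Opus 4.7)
The plan is to prove the right-block bound first; the left-block case then follows by symmetry. The key structural fact I would extract from the block definition is that \emph{every} element in a right block has its target strictly to its right (not only the tail), and moreover the displacement decays by at least $d-1$ units per index as one moves left from the head.

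More precisely, for a right block $l_t[i..j]$ with head $l_t[j]$, I would first show that for every $i \leq k \leq j$,
\[
    d\cdot \tau_t(l_t[k]) - k \;\geq\; 0.
\]
This follows by combining property 3 of \cref{def:block} (the tail satisfies $d\cdot\tau_t(l_t[i]) > i$, so $d\cdot\tau_t(l_t[i]) \geq i+1$) with target monotonicity: since $\tau_t(l_t[i]),\dots,\tau_t(l_t[j])$ are strictly increasing integers, $\tau_t(l_t[k]) \geq \tau_t(l_t[i]) + (k-i)$, giving $d\cdot\tau_t(l_t[k]) \geq i+1+d(k-i) > k$ whenever $d \geq 1$. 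Hence all absolute values in the potential can be dropped, and $\phi_t(k) = e^{\alpha(d\cdot\tau_t(l_t[k]) - k)}$.

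Next I would compare $\phi_t(k)$ to $\phi_t(j)$. Using target monotonicity again in the other direction, $\tau_t(l_t[k]) \leq \tau_t(l_t[j]) - (j-k)$, so
\[
    d\cdot\tau_t(l_t[k]) - k \;\leq\; \bigl(d\cdot\tau_t(l_t[j]) - j\bigr) - (d-1)(j-k).
\]
Exponentiating yields $\phi_t(k) \leq \phi_t(j) \cdot e^{-\alpha(d-1)(j-k)}$. Summing a geometric series,
\[
    \sum_{k=i}^{j} \phi_t(k) \;\leq\; \phi_t(j) \sum_{m=0}^{j-i} e^{-\alpha(d-1)m} \;\leq\; \phi_t(j) \cdot \frac{1}{1 - e^{-\alpha(d-1)}}.
\]

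For a left block $l_t[i..j]$, the argument is entirely analogous with roles reversed: property $3')$ combined with target monotonicity forces $d\cdot\tau_t(l_t[k]) - k \leq 0$ throughout, the head is now $l_t[i]$, and one shows $\phi_t(k) \leq \phi_t(i)\cdot e^{-\alpha(d-1)(k-i)}$ before summing. I do not anticipate any real obstacle here; the only mild subtlety is confirming that property 3 (which is stated only for the tail) in fact propagates to every element of the block via target monotonicity, which is where the $d > 1$ hypothesis becomes essential to keep the geometric series convergent.
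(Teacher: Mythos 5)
Your proof is correct and follows the same route as the paper's: use target monotonicity to bound $d\cdot\tau_t(l_t[k]) - k \leq \bigl(d\cdot\tau_t(l_t[j]) - j\bigr) - (d-1)(j-k)$, then sum the geometric series. The only difference is that you explicitly verify $d\cdot\tau_t(l_t[k]) - k \geq 0$ for all block elements (so that the absolute values in $\phi_t$ can be dropped), a step the paper's proof leaves implicit when it writes $\phi_t(k) = e^{\alpha(d\cdot\tau_t(l_t[k]) - k)}$; this is a small but worthwhile addition in rigor, not a different argument.
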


\begin{proof}
Consider a right block $l_t[i..j]$ and any index $k \in [i, j)$. Then, we have that $k < j$ (by definition), $\tau_t(l_t[k]) < \tau_t(l_t[j])$ (by definition of a right block) and $\tau_t(l_t[j]) - \tau_t(l_t[k]) \geq j - k$. Therefore, we can upper bound the displacement of $k$ as follows
\begin{align*}
d \cdot \tau_t(l_t[k]) - k 
 & \leq d \cdot (\tau_t(l_t[j]) - j + k) - k \notag \\
 & = d \cdot \tau_t(l_t[j]) - j - (d-1) \cdot (j - k) \label{eq:upper_bound_displacement}.
\end{align*}
Hence, the contribution of the block is given by
\begin{align*}
\sum_{k = i}^j \phi_t(k)
 & = \sum_{k = i}^j e^{\alpha \cdot (d \cdot \tau_t(l_t[k]) - k)} \\
 & %
 \leq\sum_{k = i}^j e^{\alpha \cdot (d \cdot \tau(l_t[j]) - j - (d-1) \cdot (j - k))} \\
 & = \phi_t(j) \cdot \sum_{k = i}^j e^{- \alpha \cdot (d-1)  \cdot (j - k) } \\
 & = \phi_t(j) \cdot \frac{1 - e^{-\alpha \cdot (d-1) \cdot (j - i + 1)}}{1 - e^{-\alpha \cdot (d-1)}} \\
 & \leq \phi_t(j) \cdot \frac{1}{1 - e^{-\alpha \cdot (d-1)}}.
\end{align*}
By a symmetric argument and bounding instead the displacement $k - d\cdot \tau_t(l_t[k])$, we also get the claim for the left block.
\end{proof}

Some of our proofs in the sections to come hold only for sufficiently large values of $n$.

\section{Tight Bounds on the Maximum Deviation} \label{sec:max_deviation}

In this section, we give a bound of $O(b \cdot \log n)$ for the maximum displacement. In  \cref{sec:lower_bounds} (\cref{lem:max_and_total_deviation_lb}), we prove an  asymptotically matching lower bound for the process in~\cite{AnagnostopoulosKMU11} with any $b \in \big[1, \frac{n}{\log^2 n}\big]$. In \cref{sec:mdev_phi_potential}, we will prove a drift inequality for the $\Phi$ potential, establishing that in sorting steps it drops in expectation by a multiplicative factor of $1 - \Omega(\alpha/n)$. In \cref{sec:mdev_psi_potential}, we will analyze the evolution of $\sigma$ as a result of mixing steps, this time by analyzing the $\Psi$ potential. Finally, in \cref{sec:mdev_main_theorem}, we combine the two analyses to deduce the bound on the maximum deviation.

\subsection{Analysis of the \texorpdfstring{$\Phi$}{Φ} Potential} \label{sec:mdev_phi_potential}

Our main tool will be the $\Phi$ potential, which for a smoothing parameter $\alpha$ and padding length of $d-1$ was defined as
\begin{align*}
  \Phi_t 
   = \sum_{j \in [N] : l_t[j] \neq \bot } (\phi_t(j) - 1)  
   = \sum_{j \in [N] : l_t[j] \neq \bot } (e^{\alpha \cdot |j - d \cdot \tau_t(l_t[j])|} - 1).
\end{align*}

Our goal is to prove the following drift inequality for~$\Phi$ and establish that it is non-increasing.

\begin{lemma}[Drift Inequality] \label{lem:new_sorting_step_drift}
Consider the auxiliary process $(l_t,\tau_t,\sigma_t)_{t\geq 0}$ and the potential function $\Phi_t = \Phi_t(\alpha)$ with any integer $d > 1$ and any real $\alpha \in \big[ \frac{\log 20}{d-1}, 1 \big]$. Further, let $\mathcal{G}_{t+1}$ be the event that in step~$t+1$ we do sorting. Then, for any step $t \geq 0$, it holds that~$\Phi_{t+1} \leq \Phi_t$ and
\[
 \Ex{ \left. \Phi_{t+1} \, \right| \, \mathcal{F}_t} 
   \leq \Phi_t \cdot \left( 1 - \frac{\alpha}{4 (n-1)} \cdot \Ex{\mathbf{1}_{\mathcal{G}_{t+1}} \middle| \mathcal{F}_t} \right).
\]
\end{lemma}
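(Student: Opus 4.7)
I would prove the two conclusions of the lemma --- pathwise monotonicity $\Phi_{t+1}\le\Phi_t$ and the multiplicative drift conditional on a sorting step --- separately.

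For pathwise monotonicity on a mixing step, the key observation is that the swaps on $\pi^{-1}$ prescribed by the step are mirrored by the swaps on $\tau$ in \cref{def:l-tau-sigma} so that the two effects cancel at the level of displacements. Concretely, if $s>0$ (the case $s<0$ is symmetric) and the values at the affected non-gap positions of $l$ cycle from $(i,i+1,\dots,i+s)$ to $(i+1,i+2,\dots,i+s,i)$, then $\tau_t$ cycles targets so that $d\cdot\tau_t(l_t[\cdot])$ at each of those positions equals the pre-mixing target, preserving every $\phi_t$ before $\adm$ and $\lopt$ are applied. The $\adm$ routine then performs swaps $\tau\leftarrow\swap(\tau,\pi(i),\pi(i+1))$ whose effect on the two affected $\phi$ terms is to exchange ``out-of-order'' targets for ``in-order'' ones; a standard convex-rearrangement argument (\cref{lem:admissible_swaps_are_good}) shows this weakly decreases their sum, while $\lopt$ strictly pushes each affected non-gap element toward its target. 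For a sorting step, an in-order sampled pair causes no change by admissibility of $\tau_{t-1}$; an out-of-order pair splits into two sub-cases depending on whether $\tau_{t-1}(v)>\tau_{t-1}(c)$, in which case the $l$-swap directly drops $\Phi$, or $\tau_{t-1}(v)<\tau_{t-1}(c)$, in which case the compensating $\adm$ swap exactly cancels the effect of the $l$-swap on the two affected displacements.

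For the multiplicative drift, I decompose the non-gap elements of $l_t$ into right blocks, left blocks, and stationary elements (the last contributing $0$ to $\Phi_t$). Fix a right block with head $v=l_t[k]$; local optimality guarantees $c:=l_t[k+1]\neq\bot$. A short case analysis on $c$ (stationary, in a left block, or in a different right block) shows $\tau_t(c)<\tau_t(v)$, so admissibility's contrapositive yields $c<v$ in $\pi$, and sampling the corresponding $i\in[n-1]$ (probability $1/(n-1)$) triggers a real swap in $l$. Splitting on whether $d\cdot\tau_t(c)\le k$ or $d\cdot\tau_t(c)\ge k+1$ and using $\tau_t(v)\ge\tau_t(c)+1$ (so the two targets are at least $d$ apart), a direct computation gives a drop in $\phi_t(k)+\phi_t(k+1)$ of at least $(1-e^{-\alpha})(1-e^{-\alpha d})\,\phi_t(k)$, which for $\alpha\in[\log(20)/(d-1),1]$ is at least a universal constant times $\alpha\cdot\phi_t(k)$ (using $1-e^{-\alpha}\ge(1-1/e)\alpha$ on $[0,1]$ and $1-e^{-\alpha d}\ge 19/20$). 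The symmetric bound holds for left-block heads, with $\phi_t$ evaluated at the head index.

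Summing these per-block contributions, dividing by $n-1$ for the uniform sampling over $i$ and by a further factor of $2$ to absorb the only possible double count --- adjacent heads of a right and a left block sharing the same swap --- and invoking \cref{lem:potential_of_header_dominates} with $\alpha\ge\log(20)/(d-1)$ to obtain $\phi_t(\text{head})\ge\tfrac{19}{20}\sum_{j\in\text{block}}\phi_t(j)$, together with $\sum_{\text{blocks}}\sum_{j\in\text{block}}\phi_t(j)\ge\Phi_t$ (since stationary elements have $\phi_t-1=0$), yields $\Ex{\Phi_t-\Phi_{t+1}\mid\mathcal{F}_t,\mathcal{G}_{t+1}}\ge\frac{\alpha}{4(n-1)}\,\Phi_t$; multiplying by $\Ex{\mathbf{1}_{\mathcal{G}_{t+1}}\mid\mathcal{F}_t}$ gives the statement. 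I expect the main obstacle to be precisely this constant bookkeeping: the head-dominance factor, the per-swap drop constants, the $1/2$ for double counting, and the approximations for $e^\alpha-1$ and $1-e^{-\alpha}$ must combine to $\ge 1/4$ uniformly over $\alpha\in[\log(20)/(d-1),1]$. A secondary subtle point is verifying the monotonicity sub-case $\tau_{t-1}(v)<\tau_{t-1}(c)$ of an out-of-order sorting step, where one must pair the $l$-swap with the compensating $\adm$ swap to see that $\Phi$ does not transiently increase.
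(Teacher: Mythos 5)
Your proposal is correct and follows the same block-decomposition / head-dominance strategy as the paper, but the bookkeeping is organized rather differently, so the comparison is worth spelling out. For the pathwise monotonicity you replace the paper's figure-by-figure case analysis with the cleaner ``mirroring'' observation --- that a mixing step permutes items and their $\tau$-targets in lockstep, so every $\phi$-term is unchanged before $\adm$ and $\lopt$ --- followed by a convex-rearrangement argument from \cref{lem:admissible_swaps_are_good} for $\adm$ swaps; this is essentially the mechanism behind the paper's \cref{fig:cases_with_no_change} but stated more abstractly, and you correctly flag the subtle sorting sub-case $\tau_{t-1}(v)<\tau_{t-1}(c)$ where the $\adm$ swap exactly cancels the $l$-swap (the paper elides this case because it cannot arise for a block head, which is the only kind of pair the paper tracks). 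For the drift, you compute an \emph{absolute} drop per head, namely $(1-e^{-\alpha})(1-e^{-\alpha d})\phi_t(\text{head})$ --- which I checked is correct using the slightly stronger fact $\phi_t(k+1)\le e^{-\alpha(d+1)}\phi_t(k)$, following from $\tau_t(v)\ge\tau_t(c)+1$ --- then pass to the whole block via \cref{lem:potential_of_header_dominates}, and finally use the crude inequality $\sum_j \phi_t(j)\ge\Phi_t$. The paper instead propagates a \emph{multiplicative} $1-\tfrac{\alpha}{4(n-1)}$ factor through each $\phi_t(\ell)-1$ of the block (and its out-of-block neighbor) and then aggregates; your route is arguably simpler, at the price of the loose $\sum\phi\ge\Phi$ step, but the constants still close with room to spare: $(1-1/e)\cdot\tfrac{19}{20}\cdot\tfrac{1}{2}\cdot\tfrac{19}{20}\approx 0.285>\tfrac14$. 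Your extra factor $\tfrac12$ for double-counting a right-head/left-head pair is conservative but valid (and arguably cleaner than the paper's silent aggregation); in fact that pair can be charged $(1-e^{-\alpha})\bigl(\phi_t(k)+\phi_t(k+1)\bigr)$ directly, so the $\tfrac12$ is not even needed. In short: same key lemmas, same block structure, but a leaner absolute-drop accounting in place of the paper's per-element multiplicative bookkeeping.
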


Regarding the constraints on $\alpha$ and $d$, the intuition is that~$\alpha$ needs to be small enough so that we can apply some Taylor estimates, and $d$ needs to be large enough so that large displacements are amplified.

\begin{proof}
In any mixing step the value of the potential cannot increase, as the first step of the $\adm$ operation also swaps the targets of the elements mixed. These include the following swaps between adjacent elements:

\begin{itemize}
  \item Swapping two elements of the same right (left) block (\cref{subfig:right_internal,subfig:left_internal}).
  \item Swapping two stationary elements (\cref{subfig:stationary}).
  \item Swapping a stationary element with the tail of a left (right) block (\cref{subfig:stationary_and_right,subfig:left_and_stationary}).
\end{itemize}

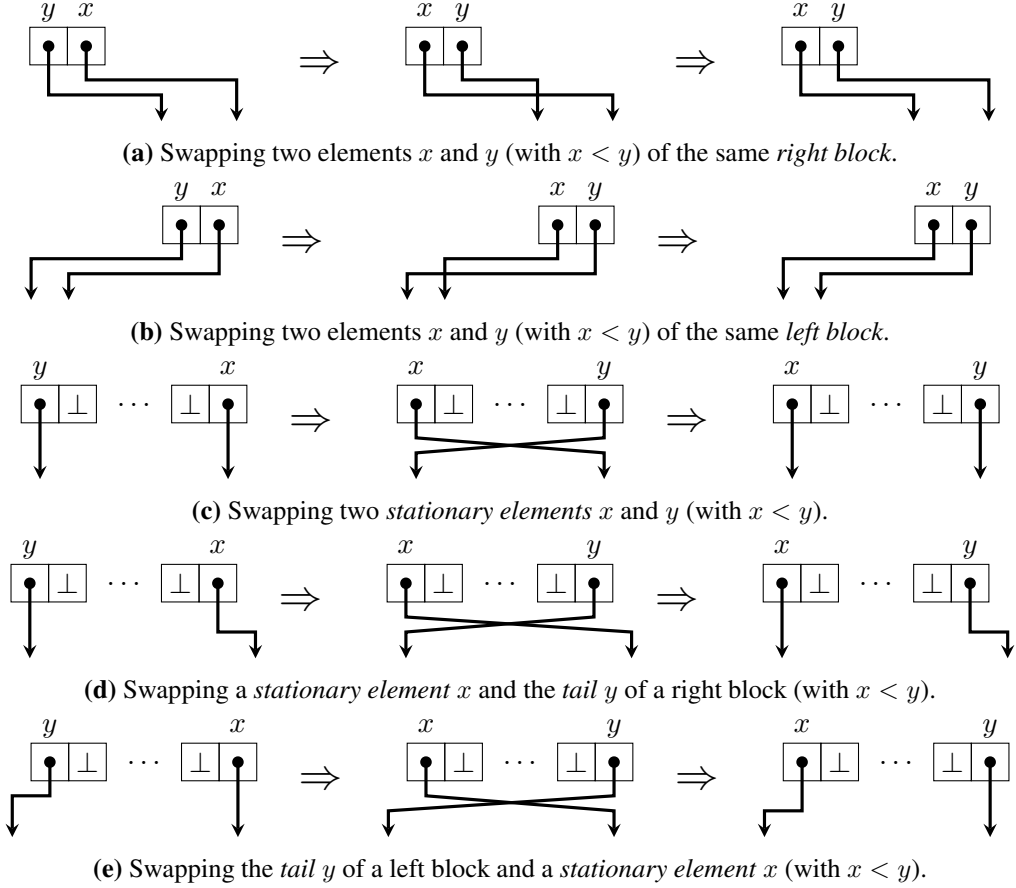
\begin{figure*}[t]
\centering
\begin{subfigure}[t]{\textwidth}
\centering
\begin{tikzpicture}
\newcommand{\Side}{0.5}
\foreach \x in {0,...,1} 
  \draw[draw=black] (\x * \Side,0) rectangle ++(\Side,\Side);

\node[anchor=north] at (0.5 * \Side, 0.95) {$y$};
\node[anchor=north] at (1.5 * \Side, 0.95) {$x$};
\SplitLine{0.5}{-0.8}{3.5}
\SplitLine{1.5}{-0.4}{5.5}

\node at (3.85, 0) {\scalebox{1.5}{$\Rightarrow$}};

\begin{scope}[shift={(5, 0)}]
\foreach \x in {0,...,1} 
  \draw[draw=black] (\x * \Side,0) rectangle ++(\Side,\Side);

\node[anchor=north] at (0.5 * \Side, 0.95) {$x$};
\node[anchor=north] at (1.5 * \Side, 0.95) {$y$};
\SplitLine{0.5}{-0.8}{5.5}
\SplitLine{1.5}{-0.4}{3.5}

\node at (3.85, 0) {\scalebox{1.5}{$\Rightarrow$}};
\end{scope}

\begin{scope}[shift={(10, 0)}]
\foreach \x in {0,...,1} 
  \draw[draw=black] (\x * \Side,0) rectangle ++(\Side,\Side);

\node[anchor=north] at (0.5 * \Side, 0.95) {$x$};
\node[anchor=north] at (1.5 * \Side, 0.95) {$y$};
\SplitLine{0.5}{-0.8}{3.5}
\SplitLine{1.5}{-0.4}{5.5}
\end{scope}
\end{tikzpicture}
\caption{Swapping two elements $x$ and $y$ (with $x < y$) of the same \emph{right block}.}
\label{subfig:right_internal}
\end{subfigure}

\begin{subfigure}[t]{\textwidth}
\centering
\begin{tikzpicture}
\newcommand{\Side}{0.5}
\foreach \x in {0,...,1} 
  \draw[draw=black] (2 + \x * \Side,0) rectangle ++(\Side,\Side);

\node[anchor=north] at (4.5 * \Side, 0.95) {$y$};
\node[anchor=north] at (5.5 * \Side, 0.95) {$x$};
\SplitLine{4.5}{-0.4}{0.5}
\SplitLine{5.5}{-0.8}{1.5}

\node at (3.85, 0) {\scalebox{1.5}{$\Rightarrow$}};

\begin{scope}[shift={(5, 0)}]
\foreach \x in {0,...,1} 
  \draw[draw=black] (2 + \x * \Side,0) rectangle ++(\Side,\Side);

\node[anchor=north] at (4.5 * \Side, 0.95) {$x$};
\node[anchor=north] at (5.5 * \Side, 0.95) {$y$};
\SplitLine{5.5}{-0.8}{0.5}
\SplitLine{4.5}{-0.4}{1.5}

\node at (3.85, 0) {\scalebox{1.5}{$\Rightarrow$}};
\end{scope}

\begin{scope}[shift={(10, 0)}]
\foreach \x in {0,...,1} 
  \draw[draw=black] (2 + \x * \Side,0) rectangle ++(\Side,\Side);

\node[anchor=north] at (4.5 * \Side, 0.95) {$x$};
\node[anchor=north] at (5.5 * \Side, 0.95) {$y$};
\SplitLine{5.5}{-0.8}{1.5}
\SplitLine{4.5}{-0.4}{0.5}
\end{scope}
\end{tikzpicture}
\caption{Swapping two elements $x$ and $y$ (with $x < y$) of the same \emph{left block}.}
\label{subfig:left_internal}
\end{subfigure}

\begin{subfigure}[t]{\textwidth}
\centering
\begin{tikzpicture}
\newcommand{\Side}{0.5}
\foreach \x in {0,...,1} 
  \draw[draw=black] (\x * \Side,0) rectangle ++(\Side,\Side);
\foreach \x in {4,...,5} 
  \draw[draw=black] (\x * \Side,0) rectangle ++(\Side,\Side);

\node[anchor=north] at (0.5 * \Side, 0.95) {$y$};
\node[anchor=north] at (5.5 * \Side, 0.95) {$x$};
\node at (1.5 * \Side, 0.5 * \Side) {$\bot$};
\node at (4.5 * \Side, 0.5 * \Side) {$\bot$};
\node at (3 * \Side, 0.5 * \Side) {$\ldots$};
\SplitLine{0.5}{-0.4}{0.5}
\SplitLine{5.5}{-0.8}{5.5}

\node at (3.85, 0) {\scalebox{1.5}{$\Rightarrow$}};

\begin{scope}[shift=({5, 0})]
\foreach \x in {0,...,1} 
  \draw[draw=black] (\x * \Side,0) rectangle ++(\Side,\Side);
\foreach \x in {4,...,5} 
  \draw[draw=black] (\x * \Side,0) rectangle ++(\Side,\Side);

\node[anchor=north] at (0.5 * \Side, 0.95) {$x$};
\node[anchor=north] at (5.5 * \Side, 0.95) {$y$};
\node at (1.5 * \Side, 0.5 * \Side) {$\bot$};
\node at (4.5 * \Side, 0.5 * \Side) {$\bot$};
\node at (3 * \Side, 0.5 * \Side) {$\ldots$};
\FourSplitLine{0.5}{-0.4}{-0.8}{5.5}
\FourSplitLine{5.5}{-0.4}{-0.8}{0.5}

\node at (3.85, 0) {\scalebox{1.5}{$\Rightarrow$}};
\end{scope}

\begin{scope}[shift=({10, 0})]
\foreach \x in {0,...,1} 
  \draw[draw=black] (\x * \Side,0) rectangle ++(\Side,\Side);
\foreach \x in {4,...,5} 
  \draw[draw=black] (\x * \Side,0) rectangle ++(\Side,\Side);

\node[anchor=north] at (0.5 * \Side, 0.95) {$x$};
\node[anchor=north] at (5.5 * \Side, 0.95) {$y$};
\node at (1.5 * \Side, 0.5 * \Side) {$\bot$};
\node at (4.5 * \Side, 0.5 * \Side) {$\bot$};
\node at (3 * \Side, 0.5 * \Side) {$\ldots$};
\SplitLine{0.5}{-0.8}{0.5}
\SplitLine{5.5}{-0.8}{5.5}
\end{scope}
\end{tikzpicture}
\caption{Swapping two \emph{stationary elements} $x$ and $y$ (with $x < y$).}
\label{subfig:stationary}
\end{subfigure}

\begin{subfigure}[t]{\textwidth}
\centering
\begin{tikzpicture}
\newcommand{\Side}{0.5}
\foreach \x in {0,...,1} 
  \draw[draw=black] (\x * \Side,0) rectangle ++(\Side,\Side);
\foreach \x in {4,...,5} 
  \draw[draw=black] (\x * \Side,0) rectangle ++(\Side,\Side);

\node[anchor=north] at (0.5 * \Side, 0.95) {$y$};
\node[anchor=north] at (5.5 * \Side, 0.95) {$x$};
\node at (1.5 * \Side, 0.5 * \Side) {$\bot$};
\node at (4.5 * \Side, 0.5 * \Side) {$\bot$};
\node at (3 * \Side, 0.5 * \Side) {$\ldots$};
\SplitLine{0.5}{-0.4}{0.5}
\SplitLine{5.5}{-0.8}{6.5}

\node at (3.85, 0) {\scalebox{1.5}{$\Rightarrow$}};

\begin{scope}[shift=({5, 0})]
\foreach \x in {0,...,1} 
  \draw[draw=black] (\x * \Side,0) rectangle ++(\Side,\Side);
\foreach \x in {4,...,5} 
  \draw[draw=black] (\x * \Side,0) rectangle ++(\Side,\Side);

\node[anchor=north] at (0.5 * \Side, 0.95) {$x$};
\node[anchor=north] at (5.5 * \Side, 0.95) {$y$};
\node at (1.5 * \Side, 0.5 * \Side) {$\bot$};
\node at (4.5 * \Side, 0.5 * \Side) {$\bot$};
\node at (3 * \Side, 0.5 * \Side) {$\ldots$};
\FourSplitLine{0.5}{-0.4}{-0.8}{6.5}
\FourSplitLine{5.5}{-0.4}{-0.8}{0.5}

\node at (3.85, 0) {\scalebox{1.5}{$\Rightarrow$}};
\end{scope}

\begin{scope}[shift=({10, 0})]
\foreach \x in {0,...,1} 
  \draw[draw=black] (\x * \Side,0) rectangle ++(\Side,\Side);
\foreach \x in {4,...,5} 
  \draw[draw=black] (\x * \Side,0) rectangle ++(\Side,\Side);

\node[anchor=north] at (0.5 * \Side, 0.95) {$x$};
\node[anchor=north] at (5.5 * \Side, 0.95) {$y$};
\node at (1.5 * \Side, 0.5 * \Side) {$\bot$};
\node at (4.5 * \Side, 0.5 * \Side) {$\bot$};
\node at (3 * \Side, 0.5 * \Side) {$\ldots$};
\SplitLine{0.5}{-0.8}{0.5}
\SplitLine{5.5}{-0.8}{6.5}
\end{scope}
\end{tikzpicture}
\caption{Swapping a \emph{stationary element} $x$ and the \emph{tail} $y$ of a right block (with $x < y$).}
\label{subfig:stationary_and_right}
\end{subfigure}

\begin{subfigure}[t]{\textwidth}
\centering
\begin{tikzpicture}
\newcommand{\Side}{0.5}
\foreach \x in {0,...,1} 
  \draw[draw=black] (\x * \Side,0) rectangle ++(\Side,\Side);
\foreach \x in {4,...,5} 
  \draw[draw=black] (\x * \Side,0) rectangle ++(\Side,\Side);

\node[anchor=north] at (0.5 * \Side, 0.95) {$y$};
\node[anchor=north] at (5.5 * \Side, 0.95) {$x$};
\node at (1.5 * \Side, 0.5 * \Side) {$\bot$};
\node at (4.5 * \Side, 0.5 * \Side) {$\bot$};
\node at (3 * \Side, 0.5 * \Side) {$\ldots$};
\SplitLine{0.5}{-0.4}{-0.5}
\SplitLine{5.5}{-0.8}{5.5}

\node at (3.85, 0) {\scalebox{1.5}{$\Rightarrow$}};

\begin{scope}[shift=({5, 0})]
\foreach \x in {0,...,1} 
  \draw[draw=black] (\x * \Side,0) rectangle ++(\Side,\Side);
\foreach \x in {4,...,5} 
  \draw[draw=black] (\x * \Side,0) rectangle ++(\Side,\Side);

\node[anchor=north] at (0.5 * \Side, 0.95) {$x$};
\node[anchor=north] at (5.5 * \Side, 0.95) {$y$};
\node at (1.5 * \Side, 0.5 * \Side) {$\bot$};
\node at (4.5 * \Side, 0.5 * \Side) {$\bot$};
\node at (3 * \Side, 0.5 * \Side) {$\ldots$};
\FourSplitLine{0.5}{-0.4}{-0.8}{5.5}
\FourSplitLine{5.5}{-0.4}{-0.8}{-0.5}

\node at (3.85, 0) {\scalebox{1.5}{$\Rightarrow$}};
\end{scope}

\begin{scope}[shift=({10, 0})]
\foreach \x in {0,...,1} 
  \draw[draw=black] (\x * \Side,0) rectangle ++(\Side,\Side);
\foreach \x in {4,...,5} 
  \draw[draw=black] (\x * \Side,0) rectangle ++(\Side,\Side);

\node[anchor=north] at (0.5 * \Side, 0.95) {$x$};
\node[anchor=north] at (5.5 * \Side, 0.95) {$y$};
\node at (1.5 * \Side, 0.5 * \Side) {$\bot$};
\node at (4.5 * \Side, 0.5 * \Side) {$\bot$};
\node at (3 * \Side, 0.5 * \Side) {$\ldots$};
\SplitLine{0.5}{-0.8}{-0.5}
\SplitLine{5.5}{-0.8}{5.5}
\end{scope}
\end{tikzpicture}
\caption{Swapping the \emph{tail} $y$ of a left block and a \emph{stationary element} $x$ (with $x < y$).}
\label{subfig:left_and_stationary}
\end{subfigure}
\caption{The five cases of swaps between adjacent element for which the value of the potential remains the same (all of which correspond to mixing steps). The figures show the following three stages: $(i)$~the original list $l_t$ (with targets $\tau_t$), $(ii)$~the list $l_t'$ after the mixing step (with targets $\tau_t'$), and $(iii)$~the list $l_{t+1}$ after the $\adm$ operation (with targets $\tau_{t+1}$).} %
\label{fig:cases_with_no_change}
\end{figure*}

Now the remaining swaps between adjacent elements correspond to sorting steps. We will only consider the case for swapping the \emph{head} $j$ of a right block as the case for swapping the head of a left block is symmetric. Further note that because of the $\lopt$ operation, the next element $l_t[j+1] \neq \bot$. Further, $l_t[j + 1] < l_t[j]$ as otherwise $\tau_t(l_t[j+1]) > \tau_t(l_t[j])$ and $l_t[i..j]$ would not be a right block.
\begin{itemize}
  \item When swapping $j$ with the head of a left block, then both displacements decrease (\cref{subfig:right_head_left}).
 \item When swapping $j$ with the tail of a right block, then only the displacement of the head decreases by $1$ and the other displacement increases by $1$ (\cref{subfig:right_head_right}). As we will show below, the change of the largest displacement (i.e., that of the head element) dominates and so overall the potential decreases. 
 \item When swapping $j$ with a stationary element, then again the displacement of the head decreases by~$1$ and the other displacement increases by $1$ (\cref{subfig:right_head_stationary}).
\end{itemize}

Further, note that since the permutation $\tau_t$ is admissible w.r.t.~$\pi_t$, we also have that the head of a right block~$j$ can always be swapped with its element to the right and that element is not a gap.

By the case analysis above, it follows that none of the swaps can increase the potential. Also, the $\lopt$ operation cannot increase the potential and so we deduce that $\Phi_{t+1} \leq \Phi_t$ for any step $t \geq 0$.

Our goal will be to show that the expected decrease of the value of the potential for the head $j$ of the block covers $(i)$ the possible increase of the potential for the element to its right, and $(ii)$ also gives the decrease factor for all elements of its block. We will assume the pessimistic case where the displacement of the element to its right does increase.

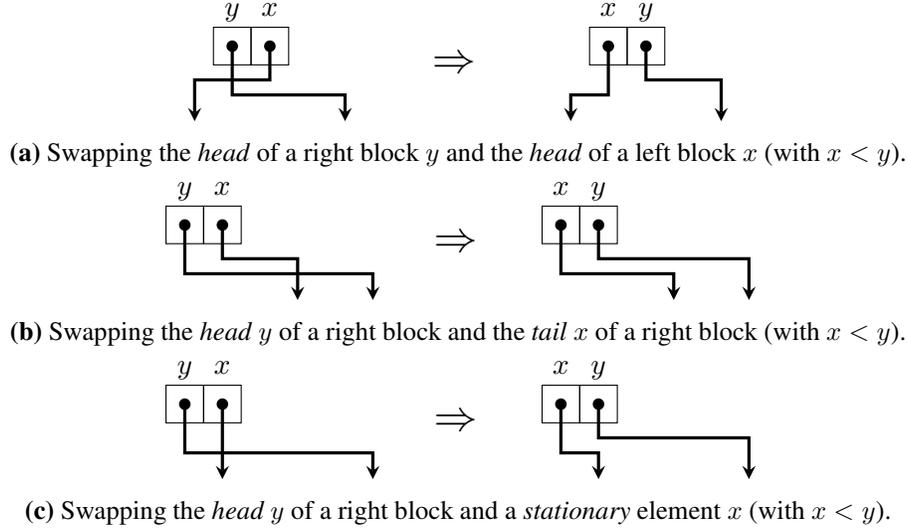
\begin{figure*}[t]
\centering

\begin{subfigure}[t]{\textwidth}
\centering
\begin{tikzpicture}
\newcommand{\Side}{0.5}
\foreach \x in {0,...,1} 
  \draw[draw=black] (\x * \Side,0) rectangle ++(\Side,\Side);

\node[anchor=north] at (0.5 * \Side, 0.95) {$y$};
\node[anchor=north] at (1.5 * \Side, 0.95) {$x$};
\SplitLine{0.5}{-0.8}{3.5}
\SplitLine{1.5}{-0.4}{-0.5}

\node at (3.2, 0) {\scalebox{1.5}{$\Rightarrow$}};

\begin{scope}[shift={(5, 0)}]
\foreach \x in {0,...,1} 
  \draw[draw=black] (\x * \Side,0) rectangle ++(\Side,\Side);

\node[anchor=north] at (0.5 * \Side, 0.95) {$x$};
\node[anchor=north] at (1.5 * \Side, 0.95) {$y$};
\SplitLine{0.5}{-0.8}{-0.5}
\SplitLine{1.5}{-0.4}{3.5}

\end{scope}
\end{tikzpicture}
\caption{Swapping the \emph{head} of a right block $y$ and the \emph{head} of a left block $x$ (with $x < y$).}
\label{subfig:right_head_left}
\end{subfigure}

\begin{subfigure}[t]{\textwidth}
\centering
\begin{tikzpicture}
\newcommand{\Side}{0.5}
\foreach \x in {0,...,1} 
  \draw[draw=black] (\x * \Side,0) rectangle ++(\Side,\Side);

\node[anchor=north] at (0.5 * \Side, 0.95) {$y$};
\node[anchor=north] at (1.5 * \Side, 0.95) {$x$};
\SplitLine{0.5}{-0.8}{5.5}
\SplitLine{1.5}{-0.4}{3.5}

\node at (3.85, 0) {\scalebox{1.5}{$\Rightarrow$}};

\begin{scope}[shift={(5, 0)}]
\foreach \x in {0,...,1} 
  \draw[draw=black] (\x * \Side,0) rectangle ++(\Side,\Side);

\node[anchor=north] at (0.5 * \Side, 0.95) {$x$};
\node[anchor=north] at (1.5 * \Side, 0.95) {$y$};
\SplitLine{0.5}{-0.8}{3.5}
\SplitLine{1.5}{-0.4}{5.5}

\end{scope}
\end{tikzpicture}
\caption{Swapping the \emph{head} $y$ of a right block and the \emph{tail} $x$ of a right block (with $x < y$).}
\label{subfig:right_head_right}
\end{subfigure}

\begin{subfigure}[t]{\textwidth}
\centering
\begin{tikzpicture}
\newcommand{\Side}{0.5}
\foreach \x in {0,...,1} 
  \draw[draw=black] (\x * \Side,0) rectangle ++(\Side,\Side);

\node[anchor=north] at (0.5 * \Side, 0.95) {$y$};
\node[anchor=north] at (1.5 * \Side, 0.95) {$x$};
\SplitLine{0.5}{-0.8}{5.5}
\SplitLine{1.5}{-0.4}{1.5}

\node at (3.85, 0) {\scalebox{1.5}{$\Rightarrow$}};

\begin{scope}[shift={(5, 0)}]
\foreach \x in {0,...,1} 
  \draw[draw=black] (\x * \Side,0) rectangle ++(\Side,\Side);

\node[anchor=north] at (0.5 * \Side, 0.95) {$x$};
\node[anchor=north] at (1.5 * \Side, 0.95) {$y$};
\SplitLine{0.5}{-0.8}{1.5}
\SplitLine{1.5}{-0.4}{5.5}

\end{scope}
\end{tikzpicture}
\caption{Swapping the \emph{head} $y$ of a right block and a \emph{stationary} element $x$ (with $x < y$).}
\label{subfig:right_head_stationary}
\end{subfigure}

\caption{The three cases of sorting steps involving swapping the \emph{head} of a right block.}
\end{figure*}

Consider a right block $l_t[i..j]$.  As mentioned above, we can assume that when swapping the head $j$ of the block with the  next element to its right, its displacement decreases by $1$. Hence, we have that
\begin{align*}
 \Exp\left[ \left. \phi_{t+1}(j) \, \right| \, \mathcal{F}_t, \mathcal{G}_{t+1} \right] 
 & \leq \left( 1  - \frac{1}{n-1} \right) \cdot \phi_t(j) + \frac{1}{n-1} \cdot \phi_t(j) \cdot e^{-\alpha}  \\
 & = \phi_t(j) \cdot \left( 1  + \frac{1}{n-1} \cdot (e^{-\alpha} - 1) \right)  \\
 & \leq \phi_t(j) \cdot \left( 1  - \frac{\alpha}{2(n-1)} \right),
\end{align*}
using the Taylor estimate $e^{-\alpha} \leq 1 - \alpha/2$ (for any $\alpha \leq 1$).

Further, we can pessimistically assume that the displacement of element $j+1$ increases by $1$. So,
\begin{align}
\Exp\left[ \left. \phi_{t+1}(j+1) \, \right| \, \mathcal{F}_t, \mathcal{G}_{t+1} \right] 
 & \leq \left( 1  - \frac{1}{n-1} \right) \cdot \phi_t(j+1) + \frac{1}{n-1} \cdot \phi_t(j+1) \cdot e^{\alpha} \notag \\
 & = \phi_t(j+1) \cdot \left( 1  + \frac{1}{n-1} \cdot (e^{\alpha} - 1) \right) \notag \\
 & \leq \phi_t(j+1) \cdot \left( 1  + \frac{2\alpha}{n-1} \right), \label{eq:phi_j_plus_one}
\end{align}
using the Taylor estimate $e^{\alpha} \leq 1 + 2\alpha$ (for any $\alpha \leq 1$).

Moreover, in the cases where the displacement of $j$ increases by $1$ (\cref{subfig:right_head_right,subfig:right_head_stationary}), then $\tau_t(l_t[j]) \geq \tau_t(l_t[j-1]) + d$ and so $|d \cdot \tau_t(l_t[j]) - j| \geq |d \cdot \tau_t(l_t[j+1]) - (j+1)| + d-1$. Therefore, we can bound its contribution as follows
\[
\phi_t(j+1) \leq \phi_t(j) \cdot e^{-\alpha (d-1)} \leq \frac{1}{20} \cdot \phi_t(j),
\]
using that $\alpha \cdot (d-1) \geq \log 20$. Returning to~\cref{eq:phi_j_plus_one},
\begin{align} 
 \Ex{ \left. \phi_{t+1}(j+1) \, \right| \, \mathcal{F}_t, \mathcal{G}_{t+1} } 
  & \leq \phi_t(j+1) \cdot \left( 1 - \frac{\alpha}{4(n-1)} \right)
  + \phi_t(j+1) \cdot \frac{9 \alpha}{4(n-1)} \notag \\
  & \leq \phi_t(j+1) \cdot \left( 1 - \frac{\alpha}{4(n-1)} \right)
  + \phi_t(j) \cdot \frac{\alpha}{8(n-1)}. \label{eq:sorting_step_next_element}
\end{align}
Now, for the contribution of the elements of the block~$l_t[i..j]$ we have that

\begin{align}
\Ex{\left. \sum_{\ell = i}^{j} \phi_{t+1}(\ell) \, \right| \, \mathcal{F}_t, \mathcal{G}_{t+1} } 
 &  = \sum_{\ell = i}^{j-1} \phi_t(\ell) +  \phi_t(j) \cdot \left( 1  - \frac{\alpha}{2(n-1)} \right) \notag \\
 &  = \sum_{\ell = i}^{j}  \phi_t(\ell) \cdot \left( 1 - \frac{\alpha}{4(n-1)}  \right) + \sum_{\ell = i}^{j-1}  \phi_t(\ell) \cdot \frac{\alpha}{4 (n-1)}  - \phi_t(j) \cdot \frac{\alpha}{4 (n-1)} \notag \\
 &  \leq \sum_{\ell = i}^{j}  \phi_t(\ell) \cdot \left( 1 - \frac{\alpha}{4 (n-1)} \right) - \phi_t(j) \cdot \frac{\alpha}{8(n-1)}, \label{eq:sorting_step_block}
\end{align}
using in the last inequality that \begin{align*}
\sum_{\ell = i}^{j-1}  \phi_t(\ell) \cdot \frac{\alpha}{4(n-1)} 
  &  \stackrel{(a)}{\leq} \phi_t(j) \cdot \left( \frac{1}{1 - e^{-\alpha (d-1)}}  - 1\right) \cdot \frac{\alpha}{4 (n-1)} \\
  &  = \phi_t(j) \cdot \frac{e^{-\alpha (d-1)}}{1 - e^{-\alpha (d-1)}} \cdot \frac{\alpha}{4 (n-1)} \\
  &  \stackrel{(b)}{\leq} \phi_t(j) \cdot \frac{\alpha}{8 (n-1)},
\end{align*}
where $(a)$ follows by \cref{lem:potential_of_header_dominates} and $(b)$ since $\alpha (d-1) \geq \log 20$.
Therefore, combining \cref{eq:sorting_step_next_element} and \cref{eq:sorting_step_block}, we have that
\begin{align*}
\Ex{ \sum_{\ell = i}^{j+1} \phi_{t+1}(\ell) \, \middle| \, \mathcal{F}_t, \mathcal{G}_{t+1} }
& \leq 
  \sum_{\ell = i}^{j+1}  \phi_t(\ell)\cdot\left( 1 - \frac{\alpha}{4 (n-1)} \right),
\end{align*}
which also implies that
\begin{align*}
\Ex{\sum_{\ell = i}^{j+1} (\phi_{t+1}(\ell) - 1) \, \middle| \, \mathcal{F}_t, \mathcal{G}_{t+1} }
 & \leq  \sum_{\ell = i}^{j+1}  (\phi_t(\ell)  - 1) \cdot \left( 1 - \frac{\alpha}{4 (n-1)} \right).
\end{align*}
Further, by a symmetric argument, for a left block $l_t[i..j]$ we have that
\begin{align*}
\Ex{\sum_{\ell = i - 1}^{j} (\phi_{t+1}(\ell) - 1) \, \middle| \, \mathcal{F}_t, \mathcal{G}_{t+1} } 
 &  \leq \sum_{\ell = i - 1}^{j}  (\phi_t(\ell)  - 1) \cdot \left( 1 - \frac{\alpha}{4 (n-1)} \right).
\end{align*}
By aggregating over all blocks and using that swaps between stationary elements  do not contribute to the potential, we have that
\[
\Ex{\Phi_{t+1} \, \middle| \, \mathcal{F}_t, \mathcal{G}_{t+1} }
  \leq \Phi_t \cdot \left( 1 - \frac{\alpha}{4 (n-1)} \right).
\]
Combining with the fact that deterministically $\Phi_{t+1} \leq \Phi_t$, we conclude that
\[
 \Ex{ \left. \Phi_{t+1} \, \right| \, \mathcal{F}_t} 
   \leq \Phi_t \cdot \left( 1 - \frac{\alpha}{4 (n-1)} \cdot \Ex{\mathbf{1}_{\mathcal{G}_{t+1}} \middle| \mathcal{F}_t} \right). \qedhere
\]
\end{proof}

\subsection{Analysis of the \texorpdfstring{$\Psi$}{Ψ} Potential} \label{sec:mdev_psi_potential}

We now show that for any element $i$, the displacement of its target in $n \cdot k$ steps is at most $O(\sqrt{k \cdot \log n})$. The high level idea is that the targets perform an unbiased random walk, while the $\adm$ and $\lopt$ operations do not increase the exponential potential. 

We define another exponential potential function \[
  \Psi_t = \Psi_t(\alpha) = \sum_{i = 1}^n e^{\alpha \delta_t(i)},
\]
for a smoothing parameter $\alpha \in [1/n, 1]$ and where $\delta_t(i) = |\sigma_t^{-1}(i) - i|$. 

\begin{lemma} \label{lem:mgf_psi_and_target_displacement_bounds}
Consider the auxiliary process $(l_t,\tau_t,\sigma_t)_{t\geq 0}$ starting with the identity permutation $\sigma_0 = \id_n$, with a perturbation distribution $\DD$ satisfying for $D \sim \mathcal{D}$ that $\Ex{D} = 0$ and $\Ex{e^{3\lambda |D|}} \leq c'$ for some constants $\lambda \in (0, 1/3]$ and $c' \geq 1$. Further, consider the potential $\Psi_t = \Psi_t(\alpha)$ for any real $\alpha \in [1/n, \lambda]$. Then, $(i)$ for any step $t \geq 0$, it holds that
\[
\Ex{\Psi_{t+1} \,\middle| \, \mathcal{F}_t}
 \leq \Psi_t \cdot e^{\frac{\alpha^2}{n} \cdot \frac{3c'}{\lambda^2}} + \alpha \cdot \frac{4c'}{\lambda},
\]
$(ii)$ for any integer $k \geq 1$, it holds that
\[
\Ex{\Psi_{nk}} \leq 3 \cdot \frac{n}{\alpha} \cdot e^{\frac{3c'}{\lambda^2} \cdot k \cdot \alpha^2},
\]
and $(iii)$  for any integer $k \in [\frac{3}{c'} \cdot \log n, n]$, it holds that
\[
  \Pro{\max_{i \in [n]} \delta_{nk}(i) \leq \sqrt{108 \cdot \frac{c'}{\lambda^2} \cdot k \cdot \log n} } \geq 1- n^{-6}.
\]
\end{lemma}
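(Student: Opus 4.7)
The lemma has three parts; the heart of the argument is in (i), while (ii) and (iii) follow by iteration and Markov's inequality respectively.

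For (i), I would split by step type. A sorting step modifies $\sigma$ only through the final call $\adm(\sigma_t, \pi_{t+1})$; every swap performed there corrects an inversion of $\sigma$ at indices $a < b$ with $\sigma(a) > \sigma(b)$, which is equivalently an inversion of $\sigma^{-1}$ between the values $\sigma(b) < \sigma(a)$, and a rearrangement argument for the convex function $x \mapsto e^{\alpha|x|}$ shows that $\sum_i e^{\alpha|\sigma^{-1}(i)-i|}$ cannot increase under such a swap. So $\Psi_{t+1} \leq \Psi_t$ deterministically on a sorting step; the same fact handles the final $\adm$ of a mixing step, so it suffices to control the cyclic shift that precedes it. For a fixed value $v$ with current position $p = \sigma_t^{-1}(v)$, the change $X_v = \sigma'^{-1}_{t+1}(v) - p$ is explicit: conditional on $s > 0$, $X_v = s$ with probability $1/n$ (the case $i = p$), $X_v = -1$ with probability $\min(s,p-1)/n$ (when $v$ lies in the sliding window), and $X_v = 0$ otherwise, symmetrically for $s < 0$. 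Because $\Ex{s} = 0$ one verifies $\Ex{X_v \mid \mathcal{F}_t} = 0$, and using the Taylor bound $e^{\alpha s} - 1 - \alpha s \leq \tfrac{\alpha^2 s^2}{2} e^{\alpha|s|}$ with the MGF hypothesis (which gives $\Ex{s^2 e^{\alpha|s|}} \leq 2c'/\lambda^2$ for $\alpha \leq \lambda$, via $\lambda^2 s^2 \leq 2 e^{\lambda|s|}$ and $e^{2\lambda|s|} \leq e^{3\lambda|s|}$) one obtains $\Ex{e^{\pm\alpha X_v} \mid \mathcal{F}_t} \leq 1 + 3c'\alpha^2/(\lambda^2 n)$.

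To turn these one-sided MGF estimates into a bound on $\Ex{e^{\alpha|p+X_v-v|}}$, I would split on whether $\delta := |p-v|$ is positive or zero. For $\delta > 0$, say $p > v$, the absolute value unfolds to $|p+X_v-v| = \delta + X_v$ outside a rare tail event, and the decomposition $e^{\alpha|y|} \leq e^{\alpha y} + e^{-\alpha y}$ together with the MGF bounds on $\pm X_v$ yields $\Ex{e^{\alpha|p+X_v-v|}} \leq e^{\alpha\delta}\cdot(1 + 3c'\alpha^2/(\lambda^2 n))$. For $\delta = 0$ the mean-zero cancellation is lost, but a direct computation gives the additive per-element increase $\Ex{e^{\alpha|X_v|}} - 1 \leq \tfrac{1}{n}(\Ex{e^{\alpha|s|}} - 1) + \tfrac{\Ex{|s|}}{n}(e^\alpha - 1) \leq 4c'\alpha/(\lambda n)$, where the estimate $\Ex{e^{\alpha|s|}} - 1 \leq \alpha\Ex{|s| e^{\alpha|s|}} \leq \alpha c'/\lambda$ uses $|s|e^{\lambda|s|} \leq e^{2\lambda|s|}/\lambda$. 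Summing the additive terms over the at most $n$ stationary elements and the multiplicative factor over the rest establishes~(i).

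Part~(ii) then follows by iterating: with $A = e^{3c'\alpha^2/(\lambda^2 n)}$ and $B = 4c'\alpha/\lambda$, from $\Psi_0 = n$ we get $\Ex{\Psi_{nk}} \leq n A^{nk} + B(A^{nk}-1)/(A-1)$, and the estimate $A - 1 \geq 3c'\alpha^2/(\lambda^2 n)$ gives $B/(A-1) \leq 2n/\alpha$, yielding the stated bound $\Ex{\Psi_{nk}} \leq 3(n/\alpha) e^{3c'k\alpha^2/\lambda^2}$. For (iii), Markov gives $\Pr{\max_i \delta_{nk}(i) \geq M} \leq \Ex{\Psi_{nk}} e^{-\alpha M}$; taking logs and minimizing $\log(3n/\alpha) + 3c'k\alpha^2/\lambda^2 - \alpha M$ over $\alpha$, the optimum $\alpha^* = \lambda^2 M/(6c'k)$ reduces the exponent to $-\lambda^2 M^2/(12c'k) + \log(3n/\alpha^*)$. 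Setting $M = \sqrt{108 c' k \log n/\lambda^2}$ makes the first term $-9\log n$, and the hypothesis $k \in [\tfrac{3}{c'}\log n, n]$ keeps $\alpha^* \in [1/n, \lambda]$ and ensures $\log(3n/\alpha^*) \leq 3\log n$, giving the stated $n^{-6}$ failure probability.

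The main obstacle is the bookkeeping in~(i): the clean multiplicative factor $1 + O(\alpha^2/n)$ requires the mean-zero cancellation of $X_v$ to survive the absolute value in $|\sigma^{-1}(v)-v|$, which only happens when $\delta > 0$. Separating the ``bulk'' elements (whose contributions combine multiplicatively) from the ``stationary'' elements (whose contributions must be controlled additively, and which naively would give an $O(1)$ correction per step) so that the latter collapses to the tight $O(\alpha)$ additive drift via the estimate $\Ex{e^{\alpha|s|}} - 1 \leq \alpha c'/\lambda$ is the step requiring care.
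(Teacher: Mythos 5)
Your overall structure mirrors the paper's: part~(i) by splitting on whether the displacement is zero, part~(ii) by iterating the drift, part~(iii) by Markov with an optimized $\alpha$, and the rearrangement argument for $\adm$ matches \cref{lem:admissible_swaps_are_good}. Parts~(ii) and~(iii) are correct. But the crux of~(i) — the $\delta>0$ case — contains a genuine gap.

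The claimed bound $\Ex{e^{\alpha|p+X_v-v|}} \leq e^{\alpha\delta}\big(1 + 3c'\alpha^2/(\lambda^2 n)\big)$ for $\delta>0$ does not follow from the decomposition $e^{\alpha|y|}\leq e^{\alpha y}+e^{-\alpha y}$; that only yields $(e^{\alpha\delta}+e^{-\alpha\delta})\big(1+O(\alpha^2/n)\big)$, and the spurious $e^{-\alpha\delta}$ term is $\Theta(1)$ when $\delta$ is small. More to the point, the inequality is simply false: the overshoot event $\{X_v<-\delta\}$ (which requires $v$ to be the selected element and $|s|>\delta$) contributes an additive $\Theta(\alpha c'/(\lambda n))$ per \emph{non-stationary} element, the same order as the lemma's final additive term. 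A concrete check: for $D$ uniform on $\{-2,+2\}$ and $\delta=1$, a Taylor expansion of $\Ex{e^{\alpha|\delta+X_v|}}$ gives $e^\alpha + (\alpha + O(\alpha^2))/n$, whereas your right-hand side is $e^\alpha + O(\alpha^2)/n$, so the bound fails for small~$\alpha$. Consequently the aggregation step "summing the additive terms over the at most $n$ stationary elements and the multiplicative factor over the rest" undercounts: in a configuration where every element has $\delta=1$ your bound would give zero additive drift, yet the true additive drift is $\Theta(\alpha c'/\lambda)$.

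The paper avoids this by conditioning on whether the element is the one selected, and in that case bounding the change by $\psi_t(i)\big(\Ex{e^{\alpha D}}-1\big) + \Ex{e^{\alpha|D|}-1}$ \emph{for every element}: the term $\Ex{e^{\alpha|D|}-1}$ is precisely the "contribution of an element starting at $\delta=0$", and the subtracted $1$ is what makes it $O(\alpha)$ rather than $O(1)$. The remaining case (some other element selected) then uses the asymmetry that at least $\delta_t(i)$ neighbors can decrease the displacement to extract another $O(\alpha c'/(\lambda n))$ additive per element. Your argument can be repaired by bounding the overshoot contribution via, e.g., $2\sinh(\alpha u)\leq 2\alpha u\,e^{\alpha u}$ with $u=|X_v|-\delta$ together with $\Ex{|s|e^{\alpha|s|}}\leq c'/\lambda$, and then charging this $O(\alpha c'/(\lambda n))$ term to every element regardless of $\delta$, not only the stationary ones.
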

\begin{proof}
\emph{Statement (i).} Let $\Psi_t'$ be the potential after the $t$-th mixing step, but before applying the $\adm$ and $\lopt$ operations. We will bound the expected value of $\psi_{t+1}'(i)$ for an arbitrary element $i \in [n]$.
Let $ \mathcal{T}_{t+1}(i)$ be the event that element $i$ was selected in step $t+1$. Then,
\begin{align}
& \Ex{\psi_{t+1}'(i) \,\middle| \, \mathcal{F}_t} \notag \\
 & \quad  = \Ex{\psi_{t+1}'(i) \,\middle| \, \mathcal{F}_t, \mathcal{T}_{t+1}(i)} \cdot \frac{1}{n} + \Ex{\psi_{t+1}'(i) \,\middle| \, \mathcal{F}_t, \neg\mathcal{T}_{t+1}(i)} \cdot \Big( 1 - \frac{1}{n} \Big) \notag \\
 & \quad = \psi_t(i) + \Ex{\psi_{t+1}'(i) - \psi_t(i) \,\middle| \, \mathcal{F}_t, \mathcal{T}_{t+1}(i)} \cdot \frac{1}{n} + \Ex{\psi_{t+1}'(i) - \psi_t(i) \,\middle| \, \mathcal{F}_t, \neg \mathcal{T}_{t+1}(i)} \cdot \Big( 1 - \frac{1}{n}\Big). \label{eq:mgf_combine}
\end{align}
We will now bound the increase of the potential separately for the two cases depending on whether $\mathcal{T}_{t+1}(i)$ holds or not. 

\begin{figure}[t]
    \centering
\scalebox{0.9}{
\begin{tikzpicture}
\newcommand{\Side}{0.75}
\begin{scope}[shift=({0, 0})]
\foreach \x in {1,...,9} 
  \draw[draw=black] (\x * \Side,0) rectangle ++(\Side,\Side);

\node[anchor=north] at (4.5 * \Side, 0.82 * \Side) {$i$};
\draw[{Circle[scale=0.7]}-stealth,black,very thick] 
(4.5 * \Side, 0.1 * \Side) -- 
(4.5 * \Side, -0.625 * \Side) --
(7.5 * \Side, -0.625 * \Side) --
(7.5 * \Side, - 1.25 * \Side);

\path[thick, black!60!green,->] (4.5 * \Side, 1.75 * \Side) edge[bend left] (5.5 * \Side, 1.75 * \Side);
\path[thick, black!60!green,->] (4.5 * \Side, 1.75 * \Side) edge[bend left] (6.5 * \Side, 1.75 * \Side);
\path[thick, black!60!green,->] (4.5 * \Side, 1.75 * \Side) edge[bend left] (7.5 * \Side, 1.75 * \Side);
\path[thick, red,->] (4.5 * \Side, 1.75 * \Side) edge[bend left] (8.5 * \Side, 1.75 * \Side);
\path[thick, red,->] (4.5 * \Side, 1.75 * \Side) edge[bend left] (9.5 * \Side, 1.75 * \Side);

\path[thick, red,->] (4.5 * \Side, 1.75 * \Side) edge[bend right] (3.5 * \Side, 1.75 * \Side);
\path[thick, red,->] (4.5 * \Side, 1.75 * \Side) edge[bend right] (2.5 * \Side, 1.75 * \Side);
\path[thick, red,->] (4.5 * \Side, 1.75 * \Side) edge[bend right] (1.5 * \Side, 1.75 * \Side);

\path[thick,dashed,red,->] (4.5 * \Side, 1.75 * \Side) edge[bend right] (0.5 * \Side, 1.75 * \Side);
\path[thick,dashed,red,->] (4.5 * \Side, 1.75 * \Side) edge[bend right] (-0.5 * \Side, 1.75 * \Side);
\path[thick,dashed,red,->] (4.5 * \Side, 1.75 * \Side) edge[bend left] (10.5 * \Side, 1.75 * \Side);

\path[thick,->] (7.5 * \Side, 1.2 * \Side) edge[bend left] (8.5 * \Side, 1.2 * \Side);
\path[thick,->] (7.5 * \Side, 1.2 * \Side) edge[bend left] (9.5 * \Side, 1.2 * \Side);
\path[thick,->] (7.5 * \Side, 1.2 * \Side) edge[bend left] (10.5 * \Side, 1.2 * \Side);
\end{scope}

\end{tikzpicture}}

    \caption{Arrows showing the possible displacements of element $i$ by $D_{t+1}$. Red arrows represent an increase on~$\delta_t(i)$, while green represent a decrease. Dashed arrows correspond to reaching a destination that is out of boundaries and in all cases lead to an increase (and their lengths upper bound the true increase). Solid black arrows represent the process corresponding to the additive term $\Ex{e^{\alpha |D_{t+1}|}}$ used to upper bound the contributions for when $i$ surpasses the position $\sigma_t^{-1}(i)$.}
    \label{fig:psi_case_1}
\end{figure}
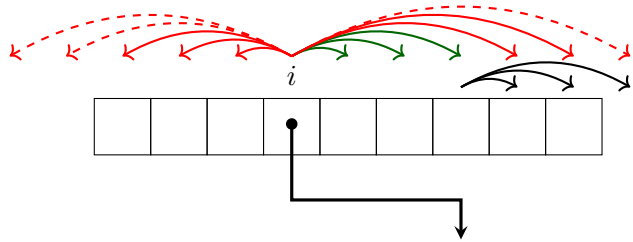

\subparagraph{Case 1 [$\mathcal{T}_{t+1}(i)$ holds]:} When selecting element $i$, $\delta_t(i)$ will change by exactly $D_{t+1}$, unless $(i)$ the element reaches one of the two boundaries of the array, or $(ii)$ the element surpasses the position $\sigma_t^{-1}(i)$. 

For case $(i)$, we upper bound the change by pessimistically assuming that the element carries on even outside the boundaries of the array (as this would mean larger $\delta_t(i)$). For case $(ii)$, we upper bound the increase by (always) having an additive term $\Ex{e^{\alpha |D_{t+1}| } - 1}$, i.e., at most the contribution of an element starting with $\delta_t(i) = 0$ (\cref{fig:psi_case_1}).

Therefore, 
\begin{align*}
\Ex{\psi_{t+1}'(i) - \psi_t(i) \, \middle| \, \mathcal{F}_t, \mathcal{T}_{t+1}(i) }
 &  \leq \psi_t(i) \cdot \left( \Ex{e^{\alpha D_{t+1}}} - 1 \right) + \Ex{e^{\alpha |D_{t+1}|} - 1} \\
 & \stackrel{(a)}{\leq} \psi_t(i) \cdot \left(\alpha \cdot \Ex{D_{t+1}} + \alpha^2 \cdot \frac{c'}{\lambda^2} \right) + \alpha \cdot \Ex{|D_{t+1}|} + \alpha^2 \cdot \frac{c'}{\lambda^2} \\
 &   \stackrel{(b)}{\leq} \psi_t(i) \cdot \alpha^2 \cdot \frac{c'}{\lambda^2} + \alpha \cdot \frac{2c'}{\lambda},
\end{align*}
using in $(a)$ \cref{lem:mgf_taylor} (since $\alpha \leq \lambda$) and in $(b)$ that $\Ex{D_{t+1}} = 0$, $\Ex{|D_{t+1}|} \leq \frac{1}{\lambda} \cdot \Ex{e^{\lambda |D_{t+1}|}} \leq \frac{c'}{\lambda}$ and $\alpha^2 \cdot \frac{c'}{\lambda^2} \leq \alpha \cdot \frac{c'}{\lambda}$ (since $\alpha \leq \lambda$).

\begin{figure}
    \centering
\hfill
\begin{subfigure}[t]{0.45\textwidth}
\centering
\scalebox{0.85}{
\begin{tikzpicture}
\newcommand{\Side}{0.75}
\begin{scope}[shift=({0, 0})]
\foreach \x in {1,...,9} 
  \draw[draw=black] (\x * \Side,0) rectangle ++(\Side,\Side);

\node[anchor=north] at (3.5 * \Side, 0.8 * \Side) {$j$};
\node[anchor=north] at (6.5 * \Side, 0.8 * \Side) {$i$};

\path[thick,->] (3.5 * \Side, 1.75 * \Side) edge[bend left] (6.5 * \Side, 1.75 * \Side);
\path[thick,->] (3.5 * \Side, 1.75 * \Side) edge[bend left] (7.5 * \Side, 1.75 * \Side);
\path[thick,->] (3.5 * \Side, 1.75 * \Side) edge[bend left] (8.5 * \Side, 1.75 * \Side);
\path[thick,->] (3.5 * \Side, 1.75 * \Side) edge[bend left] (9.5 * \Side, 1.75 * \Side);
\path[thick,->] (3.5 * \Side, 1.75 * \Side) edge[bend left] (9.5 * \Side, 1.75 * \Side);

\path[thick,dashed,->] (6.5 * \Side, 1.75 * \Side) edge[bend right] (5.5 * \Side, 1.75 * \Side);
\end{scope}

\end{tikzpicture}}
\caption{Element $j$ can displace element $i$ by one place (shown with the dashed arrow) iff $|D_{t+1}| \geq |\sigma_t^{-1}(i) - \sigma_t^{-1}(j)|$ (corresponding to the solid arrows).}
\label{fig:psi_case_2_j_moves_i}
\end{subfigure}
\hfill
\begin{subfigure}[t]{0.45\textwidth}
\centering
\scalebox{0.85}{
\begin{tikzpicture}
\newcommand{\Side}{0.75}
\begin{scope}[shift=({0, 0})]
\foreach \x in {6,...,9} 
  \draw[draw=black,fill=green!25!white] (\x * \Side,0) rectangle ++(\Side,\Side);

\draw[draw=black] (5 * \Side,0) rectangle ++(\Side,\Side);

\foreach \x in {-1,...,4} 
  \draw[draw=black,fill=red!25!white] (\x * \Side,0) rectangle ++(\Side,\Side);

\node[anchor=north] at (5.5 * \Side, 0.8 * \Side) {$i$};

\draw[{Circle[scale=0.7]}-stealth,black,very thick] 
(5.5 * \Side, 0.1 * \Side) -- 
(5.5 * \Side, -0.625 * \Side) --
(8.5 * \Side, -0.625 * \Side) --
(8.5 * \Side, - 1.25 * \Side);

\draw [decorate,decoration={brace,amplitude=5pt,mirror,raise=-4ex}]  (8.5 * \Side,1.95 * \Side) -- (5.5 * \Side,1.95 * \Side) node[midway,yshift=-0.5em]{$\delta_t(i)$};

\begin{scope}[opacity=0.0]
\path[thick,->] (3.5 * \Side, 1.75 * \Side) edge[bend left] (6.5 * \Side, 1.75 * \Side);
\path[thick,->] (3.5 * \Side, 1.75 * \Side) edge[bend left] (7.5 * \Side, 1.75 * \Side);
\path[thick,->] (3.5 * \Side, 1.75 * \Side) edge[bend left] (8.5 * \Side, 1.75 * \Side);
\path[thick,->] (3.5 * \Side, 1.75 * \Side) edge[bend left] (9.5 * \Side, 1.75 * \Side);
\path[thick,->] (3.5 * \Side, 1.75 * \Side) edge[bend left] (9.5 * \Side, 1.75 * \Side);
\end{scope}

\end{scope}

\end{tikzpicture}}
\caption{The elements $j$ that could decrease $\delta_t(i)$ if selected shown in green and the others in red. Note that there are always at least $\delta_t(i)$ green elements. }
\label{fig:psi_case_2_many_green}
\end{subfigure}\hfill
\caption{Visualizations for Case 2, where element $j \neq i$ was selected in step $t+1$.}
    \label{fig:psi_case_2}
\end{figure}

\subparagraph{Case 2 [$\mathcal{T}_{t+1}(i)$ does not hold]:} Let $j$ be the element that was selected in step $t+1$. Element $i$ will be displaced by one place iff $|D_{t+1}| \geq |\sigma_t^{-1}(i) - \sigma_t^{-1}(j)|$ and in the direction of~$i$ (\cref{fig:psi_case_2_j_moves_i}). We can upper bound this probability using Markov's inequality as follows 
\begin{align} 
\Pro{D_{t+1} \geq |\sigma_t^{-1}(i) - \sigma_t^{-1}(j)| } 
  & = \Pro{e^{\lambda D_{t+1}} \geq e^{\lambda |\sigma_t^{-1}(i) - \sigma_t^{-1}(j)|}} \notag \\
  & \leq \Ex{e^{\lambda D_{t+1}}} \cdot e^{-\lambda |\sigma_t^{-1}(i) - \sigma_t^{-1}(j)|} \notag \\
  & \leq c' \cdot e^{-\lambda |\sigma_t^{-1}(i) - \sigma_t^{-1}(j)|}, \label{eq:displacement_distribution_markov}
\end{align}
and similarly\begin{align*}
\Pro{D_{t+1} \leq -|\sigma_t^{-1}(i) - \sigma_t^{-1}(j)| }
  & = \Pro{e^{\lambda D_{t+1}} \leq e^{-\lambda |\sigma_t^{-1}(i) - \sigma_t^{-1}(j)|}} \\
  & \leq \Ex{e^{-\lambda D_{t+1}}} \cdot e^{-\lambda |\sigma_t^{-1}(i) - \sigma_t^{-1}(j)|}  \\
  & \leq c' \cdot e^{-\lambda |\sigma_t^{-1}(i) - \sigma_t^{-1}(j)|}.
\end{align*}
Note that the number of elements $j$ that increase and the number of elements that decrease $\delta_t(i)$ need not be the same. We pessimistically assume that there are more elements on the side that increase $\delta_t(i)$, and without loss of generality we assume these to be on the left side.
Further, on the side that decrease $\delta_t(i)$ there should be at least $\delta_t(i)$ elements, namely the elements between positions $i$ and~$\sigma_t^{-1}(i)$ (\cref{fig:psi_case_2_many_green}). Hence, 
\clearpage
\begin{align*}
& \Ex{\psi_{t+1}'(i) - \psi_t(i) \, \middle| \, \mathcal{F}_t, \neg \mathcal{T}_{t+1}(i) } \\
 & \quad \leq \psi_t(i) \cdot \Bigg( \sum_{\ell \geq 1} \frac{1}{2(n{-}1)} \cdot \Pro{D_{t+1} \geq \ell} \cdot (e^{\alpha} - 1)  + \sum_{\ell =  1}^{\delta_t(i)} \frac{1}{2(n{-}1)} \cdot \Pro{D_{t+1} \leq -\ell} \cdot (e^{-\alpha} - 1) \Bigg) \\
 & \quad \stackrel{(a)}{=} \psi_t(i) \cdot \Bigg( \frac{1}{4(n{-}1)} \Ex{|D_{t+1}|} \cdot (e^{\alpha} + e^{-\alpha} - 2)  - \sum_{\ell \geq \delta_t(i) + 1} \frac{1}{2(n{-}1)} \cdot \Pro{D_{t+1} \leq -\ell} \cdot (e^{-\alpha} - 1)\Bigg) \\
 & \quad \stackrel{(b)}{\leq} \psi_t(i) \cdot \Bigg( \frac{1}{2(n{-}1)}\Ex{|D_{t+1}|} \cdot \alpha^2 + \sum_{\ell \geq \delta_t(i) + 1} \frac{1}{2(n{-}1)} \cdot \Pro{D_{t+1} \leq -\ell} \cdot \alpha \Bigg) \\
 & \quad \stackrel{(c)}{\leq} \psi_t(i) \cdot \Bigg( \frac{1}{n-1} \cdot \frac{2c'}{\lambda} \cdot \alpha^2 + e^{-\lambda \delta_t(i)}\cdot \sum_{\ell \geq 1} \frac{1}{2(n{-}1)} \cdot c' \cdot e^{-\lambda \ell} \cdot \alpha \Bigg) \\
 & \quad \stackrel{(d)}{\leq} \psi_t(i) \cdot \frac{\alpha^2}{n{-}1} \cdot \frac{2c'}{\lambda} + \frac{\alpha}{n{-}1} \cdot \frac{c'}{\lambda},
\end{align*}
using in $(a)$ that $\frac{1}{2} \Ex{|D_{t+1}|} = \sum_{\ell \geq 1} \Pro{D_{t+1} \geq \ell} = \sum_{\ell \geq 1} \Pro{D_{t+1} \leq -\ell}$ since $\Ex{D_{t+1}} = 0$, in $(b)$ that $e^{\alpha} \leq 1 + \alpha + \alpha^2$ (since $\alpha \leq 1$) and $1 - \alpha \leq e^{-\alpha}$, in $(c)$ the inequality~\cref{eq:displacement_distribution_markov} and that $\Ex{|D_{t+1}|} \leq \frac{1}{\lambda} \cdot \Ex{e^{\lambda |D_{t+1}|}} \leq \frac{c'}{\lambda}$, %
and in $(d)$ that $\sum_{\ell \geq 1} e^{-\lambda \ell} \leq \frac{1}{1 - e^{-\lambda}} \leq \frac{2}{\lambda}$ (since $\lambda \leq 1$) and $\psi_t(i) = e^{\alpha \delta_t(i)}$ (and $\alpha \leq \lambda$). %

Thus, combining the above two cases using \cref{eq:mgf_combine}, we have that
\begin{align*}
\Ex{\psi_{t+1}'(i) \,\middle| \, \mathcal{F}_t}
 & \leq \psi_t(i) \cdot \left(1 + \frac{\alpha^2}{n} \cdot \frac{3c'}{\lambda^2} \right) + \frac{\alpha}{n} \cdot \frac{4c'}{\lambda},
\end{align*}
and by aggregating
\begin{align} 
\Ex{\Psi_{t+1}' \,\middle| \, \mathcal{F}_t}
 & \leq \Psi_t \cdot \left(1 + \frac{\alpha^2}{n} \cdot \frac{3c'}{\lambda^2} \right) + \alpha \cdot \frac{4c'}{\lambda} \leq \Psi_t \cdot e^{\frac{\alpha^2}{n} \cdot \frac{3c'}{\lambda^2}} + \alpha \cdot \frac{4c'}{\lambda}. \label{eq:mgf_drop_psi}
\end{align}
After the swaps due to mixing, the $\adm$ operation may cause further swaps. By \cref{lem:admissible_swaps_are_good}, these swaps do not increase the sum of the two displacements of the swapped elements, nor the largest displacement of the two. Thus, by the convexity of $\Psi_t$ (being the sum of convex functions), this implies that the $\adm$ operation does not increase the potential. Further, the $\lopt$ operation does not affect $\sigma_t$, and so we conclude that the drop inequality \cref{eq:mgf_drop_psi} holds over the entire mixing step, i.e.,
\begin{align*}
\Ex{\Psi_{t+1} \,\middle| \, \mathcal{F}_t}
 & \leq \Psi_t \cdot e^{\frac{\alpha^2}{n} \cdot \frac{3c'}{\lambda^2}} + \alpha \cdot \frac{4c'}{\lambda}.
\end{align*}

\paragraph{\it Statement (ii).} Applying for $nk$ steps, we get that 
\begin{align*}
\Exp\left[\Psi_{nk} \right]
 &  \leq n \cdot e^{nk \cdot \frac{\alpha^2}{n} \cdot \frac{3c'}{\lambda^2}} + \alpha \cdot \frac{4c'}{\lambda} \cdot \left(1 + e^{\frac{\alpha^2}{n} \cdot \frac{3c'}{\lambda^2}} %
 + \ldots + e^{(nk-1) \cdot \frac{\alpha^2}{n} \cdot \frac{3c'}{\lambda^2}} \right) \\
 & \leq n \cdot e^{k \cdot \alpha^2 \cdot \frac{3c'}{\lambda^2}} + \alpha \cdot \frac{4c'}{\lambda} \cdot \frac{e^{k \cdot \alpha^2 \cdot \frac{3c'}{\lambda^2}} - 1}{e^{\frac{\alpha^2}{n} \cdot \frac{3c'}{\lambda^2}} - 1} \\
 & \stackrel{(a)}{\leq} n \cdot e^{k \cdot \alpha^2 \cdot \frac{3c'}{\lambda^2}} + \alpha \cdot \frac{4c'}{\lambda} \cdot \frac{e^{k \cdot \alpha^2 \cdot \frac{3c'}{\lambda^2}} - 1}{\frac{\alpha^2}{n} \cdot \frac{3c'}{\lambda^2}} \\
 & \stackrel{(b)}{\leq} 3 \frac{n}{\alpha} \cdot e^{k \cdot \alpha^2 \cdot \frac{3c'}{\lambda^2}} \\
 & \stackrel{(c)}{\leq} n^3 \cdot e^{k \cdot \alpha^2 \cdot \frac{3c'}{\lambda^2}},
\end{align*}
using in $(a)$ that $e^x \geq 1 + x$, in $(b)$ that $\lambda \leq 1$ and in $(c)$ that $\alpha \geq 1/n$.

\paragraph{\it Statement (iii).} By Markov's inequality, we have that
\[
\Pr\left[ \Psi_{nk} \leq n^{9} \cdot e^{k \cdot \alpha^2 \cdot \frac{3c'}{\lambda^2}} \right] \geq 1 - n^{-6}.
\]
Therefore, with probability at least $1-n^{-6}$, for any $i \in [n]$, 
\begin{align} \label{eq:mgf_delta_bound}
\delta_{nk}(i) 
 \leq \frac{1}{\alpha} \cdot \log \Psi_{nk} 
 \leq \frac{1}{\alpha} \cdot \left(9 \log n + k \cdot \alpha^2 \cdot \frac{3c'}{\lambda^2}\right) 
  = 9 \cdot \frac{\log n}{\alpha} + k \cdot \frac{3c'}{\lambda^2} \cdot \alpha.
\end{align}
By choosing $\alpha$ such that $9 \cdot \frac{\log n}{\alpha} = k \cdot \frac{3c'}{\lambda^2} \cdot \alpha$ and using that $k \in [\frac{3}{c'} \cdot \log n, n]$, we get that
\[
\alpha = \sqrt{\frac{3\lambda^2}{c'} \cdot \frac{\log n}{k} } \in [1/n, \lambda].
\]
Substituting this in \cref{eq:mgf_delta_bound}, we conclude that
\[
  \delta_{n k}(i) \leq 2 \cdot 9 \cdot \frac{\log n}{\alpha} \leq \sqrt{108 \cdot \frac{c'}{\lambda^2} \cdot k \cdot \log n}. \qedhere
\]
\end{proof}

\subsection{Proof of Bound on Maximum Deviation}
\label{sec:mdev_main_theorem}

Now, we are ready to prove the upper bound for the maximum deviation.
\begin{theorem}\label{thm:mdev_whp} 
Consider a process with bounded average rate of mixing steps for any $b \geq 1$ with a perturbation distribution $\DD$ satisfying for $D \sim \mathcal{D}$ that $\Ex{D} = 0$ and $\Ex{e^{3\lambda |D|}} \leq c'$ for some constants $\lambda \in (0, 1/3]$ and $c' \geq 1$. Then, there exists a constant $c := c(c', \lambda) > 0$ such that for any $m \geq 128 \cdot (b+1) \cdot n^2$, it holds that\[
  \Pro{\mdev(\pi_m) \leq c \cdot b \cdot \log n} \geq 1 - n^{-2}.
\]
\end{theorem}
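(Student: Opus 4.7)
I would divide the $m$ steps into $K+1$ consecutive phases, where at the start of each phase we perform the intervention described in the paper: reset $\tau_t, \sigma_t := \id_n$ and replace $l_t$ by $\lopt(l_t, \id_n)$. Since $\lopt$ only swaps items with adjacent gaps, this leaves $\pi_t$ unchanged, and since $\tau_0 = \sigma_0 = \id_n$ are updated by the same rule within a phase, we have $\tau_t = \sigma_t$ throughout, so the $\Psi$ bound of \cref{lem:mgf_psi_and_target_displacement_bounds} controls both. I fix $\alpha = 1$ and $d = 21$, so that $\alpha(d-1) = 20 \geq \log 20$ and the hypotheses of \cref{lem:new_sorting_step_drift} are satisfied. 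A simple ``item-by-item'' argument shows that after the intervention $\mdsp(l_t, \tau_t) \leq d \cdot \mdev(\pi_t)$ and hence $\Phi_t \leq n \cdot e^{21 \Delta}$ where $\Delta := \mdev(\pi_t)$ is the max deviation at the start of the phase.

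For a phase of length $T$ starting with max deviation $\Delta$, the bounded-average-rate hypothesis yields at least $(T-bn)/(b+1) \geq T/(2(b+1))$ sorting steps (once $T \geq 2bn$). Combining \cref{lem:new_sorting_step_drift} with Markov's inequality, taking $T \geq 8(b+1)n \cdot (21 \Delta + 4\log n)$ gives $\Phi_{t+T} \leq 1$ with probability $\geq 1 - n^{-3}$, hence $\mdsp(l_{t+T}, \tau_{t+T}) \leq \log 2$. Simultaneously, \cref{lem:mgf_psi_and_target_displacement_bounds}$(iii)$ applied with $k = T/n$ gives $\mdev(\tau_{t+T}) = \mdev(\sigma_{t+T}) \leq C_\Psi \sqrt{(T/n) \log n}$ with probability $\geq 1 - n^{-6}$. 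The triangle inequality $|j - di| \leq |j - d\tau(i)| + d|\tau(i)-i|$ combined with \cref{lem:dsp_bounds_dev} gives $\mdev(\pi) \leq (2/d)\mdsp(l,\tau) + 2\mdev(\tau)$, so at the end of the phase I obtain the recursion $\Delta' \leq O(\log n) + O\bigl(\sqrt{b \Delta \log n + b \log^2 n}\bigr)$.

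Iterating from $\Delta_0 \leq n$, this recursion $\Delta_{k+1} \lesssim \sqrt{b \Delta_k \log n}$ has fixed point $\Theta(b \log n)$ and converges doubly-exponentially, so $\Delta_K = O(b \log n)$ after $K = O(\log \log n)$ phases; the corresponding lengths $T_k = O((b+1) n \Delta_{k-1})$ form a geometric-type sum dominated by $T_1 = O((b+1)n^2)$, and with careful accounting $\sum_{k=1}^K T_k \leq 100(b+1) n^2$. Any remaining time up to $m$ is covered by ``maintenance'' phases of length $T_{\mathrm{maint}} = \Theta((b+1)^2 n \log n)$, tuned so that both the $\Phi$-drop and $\Psi$-growth give $O(b\log n)$; within each such phase, $\mdev(\pi_t) = O(b \log n)$ at \emph{every} intermediate step because $\Phi_t$ is deterministically non-increasing and the $\Psi$ bound extends to a maximal statement via the same Markov-on-exponentials argument used in the proof of \cref{lem:mgf_psi_and_target_displacement_bounds}. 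A union bound over the $K + O(n/(b \log n))$ phases, each failing with probability at most $2n^{-3}$, yields total failure probability $\leq n^{-2}$. The main obstacle is the delicate constant accounting needed to fit the reduction phases plus all maintenance phases within the budget $128(b+1)n^2$ while keeping per-phase failure small enough to union-bound; a secondary subtlety is the (easy but explicit) bound $\mdsp(\lopt(l,\id_n),\id_n) \leq d \cdot \mdev(\pi)$ required at each reset.
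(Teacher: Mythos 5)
Your approach is genuinely different from the paper's actual proof, though it matches the informal description in the Technical Overview. The paper's proof of \cref{thm:mdev_whp} does \emph{not} use decreasing-length phases with a recursion on $\Delta$. Instead it fixes a single phase length $nk$ with $k = \Theta\bigl((b+1)^2 \log n\bigr)$ for the entire run, restricts attention to only the last $128(b+1)n^2$ steps, and takes the crude initial bound $\Phi_0 \le n\cdot e^{\alpha n d}$. The initial potential is then absorbed by geometric decay over the $m/(nk)$ identical groups, and the expectation of $\Phi$ after all groups (including the multiplicative jump $e^{\alpha d M}$ incurred at each reset, tamed via a \emph{capped process}) is bounded in a single computation, followed by one application of Markov. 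Your approach, by contrast, needs per-phase conditional events and a chained argument. This is morally valid, but it does not sidestep the paper's difficulties; it trades them for others.

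There are three concrete gaps you should be aware of. First, your constant accounting does not actually fit the budget: with $\alpha=1$, $d=21$ and $\Delta_0 \leq n$, your own formula gives $T_1 \geq 8(b+1)n\cdot 21n = 168(b+1)n^2 > 128(b+1)n^2$, so the very first reduction phase already overshoots. Second, and more seriously, your union bound is over $K + O(m/(b^2 n \log n))$ phases, which is not $O(1/\poly(n))$ when $m$ is super-polynomial in $n$ --- the theorem allows \emph{any} $m \geq 128(b+1)n^2$. The paper avoids this cleanly by analyzing only the last $128(b+1)n^2$ steps (starting fresh from the trivial bound on $\Phi$), so that the number of phases in the analysis is bounded independently of $m$; you need this same device, after which the maintenance phases disappear entirely. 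Third, your argument leans on the inequality $\mdsp(\lopt(l,\id_n),\id_n) \leq d\cdot\mdev(\pi)$ at each reset; this is not Lemma~\ref{lem:dsp_bounds_dev} (which goes the other way) and is not proved in the paper. A careful chain argument along the maximal gapless block does give $\mdsp \leq 2d\cdot\mdev(\pi)$, so the claim is repairable up to a factor of 2, but as stated it is an unproved step --- one the paper deliberately avoids by never relating $\mdsp$ to $\mdev$ in that direction, relying only on the trivial $\mdsp \leq N < (n+1)d$. A final minor point: Lemma~\ref{lem:mgf_psi_and_target_displacement_bounds}~$(iii)$ requires $k \in [\tfrac{3}{c'}\log n, n]$, but your first reduction phase has $k = T_1/n = \Theta((b+1)n)$, which is outside the valid range once $b$ is non-constant; the paper's fixed $k = \Theta((b+1)^2\log n)$ keeps $k$ well inside the range for the regime of interest.
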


\begin{proof}
Let $p = \frac{1}{4 \cdot (b+1)}$. We will analyze the process for the last $m = \frac{16d}{p} \cdot n^2$ steps, where $d > 1$ (to be specified below). We will split these steps into $m/(nk)$ groups of $nk$ steps each, where $k = \frac{16^2 \cdot 108 \cdot d^2}{p^2} \cdot \frac{c'}{\lambda^2} \cdot \log n$. At the end of each group of steps we will reset the targets to their original positions, so $\sigma_t = \tau_t = \id_n$. We will consider a \emph{capped process} in which if any of the mixing steps increases the displacement of the target within the current group to more than $M = \frac{1}{\lambda} \sqrt{108 \cdot c' \cdot k \cdot \log n}$, then the current and any future mixing step is not performed. By \cref{lem:mgf_psi_and_target_displacement_bounds}~$(iii)$, we have that
\begin{align}\label{eq:mdev_capped_process_agrees} 
  \Pr\Big[ \bigcap_{t \in [0, m]} \Big\{\max_{i \in [n]} \delta_t(i) \leq M \Big\} \big]
   & \geq 1- n^{-6} \cdot m  \geq 1 - \frac{1}{2n^3},
\end{align}
and so the original process agrees with the capped process at all steps 
with probability at least $1 - \frac{1}{2} n^{-3}$. So, from now on we will be working with the capped process.

Consider the potential $\Phi_t = \Phi_t(\alpha)$ with $\alpha = \frac{\log 20}{d - 1}$. Let $\mathcal{G}_{t+1}$ be the event that step $t+1$ was a sorting step. By \cref{lem:new_sorting_step_drift}, since $\alpha \in [\frac{\log 20}{d - 1}, 1]$, in any step $t$, it holds that
\[
  \Ex{\Phi_{t+1} \, \middle| \, \mathcal{F}_t} \leq \Phi_t \cdot \left( 1 - \frac{\alpha}{4 \cdot (n-1)} \cdot \Ex{\mathbf{1}_{\mathcal{G}_{t+1}} \middle| \mathcal{F}_t } \right).
\]
We denote by $\Phi_{t + nk}'$ the value of the potential before resetting the targets. By aggregating over $nk$ steps, for any step $t$ being the start of a group, we have that
\begin{align}
\Ex{ \Phi_{t+nk}' \, \middle| \, \mathcal{F}_t } 
   & \leq \Phi_t \cdot \prod_{i = 1}^{nk} \left( 1 - \frac{\alpha}{4 (n-1)} \cdot \Ex{\mathbf{1}_{\mathcal{G}_{t+i}} \middle| \mathcal{F}_t} \right) \notag \\ 
   & \leq \Phi_t \cdot e^{-\frac{\alpha}{4(n-1)} \cdot \sum_{i = 1}^{nk} \Ex{\mathbf{1}_{\mathcal{G}_{t+i}} \middle| \mathcal{F}_t}} \notag \\
   & \leq \Phi_t \cdot e^{-\frac{p\alpha}{4} \cdot \frac{nk}{n-1}}, \label{eq:phi_bound_with_reallocation}
\end{align}
where in the last inequality we used that $nk \geq (b+1) \cdot n$ and so by \cref{lem:many_sorting} there is at least a $p$ fraction of sorting steps among the $nk$ steps.

Now, recall that in the capped process, at the end of the~$nk$ steps, the displacement of any single target changes by at most $M$, so each term in the potential increases by at most a factor of $e^{\alpha \cdot d \cdot M}$ plus an additive term of $e^{\alpha \cdot d \cdot M}$. Therefore, %
\begin{align*}
\Ex{ \Phi_{t+nk} \, \middle| \, \mathcal{F}_t } 
  & \leq \Phi_t \cdot e^{-\frac{p\alpha}{4} \cdot \frac{nk}{n-1} + \alpha \cdot d \cdot M } + n \cdot e^{\alpha \cdot d \cdot M} 
    \leq \Phi_t \cdot e^{-\frac{p\alpha}{8} \cdot k} + n \cdot e^{\alpha \cdot d \cdot M},
\end{align*}
using that $\frac{p\alpha}{8} \cdot k \geq 2\alpha \cdot d \cdot M$ (since $k = \frac{16^2 \cdot 108 \cdot d^2}{p^2} \cdot \frac{c'}{\lambda} \cdot \log n$).

Trivially, at any step, each element is at most $n \cdot d$ units away from their target, so we have that $\Phi_0 \leq n \cdot e^{\alpha \cdot nd}$. By taking the expectation over the $\frac{m}{nk}$ groups we have that 
\begin{align*}
\Ex{ \Phi_{m} } 
   & \stackrel{(a)}{\leq} n \cdot e^{\alpha \cdot nd} \cdot e^{-2\alpha \cdot nd} + n \cdot e^{\alpha \cdot d \cdot M} \cdot \big( 1 + e^{-\frac{p\alpha}{8} \cdot k} + e^{-2 \cdot \frac{p\alpha}{8} \cdot k} + \ldots \big)\\
   & \leq 1 + n \cdot e^{\alpha \cdot d \cdot M} \cdot \frac{1}{1 - e^{-\frac{p\alpha}{8} \cdot k}} \\
   & \stackrel{(b)}{\leq} 2n \cdot e^{\alpha \cdot d \cdot M},
\end{align*}
using in $(a)$ that $\frac{p\alpha}{8} \cdot k \cdot \frac{m}{nk} \geq 2\alpha \cdot nd$ and in $(b)$ that $e^{-\frac{p\alpha}{8} \cdot k} \leq \frac{1}{2}$ (since $k = \omega(1/p)$).

By Markov's inequality we have that \[
  \Pro{\Phi_m \leq 2n^4 \cdot e^{\alpha \cdot d \cdot M}} \geq 1 - n^{-3},
\]
and so we can deduce that
\[
  \Pro{\mdsp(l_m) \leq \frac{1}{\alpha} (\log 2 + 4 \log n) + d \cdot M} \geq 1 - n^{-3}.
\]
Using \cref{lem:dsp_bounds_dev}, we have that for $c = 32 \cdot \frac{16 \cdot 108 \cdot d \cdot c'}{\lambda^2}$
(since $\frac{5}{\alpha} \log n = \frac{20}{\log 20} \log n \leq d \cdot M$) the capped process satisfies,
\[
  \Pro{\mdev(\pi_m) \leq  c \cdot b \cdot \log n} \geq 1 - n^{-3}. 
\]
Finally, by taking the union-bound with \cref{eq:mdev_capped_process_agrees}, the capped process agrees with the original process and so  we conclude that the same bound also holds for the maximum deviation of the original process
\[
  \Pro{\mdev(\pi_m) \leq  c \cdot b \cdot \log n} \geq 1 - \frac{1}{n^3} - \frac{1}{2n^{3}} \geq 1 - n^{-2}.
\]
By setting $d = 2$, we get the conclusion.
\end{proof}
From the preceding result bounding the maximum deviation, we immediately obtain an $O(n \cdot b \cdot \log n)$ bound on the total deviation. In \cref{thm:total_deviation_whp} we will improve this to $O(n)$ for any constant $b \geq 1$.

\begin{remark}
Consider the process where every sorting step is followed by $b \in [1, \frac{n}{\log^2 n}]$ mixing steps. Then by \cref{lem:max_and_total_deviation_lb}~$(i)$ the  $O(b \cdot \log n)$ bound on the maximum deviation implied by~\cref{thm:mdev_whp} is asymptotically tight.
\end{remark}

\begin{remark}
The proof in \cref{thm:mdev_whp} also applies to a model where every $nk$ steps each rank can be perturbed adversarially by at most $M$ places.
\end{remark}

In the proof of \cref{thm:mdev_whp}, we assumed the worst possible initial maximum deviation, i.e., that of~$nd$ from their target positions. This allowed us to prove that the convergence time to a configuration with $O(b \cdot \log n)$ maximum deviation is \Whp~$O(n^2 \cdot b)$. We can improve this to $O(n \cdot b \cdot \mdev(\pi_0))$ for sufficiently large initial maximum deviation $\mdev(\pi_0)$. By \cref{lem:convergence_time_lb}, this is tight for any constant $b \geq 1$.

\begin{corollary} \label{cor:max_deviation_convergence_time} 
Consider a process with bounded average rate of mixing steps for any $b \geq 1$ with a perturbation distribution $\DD$ satisfying for $D \sim \mathcal{D}$ that $\Ex{D} = 0$ and $\Ex{e^{3\lambda |D|}} \leq c'$ for some constants $\lambda \in (0, 1/3]$ and $c' \geq 1$. Then, there exists a constant $c > 0$ such that for any $m \geq 320 \cdot (b+1) \cdot n \cdot \mdev(\pi_0)$, we have that\[
  \Pro{\mdev(\pi_m) \leq c \cdot b \cdot \log n} \geq 1 - n^{-2}.
\]
\end{corollary}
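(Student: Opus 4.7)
The plan is to revisit the proof of \cref{thm:mdev_whp} and replace its worst-case bound $\Phi_0 \leq n \cdot e^{\alpha n d}$ (which assumes every element can be as far as $nd$ from its target) by one that depends on $\Delta := \mdev(\pi_0)$. Concretely, I would initialise the auxiliary process with $\tau_0 = \sigma_0 = \id_n$ and take $l_0 = \lopt(l_0', \id_n)$, where $l_0'$ is the natural $d$-padding that places $\pi_0(i)$ at position $d\cdot i$ for each $i\in [n]$ with gaps elsewhere. Since $\lopt$ never increases any individual displacement, every non-gap element in $l_0$ has displacement at most $|d\cdot i - d\cdot \pi_0(i)| = d \cdot |i - \pi_0(i)| \leq d\Delta$, yielding the improved bound $\Phi_0 \leq n \cdot e^{\alpha d \Delta}$.

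With this bound in hand, I would re-run the group-wise analysis of \cref{thm:mdev_whp} essentially verbatim, with $d = 2$, $\alpha = \frac{\log 20}{d-1}$, $p = \frac{1}{4(b+1)}$ and $k = \Theta(b^2 \log n / \lambda^2)$ as before, splitting the $m$ steps into $m/(nk)$ groups and resetting $\tau,\sigma$ to $\id_n$ between groups. The capped-process argument via \cref{lem:mgf_psi_and_target_displacement_bounds} still caps the per-group target displacement at $M = \Theta(\sqrt{k\log n})$, and combining it with the drop inequality from \cref{lem:new_sorting_step_drift} gives
\[
  \Ex{\Phi_m} \leq \Phi_0 \cdot e^{-\frac{p\alpha m}{8n}} + O\!\left( n\cdot e^{\alpha d M} \right).
\]
For the first term to be dominated by the second, it suffices that $\alpha d \Delta - \frac{p\alpha m}{8n} \leq \alpha d M + O(1)$, i.e.\ $m \gtrsim \frac{8 d}{p}\cdot n\cdot \Delta = 64(b+1)\cdot n \cdot \Delta$, so the threshold $320(b+1) n\Delta$ in the statement leaves comfortable slack. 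Applying Markov's inequality to $\Phi_m$ and then \cref{lem:dsp_bounds_dev} converts this into the desired $O(b\log n)$ bound on $\mdev(\pi_m)$ with probability at least $1 - n^{-2}$, exactly as at the end of the proof of \cref{thm:mdev_whp}.

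The main obstacle I expect is the degenerate regime in which $\Delta$ is so small that $m = 320(b+1)n\Delta$ is less than one group length $nk$ and the partition becomes trivial. I would handle this by a simple case split: if $\Delta \geq cn$ for an appropriate constant $c$ (so that $320(b+1) n\Delta \geq 128(b+1) n^2$), the claim follows immediately from \cref{thm:mdev_whp}; otherwise $\Delta < cn$ and the analysis above applies, possibly after taking $m$ to be the maximum of the stated threshold and one group length, which only affects constants. Apart from this bookkeeping, no new ideas beyond those already developed in \cref{sec:mdev_phi_potential,sec:mdev_psi_potential} are needed—the improvement is purely in the initial potential bound.
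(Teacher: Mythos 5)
Your approach is exactly what the paper intends; the paper gives no explicit proof of this corollary, only remarking that it follows from \cref{thm:mdev_whp} by sharpening the worst-case initial potential bound, and your observation that initializing $l_0$ via $\lopt$ on the $d$-padding placing $\pi_0(i)$ at position $d\cdot i$ yields $\Phi_0 \leq n\cdot e^{\alpha d\cdot \mdev(\pi_0)}$ is precisely that sharpening. The one imprecision is your handling of the small-$\Delta$ regime: you cannot ``take $m$ to be the maximum of the stated threshold and one group length,'' since $m$ is given, not chosen. The clean fix is to note that $\Phi_t$ is deterministically non-increasing between target resets (\cref{lem:new_sorting_step_drift}), so whenever $m$ lies inside the first group you get $\Phi_m \leq \Phi_0 \leq n\cdot e^{\alpha d\Delta}$, hence $\mdsp(l_m,\tau_m) = O(\log n + \Delta)$, which together with the whp cap $\mdev(\tau_m)\leq M = O(b\log n)$ from \cref{lem:mgf_psi_and_target_displacement_bounds}~$(iii)$ already yields the bound with no telescoping needed (indeed $\Delta = O(b\log n)$ in this regime, since $m < nk$ and $m \geq 320(b+1)n\Delta$ force $\Delta = O(k/(b+1))$); the multi-group telescoping you describe applies verbatim whenever there is at least one full group.
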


\begin{remark}
Consider the process where every sorting step is followed by $b$ mixing steps for any constant~$b \geq 1$. Then, by \cref{lem:convergence_time_lb} the  $O(b \cdot n \cdot \mdev(\pi_m))$ bound on the convergence time implied by~\cref{thm:mdev_whp} is asymptotically tight.
\end{remark}

\section{Tight Bound on the Total Deviation} \label{sec:total_deviation}
In this section, we will extend the ideas used in proving \cref{thm:mdev_whp} to show that \Whp~the total deviation is $O(n)$ for any constant $b \geq 1$. This matches the lower bound in \cite[Theorem 1]{AnagnostopoulosKMU11} for any constant $b \geq 1$.

We cannot directly apply the idea of resetting the targets for intervals of length $\ell = o(n \log n)$, because in this regime the worst-case displacement due to mixing is $\omega(\ell/n)$ and no longer $O(\sqrt{(\ell / n) \log n})$. For instance, when $\ell = \Theta(n)$, the maximum displacement due to mixing is \Whp~$\Theta(\log n/\log \log n)$. Therefore, the drop of the potential function by a factor of $e^{-\Theta(\ell/n)}$ can no longer counteract the worst-case displacement of $\omega(\ell/n)$, which would increase the potential by a factor of $e^{\omega(\ell/n)}$. So, using this technique we cannot show that overall the potential decreases in expectation over these $\ell$ steps.

To overcome this, we will take advantage of the fact that there are not too many elements with a large displacement. For elements with a small $O(\ell)$ displacement, we will reset their targets and handle their contribution using the argument in the proof of~\cref{thm:mdev_whp}. For the rest of the elements that have a large displacement, we prove that they don't contribute much to the total deviation. For instance, for $\ell = \Theta(n (\log n)^{2/3})$ we prove that the number of elements~$i$ with $\delta_t(i) = \Omega((\log n)^{2/3})$ is $O(n \cdot e^{-(\log n)^{1/3}})$ and so they contribute $o(n)$ to the total displacement. Applying this idea $\Theta(\log \log \log n)$ times for disjoint intervals of decreasing length and aggregating all the contributions of the large displacements, we are able to prove the $O(n)$ bound on total deviation.

In \cref{sec:tdev_psi_concentration}, we prove that the $\Psi$ potential is concentrated and in \cref{sec:tdev_large_elements_bound} we use this to prove the bound on the number (and total displacement) of elements with a large displacement. Finally, in \cref{sec:tdev_main_theorem}, we combine these to prove the $O(n)$ bound on the total deviation.

\subsection{Concentration of the \texorpdfstring{$\Psi$}{Ψ} Potential}
\label{sec:tdev_psi_concentration}

In \cref{sec:mdev_psi_potential}, we analyzed the expectation of the $\Psi$ potential. In the lemma below we will show that $\Psi$ is also concentrated. We will use this lemma in \cref{sec:tdev_large_elements_bound} to bound the sum of large displacements.

\begin{lemma}[Concentration of $\Psi$] \label{lem:psi_concentration}
Consider the auxiliary process $(l_t,\tau_t,\sigma_t)_{t\geq 0}$ starting with the identity permutation $\sigma_0 = \id_n$, with a perturbation distribution $\DD$ satisfying for $D \sim \mathcal{D}$ that $\Ex{D} = 0$ and $\Ex{e^{3\lambda |D|}} \leq c'$ for some constants $\lambda \in (0, 1/3]$ and $c' \geq 1$. Further, let $C \in [\frac{3}{\lambda}, n]$ be arbitrary. Then, for any $k \in [1, C \cdot \log n]$ and $\Psi_t = \Psi_t(\alpha)$ with any real $\alpha \in \big[\frac{1}{n},  \frac{\lambda}{6 \cdot 2 \cdot \sqrt{108 \cdot c' \cdot C}}\big]$, it holds that 
\[
\Pro{ \Psi_{nk} \leq 4 \cdot \frac{n}{\alpha} \cdot e^{k \cdot \alpha^2 \cdot \frac{3c'}{\lambda^2}}} 
  \geq 1 - n^{-4}.
\]
\end{lemma}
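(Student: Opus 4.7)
Lemma~\ref{lem:mgf_psi_and_target_displacement_bounds}(ii) already supplies the expectation bound $\Ex{\Psi_{nk}(\alpha)} \leq (3n/\alpha)\,e^{B}$ with $B=k\alpha^2\cdot 3c'/\lambda^2$, so the task is to upgrade this to a tail bound that loses only a small constant factor with probability at least $1-n^{-4}$. The plan is to apply Markov's inequality to a higher integer power of $\Psi_{nk}(\alpha)$, and to reduce that moment back to a first moment that Lemma~\ref{lem:mgf_psi_and_target_displacement_bounds}(ii) already controls.

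First, I would bound $\Psi_t(\alpha)^q$ by the power-mean (Jensen) inequality at exponent $q\geq 1$:
\[
\Psi_t(\alpha)^q
= \Bigl(\sum_{i\in[n]} e^{\alpha\,\delta_t(i)}\Bigr)^q
\leq n^{q-1}\sum_{i\in[n]} e^{q\alpha\,\delta_t(i)}
= n^{q-1}\,\Psi_t(q\alpha),
\]
and then take expectations and apply Lemma~\ref{lem:mgf_psi_and_target_displacement_bounds}(ii) with $q\alpha$ in place of $\alpha$. This is exactly where the stronger MGF assumption $\Ex{e^{3\lambda|D|}}\leq c'$ (rather than merely $\Ex{e^{\lambda|D|}}\leq c$) pays off: re-running the proof of Lemma~\ref{lem:mgf_psi_and_target_displacement_bounds} with $3\lambda$ in place of $\lambda$ makes its conclusion legal whenever $q\alpha\leq 3\lambda$, giving $\Ex{\Psi_{nk}(\alpha)^q} \leq (3n^{q}/(q\alpha))\,e^{q^{2}B}$.

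Second, I would invoke Markov against the threshold $4(n/\alpha)\,e^B$:
\[
\Pr\Bigl[\Psi_{nk}(\alpha) > 4\,\tfrac{n}{\alpha}\,e^{B}\Bigr]
\leq \frac{\Ex{\Psi_{nk}(\alpha)^q}}{\bigl(4(n/\alpha)\,e^{B}\bigr)^q}
\leq \frac{3}{q\cdot 4^{q}}\,\alpha^{q-1}\,e^{q(q-1)B},
\]
and pick $q$ so that the right-hand side is at most $n^{-4}$ uniformly over $\alpha\in[1/n,\lambda/(12\cdot 2\sqrt{108\,c'C})]$ and $k\in[1,C\log n]$.

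The hard part will be calibrating $q$ against three competing constraints at once: $q$ must be large enough that $(q-1)\log(1/\alpha)+q\log 4$ covers $4\log n$ (so the polynomial factor $\alpha^{q-1}$ can deliver the $n^{-4}$); $q\alpha$ must stay below $3\lambda$ (so that the extended version of Lemma~\ref{lem:mgf_psi_and_target_displacement_bounds}(ii) is applicable); and $q(q-1)B$ must remain $O(\log n)$ with a tiny constant (so that $e^{q(q-1)B}$ is only a tame polynomial). The prescribed range $\alpha\leq \lambda/(12\cdot 2\sqrt{108\,c'C})$ together with $k\leq C\log n$ is calibrated precisely so that $B\leq (\log n)/(144\cdot 108)$ and so that $q\alpha\leq 3\lambda$ remains available for $q$ of order $\sqrt{c'C}$. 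With such a choice of $q$, all three conditions can be satisfied simultaneously, yielding the claimed $n^{-4}$ concentration.
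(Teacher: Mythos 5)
Your plan is clean algebra — the power-mean reduction $\Psi_t(\alpha)^q \leq n^{q-1}\Psi_t(q\alpha)$ is valid, and plugging Lemma~\ref{lem:mgf_psi_and_target_displacement_bounds}(ii) into Markov's inequality is correctly computed. But the calibration you describe cannot be made to close, and the failure is in a regime you have to cover.

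Consider $\alpha$ at the top of its allowed range, $\alpha = \lambda/(12\sqrt{108\,c'C})$, which is a constant, and small $k$ (say $k=1$). Applying Lemma~\ref{lem:mgf_psi_and_target_displacement_bounds}(ii) with $q\alpha$ in place of $\alpha$ requires $q\alpha \leq \lambda$ (this is the actual constraint enforced by Lemma~\ref{lem:mgf_taylor}, not $q\alpha \leq 3\lambda$ as you wrote — the $3\lambda$ already sits in the hypothesis $\Exp[e^{3\lambda|D|}]\leq c'$). So $q \leq \lambda/\alpha = 12\sqrt{108\,c'C}$, a constant. With $q$ and $\alpha$ both constants, $\alpha^{q-1}$, $1/(q4^q)$, and $e^{q(q-1)B}$ are all $\Theta(1)$ (with $B=k\alpha^2\cdot 3c'/\lambda^2$ a small constant for $k=1$), so your Markov bound is a constant, nowhere near $n^{-4}$. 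At the other extreme $k=C\log n$, the term $e^{q(q-1)B}$ with $q \approx \lambda/\alpha$ becomes $n^{\Theta(c'C)}$, which blows up against the $n^{-4}$ you need. The decay you are counting on — the $\alpha^{q-1}$ factor — only materializes when $\alpha$ itself is polynomially small; when $\alpha = \Theta(1)$, no admissible $q$ can supply it, because $q$ is capped at $\lambda/\alpha = O(1)$. So there is no uniform choice of $q$ over the full ranges of $\alpha$ and $k$ stated in the lemma.

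The paper does not go through moments of $\Psi$ at all. It first runs a \emph{capped process} (halting mixing once some $\delta_t(i)$ exceeds $\tilde c\log n$, which happens with probability $o(n^{-4})$ by Lemma~\ref{lem:mgf_psi_and_target_displacement_bounds}(iii)), and in that capped process each step changes $\widetilde\Psi_t$ by at most $\approx n^{1/6}$. It then constructs an auxiliary process $\widehat\Psi_t = (\widetilde\Psi_t + \text{const})\cdot e^{-t\alpha^2\cdot 3c'/(n\lambda^2)}$ which is a super-martingale with bounded increments, and invokes a modified Azuma inequality (Lemma~\ref{lem:azuma_modified}) that tolerates the arbitrary one-sided drops caused by the $\adm$ and $\lopt$ operations. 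The martingale route is essential here: it exploits the fine-grained step-by-step evolution and yields an exponentially small tail independently of whether $\alpha$ is $\Theta(1/n)$ or $\Theta(1)$, which a global moment bound on $\Psi_{nk}$ cannot do. If you want a moment-style argument you would have to take moments of the per-step increments rather than of $\Psi_{nk}$ directly — at which point you are essentially reconstructing Azuma.
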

Our goal is to use a variant of Azuma's inequality (\cref{lem:azuma_modified}) to prove concentration for the $\Psi$ potential. To this end, we introduce a capped process, and an auxiliary potential, defined in a way that it is a super-martingale. Next, we show that this auxiliary potential satisfies a bounded difference inequality and hence, we can apply Azuma's inequality. 
\begin{proof}
We define the \emph{capped process} as the process following the original process, until in some step $t$ some element $i$ has displacement $\delta_t(i) > \tilde{c} \cdot \log n$ for $\tilde{c} = \frac{1}{\lambda} \sqrt{108 \cdot c' \cdot C}$. From that step $t$, in the capped process, we stop performing  mixing steps.  

Applying \cref{lem:mgf_psi_and_target_displacement_bounds}~$(iii)$ (with $k = C \cdot \log n$) for all steps of the interval $[1, nk]$, and taking the union-bound, we have that the two processes agree with high probability. Specifically,
\begin{align} \label{eq:whp_all_delta_i_small}
\Pr\Big[\bigcap_{t \in [1, nk]} \Big\{ \max_{i \in [n]}\delta_t(i) \leq \tilde{c} \cdot \log n \Big\} \Big]  
 & \geq 1 - n^{-6} \cdot n \cdot k \geq 1 - \frac{1}{2}n^{-4}.
\end{align}

\subparagraph{The potential functions.} Let $\widetilde{\Psi}_t = \widetilde{\Psi}_t(\alpha)$ be the potential function for the capped process, for any smoothing parameter $\alpha \in \big[\frac{1}{n},  \frac{1}{6\tilde{c}} \big]$. Further, let $\Psi_t'$ be the value of the potential before applying the $\adm$ and $\lopt$ operations at the end of the step. By \cref{lem:admissible_swaps_are_good}~$(i)$ and since $\widetilde{\Psi}$ is convex, these operations cannot increase the value of the potential, so we have that $\widetilde{\Psi}_t \leq \widetilde{\Psi}_t'$.

For any step $t$ of the capped and the original process, we have that $\widetilde{\Psi}_{t+1} \leq \widetilde{\Psi}_t$ and so we can deduce by \cref{lem:mgf_psi_and_target_displacement_bounds}~$(i)$, since $\alpha \in \big[\frac{1}{n}, \lambda\big]$ that for any step $t$ it holds that
\begin{align} \label{eq:tilde_psi_drift}
\Ex{\widetilde{\Psi}_{t+1}' \,\middle| \, \mathcal{F}_t}
  \leq \widetilde{\Psi}_t \cdot e^{\frac{\alpha^2}{n} \cdot \frac{3c'}{\lambda^2}} + \alpha \cdot \frac{4c'}{\lambda}.
\end{align}
Further, we define the auxiliary potential functions
\begin{align} \label{eq:hat_psi_potential}
  \widehat{\Psi}_t' = \left( \widetilde{\Psi}_t' + \frac{\alpha \cdot \frac{4c'}{\lambda}}{e^{\frac{\alpha^2}{n} \cdot \frac{3c'}{\lambda^2}} - 1} \right) \cdot e^{-t \cdot \frac{\alpha^2}{n} \cdot \frac{3c'}{\lambda^2}},
\end{align}
and
\begin{align*}
   \widehat{\Psi}_t = \left( \widetilde{\Psi}_t + \frac{\alpha \cdot \frac{4c'}{\lambda}}{e^{\frac{\alpha^2}{n} \cdot \frac{3c'}{\lambda^2}} - 1} \right) \cdot e^{-t \cdot \frac{\alpha^2}{n} \cdot \frac{3c'}{\lambda^2}},
\end{align*}
for which $\widehat{\Psi}_t', \widehat{\Psi}_t, \widehat{\Psi}_{t+1}', \widehat{\Psi}_{t+1}, \ldots $ is a sub-martingale since deterministically
\[
 \widehat{\Psi}_t \leq \widehat{\Psi}_t',
\]
and
\begin{align*}
\Ex{\widehat{\Psi}_{t+1}' \, \middle| \, \mathcal{F}_t }
  & = \Big( \Ex{\widetilde{\Psi}_{t+1}' \, \middle| \, \mathcal{F}_t } + \frac{\alpha \cdot \frac{4c'}{\lambda}}{e^{\frac{\alpha^2}{n} \cdot \frac{3c'}{\lambda^2}} - 1} \Big) \cdot e^{- (t+1) \cdot \frac{\alpha^2}{n} \cdot \frac{3c'}{\lambda^2}} \\
  & \stackrel{(\ref{eq:tilde_psi_drift})}{\leq}  \Big( \widetilde{\Psi}_t \cdot e^{\frac{\alpha^2}{n} \cdot \frac{3c'}{\lambda^2}} + \alpha \cdot \frac{4c'}{\lambda} + \frac{\alpha \cdot \frac{4c'}{\lambda}}{e^{\frac{\alpha^2}{n} \cdot \frac{3c'}{\lambda^2}} - 1}\Big)  \cdot e^{-(t+1) \cdot \frac{\alpha^2}{n} \cdot \frac{3c'}{\lambda^2}} \\
  & = \Big( \widetilde{\Psi}_t \cdot e^{\frac{\alpha^2}{n} \cdot \frac{3c'}{\lambda^2}} + \frac{\alpha \cdot \frac{4c'}{\lambda}}{e^{\frac{\alpha^2}{n} \cdot \frac{3c'}{\lambda^2}} - 1} \cdot e^{\frac{\alpha^2}{n} \cdot \frac{3c'}{\lambda^2}} \Big) \cdot e^{-(t+1) \cdot \frac{\alpha^2}{n} \cdot \frac{3c'}{\lambda^2} } \\
  & = \Big( \widetilde{\Psi}_t + \frac{\alpha \cdot \frac{4c'}{\lambda}}{e^{\frac{\alpha^2}{n} \cdot \frac{3c'}{\lambda^2}} - 1} \Big) \cdot e^{-t \cdot \frac{\alpha^2}{n} \cdot \frac{3c'}{\lambda^2}} = \widehat{\Psi}_t.
\end{align*}

\subparagraph{Bounded difference.} Since we are in the capped process, any mixing step can change $\widetilde{\Psi}$ in at most $2\tilde{c} \cdot \log n$ indices $j$. Therefore,
\begin{align}
  \left| \widetilde{\Psi}_{t+1}' - \widetilde{\Psi}_t \right| 
  & \leq (2 \tilde{c} \cdot \log n) \cdot \max_{j \in [n]} \widetilde{\psi}_t(j) 
    \leq (2 \tilde{c} \cdot \log n) \cdot e^{\alpha \cdot \tilde{c} \log n} 
    \leq 2 \tilde{c} \cdot n^{1/6} \cdot \log n, \label{eq:delta_tilde_psi_bounded}
\end{align}
using that $\alpha \leq \frac{1}{6\tilde{c}}$. Similarly it also holds that\begin{align}
\label{eq:bound_on_tilde_psi}
 \widetilde{\Psi}_t \leq \widetilde{\Psi}_t' \leq n \cdot e^{\alpha \cdot \tilde{c} \log n} \leq n \cdot n^{1/6}.
\end{align}
Now we can bound the change of the potential due to a mixing step for any step $t \leq nk$
as follows
\begin{align*}
 \Big| \widehat{\Psi}_{t+1}' - \widehat{\Psi}_t \Big| 
  & \stackrel{(a)}{\leq} \Bigg(\Big| \widetilde{\Psi}_{t+1}' - \widetilde{\Psi}_t \Big| + \widetilde{\Psi}_{t+1} \cdot \left(1 - e^{-\frac{\alpha^2}{n} \cdot \frac{3c'}{\lambda^2}} \right)  + \alpha \cdot \frac{4c'}{\lambda} \cdot \frac{1 - e^{-\frac{\alpha^2}{n} \cdot \frac{3c'}{\lambda^2}}}{e^{\frac{\alpha^2}{n} \cdot \frac{3c'}{\lambda^2}} - 1} \Bigg) \cdot e^{-t \cdot \frac{\alpha^2}{n} \cdot \frac{3c'}{\lambda^2}} \\
  & \stackrel{(b)}{\leq} 2 \tilde{c} \cdot n^{1/6} \cdot \log n + n \cdot n^{1/6} \cdot \frac{\alpha^2}{n} \cdot \frac{3c'}{\lambda^2} + \alpha \cdot \frac{4c'}{\lambda} \\
  & \leq n^{1/3},
\end{align*}%
using in $(a)$ the definition of the $\widehat{\Psi}$ potential in \cref{eq:hat_psi_potential} and in $(b)$ the bounds in \cref{eq:bound_on_tilde_psi,eq:delta_tilde_psi_bounded}. Further, recall that for any step $t \geq 0$, it holds $\widetilde{\Psi}_t \leq \widetilde{\Psi}_t'$.

\subparagraph{Applying Azuma's inequality.} Now we are ready to apply Azuma's inequality (\cref{lem:azuma_modified}) for the super-martingale $\widehat{\Psi}_0, \widehat{\Psi}_1', \widehat{\Psi}_1, \widehat{\Psi}_2', \widehat{\Psi}_2\ldots$ and parameters $\Lambda = \frac{n}{\alpha}$ and $\Delta = n^{1/3}$, %
\begin{align} \label{eq:azuma_application}
& \Pro{ \widehat{\Psi}_{nk} \leq \widehat{\Psi}_0 + \frac{n}{\alpha} } \geq 1 - \exp\left( - \frac{(n/\alpha)^2}{3nk \cdot (n^{1/3})^2  + 2 \cdot 4 \cdot \frac{n}{\alpha} \cdot n^{1/3} } \right),
\end{align}
since using the definition of $\widehat{\Psi}_0$ and the fact that we are starting from the identity permutation
\begin{align*}
\widehat{\Psi}_0 + \frac{n}{\alpha} 
 & = \left( n + \frac{\alpha \cdot \frac{4c'}{\lambda}}{e^{\frac{\alpha^2}{n} \cdot \frac{3c'}{\lambda^2}} - 1} \right) \cdot 1 + \frac{n}{\alpha}
   \leq n + \frac{4}{3} \cdot \lambda \cdot \frac{n}{\alpha} + \frac{n}{\alpha}
   \leq 4 \cdot \frac{n}{\alpha}.
\end{align*}
So when $\widehat{\Psi}_{nk} \leq \widehat{\Psi}_0 + \frac{n}{\alpha}$, we also have that\[
  \widetilde{\Psi}_{nk}
    = \widehat{\Psi}_{nk}  \cdot e^{n k \cdot \frac{\alpha^2}{n} \cdot \frac{3c'}{\lambda^2}} - \frac{\alpha \cdot \frac{4c'}{\lambda}}{e^{\frac{\alpha^2}{n} \cdot \frac{3c'}{\lambda^2}} - 1}
    \leq 4 \cdot \frac{n}{\alpha} \cdot e^{k \cdot \alpha^2 \cdot \frac{3c'}{\lambda^2}}.
\]
Therefore, returning to \cref{eq:azuma_application} since $k = O(\log n)$ and $\alpha = O(1)$, we deduce that
\begin{align} \label{eq:whp_bounded_diff_tilde_psi}
\Pro{ \widetilde{\Psi}_{nk} \leq 4 \cdot \frac{n}{\alpha} \cdot e^{k \cdot \alpha^2 \cdot \frac{3c'}{\lambda^2}} }
  \geq 1 - \frac{1}{2}n^{-4}.
\end{align}

Finally, by taking the union-bound of \cref{eq:whp_all_delta_i_small} and \cref{eq:whp_bounded_diff_tilde_psi}, we have that the capped process agrees with the original process, so $\Psi_{nk} = \widetilde{\Psi}_{nk}$ and thus
\begin{align*}
\Pro{ \Psi_{nk} \leq 4 \cdot \frac{n}{\alpha} \cdot e^{k \cdot \alpha^2 \cdot \frac{3c'}{\lambda^2}}} 
  & \geq 1 - \frac{1}{2} n^{-4} - \frac{1}{2} n^{-4} = 1 - n^{-4}. \qedhere
\end{align*}
\end{proof}

\subsection{Bounding the Sum of Large Displacements}
\label{sec:tdev_large_elements_bound}

In this section, we use the concentration of the $\Psi$ potential to bound the sum of large displacements. 

\begin{lemma} \label{lem:sum_large_delta_bound}
Consider the auxiliary process $(l_t,\tau_t,\sigma_t)_{t\geq 0}$ starting with the identity permutation $\sigma_0 = \id_n$, with a perturbation distribution $\DD$ satisfying for $D \sim \mathcal{D}$ that $\Ex{D} = 0$ and $\Ex{e^{3\lambda |D|}} \leq c'$ for some constants $\lambda \in (0, 1/3]$ and $c' \geq 1$.
Let $C \geq 1$ be an arbitrary constant and $R = \max\{ 6 \cdot 2 \cdot \frac{c'}{\lambda^2} \sqrt{108 \cdot C}, 32^3 \cdot 16 \}$. Then, for any $k \in [(64 R \log (16R))^3, C \cdot \log n]$, it holds that
\[
  \Pr\left[ \sum_{i \in [n] : \delta_{nk}(i) \geq k^{2/3}} \delta_{nk}(i) \leq n \cdot e^{-\frac{1}{2R} \cdot k^{1/3}} \right] \geq 1 - n^{-3}.
\]
\end{lemma}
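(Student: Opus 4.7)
The strategy is to exploit the high-probability upper bound on $\Psi_{nk}$ from \cref{lem:psi_concentration} with a cleverly tuned smoothing parameter $\alpha$. Since each index with $\delta_{nk}(i) \geq k^{2/3}$ contributes at least $e^{\alpha k^{2/3}}$ to the potential, a bound on $\Psi_{nk}$ will automatically bound the sum of such displacements once we convert the exponential weighting into a linear one.

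First I would pick $\alpha := \tfrac{\lambda^2}{12 c'}\cdot k^{-1/3}$, check that this $\alpha$ lies in the admissible range $[1/n,\ \lambda/(12\sqrt{108 c' C})]$ of \cref{lem:psi_concentration} (the upper constraint holds because $k$ is at least a sufficiently large constant, and the lower one because $k \leq C\log n$), and invoke that lemma to obtain, with probability at least $1 - n^{-4}$,
\[
\Psi_{nk}(\alpha)\ \leq\ \frac{4n}{\alpha}\exp\!\Bigl(\tfrac{3 c'}{\lambda^2}\, k \alpha^2\Bigr).
\]
Next I would use the pointwise inequality $\delta \leq (2/\alpha)\,e^{\alpha\delta/2}$ (from $x\leq e^x$) together with the observation that for $\delta\geq k^{2/3}$ we can factor $e^{\alpha\delta/2} = e^{\alpha\delta}\,e^{-\alpha\delta/2} \leq e^{-\alpha k^{2/3}/2}\,e^{\alpha\delta}$. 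Summing over indices with $\delta_{nk}(i)\geq k^{2/3}$ and combining with the $\Psi_{nk}$ bound yields, on the same high-probability event,
\[
\sum_{i:\,\delta_{nk}(i)\geq k^{2/3}} \delta_{nk}(i)\ \leq\ \frac{2}{\alpha}\,e^{-\alpha k^{2/3}/2}\,\Psi_{nk}(\alpha)\ \leq\ \frac{8n}{\alpha^2}\exp\!\Bigl(\tfrac{3 c'}{\lambda^2}\, k\alpha^2\ -\ \tfrac{\alpha k^{2/3}}{2}\Bigr).
\]

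With the chosen $\alpha$ the exponent simplifies to $-\lambda^2 k^{1/3}/(48 c')$, and $1/\alpha^2 = (12 c'/\lambda^2)^2\, k^{2/3}$. Since $R \geq 24 c'\sqrt{108 C}/\lambda^2$ and $\sqrt{108 C}>10$, this natural exponent is at most $-5\,k^{1/3}/R$, which leaves slack of roughly $e^{(9/2)k^{1/3}/R}$ to absorb the polynomial factor $O(k^{2/3})$ and still end with the target exponent $-k^{1/3}/(2R)$. After taking logarithms, the absorption step reduces to $\tfrac{2}{3}\log k + O_{c',\lambda}(1) \lesssim k^{1/3}/R$, which follows directly from the assumptions $k^{1/3}\geq 64 R\log(16 R)$ and $R \geq 32^3\cdot 16$; the failure probability $n^{-4}$ is subsumed by the claimed $n^{-3}$.

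The main obstacle is precisely this final bookkeeping: forcing the polynomial prefactor $k^{2/3}$ (from $1/\alpha^2$) to be dominated by the gap between the natural exponent $\lambda^2 k^{1/3}/(48 c')$ produced by the optimal $\alpha$ and the weaker target exponent $k^{1/3}/(2R)$. This is exactly why $R$ is defined with the extra $\sqrt{108 C}$ slack and why $k$ is required to be at least cubic in $R\log R$ — these two quantitative slacks are tuned to make the last inequality go through uniformly in the parameters.
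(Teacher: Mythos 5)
Your proof is correct and takes a genuinely different, and arguably cleaner, route than the paper. The paper decomposes $\sum_{\delta_{nk}(i)\geq k^{2/3}}\delta_{nk}(i)$ via a layer-cake identity into counts $N_\ell = |\{i : \delta_{nk}(i)\geq \ell\}|$, applies \cref{lem:psi_concentration} separately to a whole family of potentials $\Psi^\ell$ with level-dependent smoothing parameters $\alpha_\ell = \ell/(Rk)$ (and $\alpha_\ell = 1/R$ for $\ell > k$), and union-bounds over the roughly $k$ levels. You instead invoke \cref{lem:psi_concentration} once, at the single parameter $\alpha = \frac{\lambda^2}{12c'}k^{-1/3}$, and use the pointwise inequality $\delta \leq \frac{2}{\alpha}e^{\alpha\delta/2} \leq \frac{2}{\alpha}e^{-\alpha k^{2/3}/2}\,e^{\alpha\delta}$ (valid precisely for $\delta\geq k^{2/3}$) to convert the high-probability upper bound on $\Psi_{nk}(\alpha)$ directly into the desired bound on the sum. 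Your $\alpha$ is exactly the minimizer of $\frac{3c'}{\lambda^2}k\alpha^2 - \frac{\alpha k^{2/3}}{2}$, giving exponent $-\frac{\lambda^2}{48c'}k^{1/3}$; since $R\geq 12\frac{c'}{\lambda^2}\sqrt{108C}$ one gets $\frac{\lambda^2}{48c'}\geq\frac{\sqrt{108C}}{4R} > \frac{2.5}{R}$, leaving a slack of at least $\frac{2}{R}k^{1/3}$ over the target exponent $\frac{1}{2R}k^{1/3}$, which absorbs the polynomial prefactor $8/\alpha^2 = \Theta(k^{2/3})$ exactly because $k^{1/3}\geq 64R\log(16R)$. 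The approach buys a single union-bound event (so failure probability $n^{-4}$, better than the stated $n^{-3}$) and avoids the layer-cake bookkeeping entirely. One minor numerical slip: you write $R\geq 24\,c'\sqrt{108C}/\lambda^2$, but the definition has $6\cdot 2 = 12$; with the correct constant the slack drops from your claimed $\approx 5/R$ to $\approx 2.5/R$, which is still ample, so the conclusion is unaffected.
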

\begin{proof}
Let $N_{\ell}$ be the number of elements $i$ with $\delta_{nk}(i) \geq \ell$, i.e.,
\[
N_{\ell} = \left| \left\{ i \in [n] : \delta_{nk}(i) \geq \ell\right\} \right|.
\]
Then, we can express the total displacement $D_{\geq k^{2/3}}$ of elements $i$ with $\delta_{nk}(i) \geq k^{2/3}$ as 
\begin{align} \label{eq:displacement_def}
  D_{\geq k^{2/3}}
    = N_{k^{2/3}} \cdot (k^{2/3} - 1) + \sum_{\ell = k^{2/3}}^{\infty} N_{\ell}.
\end{align}
We will obtain bounds for $N_{\ell}$ using the potential functions $\Psi_t^{\ell} = \Psi_t^{\ell}(\alpha_{\ell})$ for $\ell = k^{2/3}, \ldots, k$ with smoothing parameters
\[
  \alpha_\ell = \begin{cases}
      \frac{1}{R} \cdot \frac{\ell}{k}, & \text{for }\ell \leq k, \\
      \frac{1}{R}, & \text{for }\ell > k.
  \end{cases}
\]
Applying \cref{lem:psi_concentration} (as $k \leq C \cdot \log n$ and $\frac{1}{n} \leq \alpha_{\ell} \leq \frac{1}{R} \leq \frac{\lambda}{6 \cdot 2 \cdot \sqrt{108 \cdot c' \cdot C}}$), taking the union bound over the $(k - k^{2/3}+ 1)$ potential functions, we get that
\begin{align}
 & \Pro{\bigcap_{\ell \in [k^{2/3}, k]} \left\{ \Psi_{nk}^\ell \leq 4 \cdot \frac{n}{\alpha_\ell} \cdot e^{k \cdot \frac{3c'}{\lambda^2} \cdot \alpha_\ell^2} \right\} } \notag \\
 & \quad \geq 1 - n^{-4} \cdot (k - k^{2/3} + 1) \notag \\
 & \quad \geq 1 - n^{-3}, \label{eq:large_all_psi_small}
\end{align}
Now, assuming that these bounds hold on the $\Psi$'s we will bound $N_{\ell}$. We start by considering $\ell \in [k^{2/3}, k]$, 
\begin{align}\label{eq:whp_n_ell_first_bound}
  N_{\ell} 
    & \leq \Psi_{nk}^\ell \cdot e^{-\alpha_\ell \cdot \ell} \notag \\
    & \leq 4 \cdot \frac{n}{\alpha_\ell} \cdot e^{k \cdot \frac{3c'}{\lambda^2} \cdot \alpha_\ell^2} \cdot e^{-\alpha_\ell \cdot \ell} \notag \\
    & = 4Rn \cdot \frac{k}{\ell} \cdot e^{\frac{1}{R} \cdot \frac{\ell^2}{k} \cdot \left( \frac{3c'}{\lambda^2} \cdot \frac{1}{R} - 1\right) } \notag \\
    & \stackrel{(a)}{\leq} 4Rn \cdot \frac{k}{\ell} \cdot e^{-\frac{3}{4R} \cdot \frac{\ell^2}{k}  } \notag \\
    & \stackrel{(b)}{\leq} 4Rn \cdot k^{1/3} \cdot e^{-\frac{3}{4R} \cdot k^{1/3} } \notag \\
    & \stackrel{(c)}{\leq} \frac{1}{4} n \cdot e^{-\frac{1}{2R} \cdot k^{1/3}},
\end{align}
using in $(a)$ that $R \geq 12 \cdot \frac{c'}{\lambda^2}$, in $(b)$ that $\ell \geq k^{2/3}$ and in $(c)$ that 
\begin{align}
e^{\frac{1}{4R} \cdot k^{1/3}} 
  & = \left( e^{\frac{1}{16R} \cdot k^{1/3}} \right)^3 \cdot e^{\frac{1}{16R} \cdot k^{1/3}} \notag \\
  & \geq \left( \frac{1}{16R} \cdot k^{1/3} \right)^3 \cdot e^{4 \log (16R)} \notag \\
  & = \frac{1}{(16R)^3} \cdot k \cdot (16R)^4  \notag \\
  & = 16Rk, \label{eq:exp_k_lb}
\end{align}
since  $e^x \geq x$ and $k \geq (64 R \log (16R))^3$. 

Similarly, for $\ell > k$, we get
\begin{align} 
  N_{\ell} 
    & \leq \Psi_{n \cdot k}^\ell \cdot e^{-\alpha_\ell \cdot \ell} \notag \\
    & \leq 4Rn \cdot e^{k \cdot \frac{3c'}{\lambda^2} \cdot \frac{1}{R^2}} \cdot e^{-\frac{1}{R} \ell} \notag \\
    & \leq 4Rn \cdot e^{-\frac{3}{4R} \cdot \ell} \notag \\
    & \leq \frac{1}{4} n \cdot e^{-\frac{1}{2R} \cdot k^{1/3}},\label{eq:whp_n_ell_second_bound}
\end{align}
using \cref{eq:exp_k_lb} in the last inequality.

Finally, the first term in \cref{eq:displacement_def} can be bounded similarly to
\cref{eq:whp_n_ell_first_bound} using the steps until $(b)$, so
\begin{align} 
N_{k^{2/3}} \cdot (k^{2/3} - 1) 
    & \leq 4Rn \cdot k^{1/3} \cdot e^{-\frac{3}{4R} \cdot k^{1/3} } \cdot k^{2/3} \notag \\
    & \leq \frac{1}{4} n \cdot e^{-\frac{1}{2R} \cdot k^{1/3}}, \label{eq:whp_second_term_bound}
\end{align}
using \cref{eq:exp_k_lb} in the last step.

Combining \cref{eq:whp_n_ell_first_bound}, \cref{eq:whp_n_ell_second_bound} and \cref{eq:whp_second_term_bound}, we get
\begin{align*}
D_{\geq k^{2/3}} & \leq \frac{1}{4} n \cdot e^{-\frac{1}{2R} \cdot k^{1/3}} + 2 \cdot \frac{1}{4} n \cdot e^{-\frac{1}{2R} \cdot k^{1/3}} \\
& \leq n \cdot e^{-\frac{1}{2R} \cdot k^{1/3}}.
\end{align*}
Finally, by  \cref{eq:large_all_psi_small}, we conclude that
\[
  \Pro{D_{\geq k^{2/3}} \leq n \cdot e^{-\frac{1}{2R} \cdot k^{1/3}}} \geq 1 - n^{-3}. \qedhere
\]
\end{proof}

\subsection{Proof of Bound on Total Deviation} \label{sec:tdev_main_theorem}

Now, we are ready to prove the $O(n)$ bound on total deviation with high probability.

\begin{theorem}[Total Deviation] \label{thm:total_deviation_whp}
Consider Na\"ive Sort under random adjacent rank swaps for any constant $b \geq 1$. Then, there exists a constant $c = c(b) > 0$ such that for any $m \geq 16 \cdot 128 \cdot 4 \cdot (b+1) \cdot n^2$, it follows that\[
  \Pro{\dev(\pi_m) \leq cn} \geq 1 - n^{-2}.
\]
\end{theorem}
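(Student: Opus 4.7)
The plan is to refine \cref{thm:mdev_whp} by replacing its single long phase with a sequence of progressively shorter phases, each terminated by a \emph{partial} reset of $\tau$ via $\theta$-filtering (\cref{def:theta-filtering}), as sketched in the Technical Overview. First I would apply \cref{thm:mdev_whp} at a time $t_0 = \Theta(n^2)$ to get $\mdev(\pi_{t_0}) \leq c_0\log n$ w.h.p., and then \emph{fully} reset by setting $\tau_{t_0}=\sigma_{t_0}=\id_n$ and re-padding $l_{t_0}$ locally-optimally w.r.t.\ $\id_n$, so that $\Phi_{t_0}\leq n\cdot e^{O(\alpha d\log n)}$ deterministically from this point.

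I would then iterate $J = O(\log\log\log n)$ phases with thresholds $k_0 = \Theta(\log n)$ and $k_{j+1} = c_1\cdot k_j^{2/3}$, stopping once $k_J$ drops below the constant threshold $(64R\log(16R))^3$ of \cref{lem:sum_large_delta_bound}. Phase $j$ has length $\ell_j = \Theta((b{+}1)\cdot n k_j)$ (with a large enough hidden constant), starts with $\sigma=\id_n$ and $\Exp[\Phi]\leq n\cdot e^{O(\alpha d k_j)}$, and ends with three operations. $(i)$ By \cref{lem:many_sorting} at least a $p$-fraction of the $\ell_j$ steps are sorting, so \cref{lem:new_sorting_step_drift} gives $\Exp[\Phi_{\text{end}}\mid \Phi_{\text{start}}] \leq \Phi_{\text{start}}\cdot e^{-\Theta(p\alpha k_j)}$; mixing steps cannot increase $\Phi$ by the construction of \cref{def:l-tau-sigma}. $(ii)$ Applying \cref{lem:psi_concentration,lem:sum_large_delta_bound} to $\sigma$ (which was reset to $\id_n$ at the start of the phase) gives w.h.p.\ $\sum_{i:|\sigma(i)-i|>k_j^{2/3}}|\sigma(i)-i| \leq n\cdot e^{-\Omega(k_j^{1/3})}$. $(iii)$ Apply $k_j^{2/3}$-filtering to $\tau$ to obtain $\hat\tau$: by \cref{lem:theta-filtering}, $\mdev(\hat\tau,\tau)\leq 2k_j^{2/3}$, so every displacement grows by at most $2dk_j^{2/3}$, multiplying $\Phi$ by at most $e^{O(\alpha d k_j^{2/3})}$; meanwhile $\dev(\hat\tau) \leq 4\sum_{|\tau(i)-i|>k_j^{2/3}}|\tau(i)-i| = O(n\cdot e^{-\Omega(k_j^{1/3})})$. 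Finally set $\tau\gets\hat\tau$, reset $\sigma\to\id_n$, and re-pad $l$ locally-optimally w.r.t.\ $\hat\tau$. Choosing $c_1$ so that $k_{j+1}$ dominates $k_j^{2/3}$ preserves the invariant $\Exp[\Phi]\leq n\cdot e^{O(\alpha d k_{j+1})}$ for the next phase.

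With the hidden constants large enough, the cumulative drop in $\Exp[\Phi]$ across the $J$ phases dominates the initial $e^{O(\alpha d\log n)}$, giving $\Exp[\Phi_m] = O(n^{-1})$ and hence $\Phi_m = O(n)$ w.h.p.\ by Markov. Since $\Phi \geq \alpha\cdot \dsp(l_m,\tau_m)$, we get $\dsp(l_m,\tau_m) = O(n)$ w.h.p. The accumulated residual total deviation is $\dev(\tau_m) \leq \sum_{j<J} O(n\cdot e^{-\Omega(k_j^{1/3})}) = O(n)$ via a union bound over the $J$ w.h.p.\ events in the phases; the final $O(1)$ phases where $k_j$ is below the \cref{lem:sum_large_delta_bound} threshold are handled by bounding $\dev(\sigma)$ directly via $\Exp[\Psi]=O(n/\alpha)$ from \cref{lem:mgf_psi_and_target_displacement_bounds}, contributing an additional $O(n)$ term. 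Combining via the triangle inequality and \cref{lem:dsp_bounds_dev}, $\dev(\pi_m) \leq \dsp(l_m) \leq \dsp(l_m,\tau_m) + d\cdot \dev(\tau_m) = O(n)$ w.h.p. The total step budget is $O((b{+}1)n^2)$ for \cref{thm:mdev_whp} plus $\sum_j \ell_j = O(n\log n)$ for the phases, safely within the given budget.

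The main obstacle is maintaining a clean multiplicative recursion for $\Exp[\Phi]$ across the phase boundaries: the amplification $e^{O(\alpha d k_j^{2/3})}$ from $\theta$-filtering must be strictly smaller than the per-phase drop $e^{-\Theta(p\alpha k_j)}$, so the constants in $c_1$ and $\ell_j$ must be chosen carefully, and the $n^{O(1)}$ slack introduced by the final Markov step must be absorbed by taking the initial drop ratio slightly super-polynomial. A subtler point is that the $\theta$-filtering followed by re-padding must preserve admissibility and local optimality, which is ensured by the iterative constructions of $\adm$ and $\lopt$ together with \cref{lem:admissible_swaps_are_good}, guaranteeing that the intermediate swaps do not increase the potential beyond the $e^{O(\alpha d k_j^{2/3})}$ factor already accounted for.
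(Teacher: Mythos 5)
Your outline correctly identifies the phase structure with $\theta$-filtering and geometrically decaying phase lengths, and the bookkeeping for $\dev(\tau_m)$ (though it needs the extra step of controlling $\dev(\sigma'_{t_i}, \tau'_{t_i})$ recursively, since \cref{lem:sum_large_delta_bound} is stated for $\sigma$, not $\tau$). The serious gap is in the argument that bounds $\dsp(l_m,\tau_m)$: you claim $\Exp[\Phi_m] = O(n^{-1})$ followed by Markov, but this expectation bound is not achievable. Each $\theta$-filtering at a phase boundary $t_i$ shifts every target by up to $2k_i^{2/3}$, which contributes an \emph{additive} term of order $n\bigl(e^{O(\alpha d k_i^{2/3})}-1\bigr)$ to $\Phi$ (elements that were perfectly placed get pushed to displacement up to $2dk_i^{2/3}$). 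In particular, the filtering at $t_{\kappa-1}$ injects $\Theta(n)$ into the potential after the remaining $\Theta(nk_\kappa)$ steps of (constant-factor) decay; no choice of hidden constants can push $\Exp[\Phi_m]$ below $\Theta(n)$. With only $\Exp[\Phi_m]=O(n)$, Markov at failure probability $n^{-2}$ gives $\Phi_m=O(n^3)$ w.h.p., hence $\dsp(l_m,\tau_m)=O(n\log n)$, which is exactly the bound the theorem is meant to beat.

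The paper resolves this by a concentration argument rather than an expectation bound. It runs \emph{two} potentials $\Phi_t(\alpha)$ and $\widetilde\Phi_t(\tilde\alpha)$ with $\alpha$ a large constant factor above $\tilde\alpha$. It first uses Markov on $\Phi$ (whose expectation it can afford to be $\mathrm{poly}(n)$) to get $\Phi_t=\mathrm{poly}(n)$ for all $t\in[t_1,t_\kappa]$ w.h.p.; since $\tilde\alpha\ll\alpha$, this yields a deterministic per-step bound $|\widetilde\Phi_{t+1}-\widetilde\Phi_t|\leq n^{1/3}$ for the capped process. Then it applies a variant of Azuma's inequality (\cref{lem:azuma_modified}) to a rescaled super-martingale $\widehat\Phi$ built from $\widetilde\Phi$ to conclude $\widetilde\Phi_{t_i}=O\bigl(n\,e^{O(\tilde\alpha d k_i^{2/3})}\bigr)$ holds w.h.p.\ at each $t_i$, finishing with $\widetilde\Phi_{t_\kappa}=O(n)$ and hence $\dsp(l_m,\tau_m)=O(n)$. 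This coupling of a coarse potential (for bounded differences) with a fine potential (for the target bound) is the missing idea, and without it or some equivalent concentration mechanism the Markov step caps you at $O(n\log n)$.
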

\begin{proof}
Let $p = \frac{1}{4 \cdot (b+1)}$. We will perform the analysis in $\kappa + 2$ phases $(0, \rho], (\rho, t_0], (t_0, t_1], \ldots, (t_{\kappa-1}, t_{\kappa}]$ each consisting of  $\rho, n \cdot k_0, n \cdot k_1, \ldots, n \cdot k_\kappa$ many steps, where $k_0 = k_1 = \frac{16^2 \cdot 108 \cdot d^2}{p^2} \cdot \frac{c'}{\lambda^2} \cdot \log n$, $k_{i+1} = \frac{16d}{p} \cdot k_i^{2/3}$ (for $d = 128$) and $\rho = m - n \cdot k_0 - n \cdot  k_1 - \ldots - n \cdot k_{\kappa}$  (the remaining steps). We choose $\kappa$ to be the smallest integer such that $k_\kappa \geq \max\{ (\frac{8 \cdot 16d}{p})^4, (64 R \log (16R))^3 \}$, where $R = \max\{ 6 \cdot 2 \cdot \frac{c'}{\lambda^2} \sqrt{108 \cdot C}, 32^3 \cdot 16 \}$ the constant in \cref{lem:sum_large_delta_bound}. It follows that $\kappa = \Theta(\log \log \log n)$.

We will be using two potential functions $\Phi_t = \Phi_t(\alpha)$ and $\widetilde{\Phi}_t = \widetilde{\Phi}_t(\tilde{\alpha})$ with $\alpha$ being a large constant factor larger than $\widetilde{\alpha}$, so it holds that $\Phi_t \geq \widetilde{\Phi}_t$ for any step $t \geq 0$.\footnote{A similar interplay between two exponential potentials has been used to prove concentration in the context of balanced allocations (e.g.,~\cite{LS22Queries}). However, here because admissible steps can cause a large decrease in the potentials we require a slight variant of Azuma's inequality (\cref{lem:azuma_modified}).} More concretely, we set $\alpha = \frac{\log 20}{127}$ and $\tilde{\alpha} = \frac{1}{42} \cdot \frac{\log 20}{127}$ (and $d$ defined as above). First, we are going to prove that \Whp~$\Phi_t = \mathrm{poly}(n)$ for any $t \in [t_0, t_{\kappa}]$ and then we use this to show that $|\widetilde{\Phi}_{t+1} - \widetilde{\Phi}_t| \leq n^{1/3}$, which allows us to prove that \Whp~$\widetilde{\Phi}_t = O(n)$ for any $t = t_1, t_2, \ldots , t_{\kappa}$ (before resetting the targets).

\definecolor{beige}{HTML}{fffade}
\begin{figure*}
    \centering
\scalebox{0.8}{
 \begin{tikzpicture}
  
  \draw[very thick,->] (0, 0) -- (17.5, 0);
  
\newcommand{\MarkPoint}[2]{
\draw[thick, dashed] (#1,6) -- (#1,-0.5);
\node[anchor=north] at (#1,-0.5) {#2};
}

\MarkPoint{3}{$\rho$}
\MarkPoint{5}{$t_0$}
\MarkPoint{7}{$t_1$}
\MarkPoint{8.5}{$t_2$}

\MarkPoint{11}{$t_{i-1}$}
\MarkPoint{12.5}{$t_i$}
\MarkPoint{14}{$t_{i+1}$}
\MarkPoint{16.5}{$t_{\kappa}$}

\node[anchor=north] at (1.5,0) {\small $n^2/p$};
\node[anchor=north] at (4,0) {\small $n \log n$};
\node[anchor=north] at (6,0) {\small $n \log n$};
\node[anchor=north] at (7.75,0) {\small $n \log^{2/3} n$};

\node[anchor=north] at (11.75,0) {\small $nk_i$};
\node[anchor=north] at (13.25,0) {\small $nk_{i}^{2/3}$};

\node[fill=white] at (9.75,0) {$\ldots$};

\node[fill=white] at (15.25,0) {$\ldots$};

\draw[anchor=north east,fill=black!10!white,draw=white] (2.9,5.75) rectangle ++(13.7,0.75);
\node at (8.75,6.1) {$\Phi = \mathrm{poly}(n)$};

\node[fill=beige,draw=black,anchor=south] at (5, 4.2) {\small $\widetilde{\Phi}_{t_0} \leq n^2$};

\node [draw, shape = circle, fill = black, minimum size = 0.12cm,inner sep=0pt] at (5, 4.2) {};

\node[fill=beige,draw=black,anchor=south] at (7, 3.4) {\small $\widetilde{\Phi}_{t_1} \leq n \cdot e^{O((\log n)^{2/3})}$};

\node [draw, shape = circle, fill = black, minimum size = 0.12cm,inner sep=0pt] at (7, 3.4) {};

\node[fill=beige,draw=black,anchor=south] at (8.5, 2.6) {\small $\widetilde{\Phi}_{t_2} \leq n \cdot e^{O((\log n)^{4/9})}$};

\node [draw, shape = circle, fill = black, minimum size = 0.12cm,inner sep=0pt] at (8.5, 2.6) {};

\node[fill=beige,draw=black,anchor=south] at (12.5, 1.8) {\small $\widetilde{\Phi}_{t_i} \leq n \cdot e^{O(k_i^{2/3})}$};

\node [draw, shape = circle, fill = black, minimum size = 0.12cm,inner sep=0pt] at (12.5, 1.8) {};

\node [fill=beige,draw=black,anchor=south] at (14, 1) {\small $\widetilde{\Phi}_{t_{i+1}} \leq n \cdot e^{O(k_i^{4/9})}$};

\node [draw, shape = circle, fill = black, minimum size = 0.12cm,inner sep=0pt] at (14, 1) {};

\node [fill=beige,draw=black,anchor=south] at (16.5, 0.2) {\small $\widetilde{\Phi}_{t_\kappa} \leq O(n)$};

\node [draw, shape = circle, fill = black, minimum size = 0.12cm,inner sep=0pt] at (16.5, 0.2) {};

\node[anchor=south west,color=white,fill=green!50!black] at (9, 5.1) {Sum of large displacements:};
\node[anchor=south west,fill=green!10!white,draw=black] at (9, 4.5) {$n \cdot e^{-\Omega((\log n)^{1/3})} + n \cdot e^{-\Omega((\log n)^{2/9})} + \ldots + n \cdot e^{-\Omega(k_i^{1/3})} + \ldots  $};

\path[dashed,red,thick,->] (7, 3.4) edge[bend right]  (9.5,4.5);
\path[dashed,red,thick,->] (8.5, 2.6) edge[bend right]  (12.5,4.5);
\path[dashed,red,thick,->] (12.5, 1.8) edge[bend right]  (16.2,4.5);

 \end{tikzpicture}
 }
    \caption{The phases for the proof of the $O(n)$ bound for the total deviation.}
    \label{fig:tdev_stages}
\end{figure*}

As in \cref{thm:mdev_whp}, the first (long) phase will be to ensure that we reach a state with a small maximum displacement. The second phase is to ensure that $\Phi_t = \mathrm{poly}(n)$. In both the first and the second phase, the permutations $\tau$ and $\sigma$ are reset to the identity. At the end of each of the subsequent phases $(t_{i}, t_{i+1}]$, we reset $\sigma_{t_{i+1}}$ to the identity permutation (with $\sigma_{t_{i+1}}'$ being the permutation before the reset) and update $\tau_{t_{i+1}}$ by resetting targets using $\theta$-filtering with $\theta = 2k_i^{2/3}$ (with~$\tau_{t_{i+1}}'$ being the permutation before the filtering). 

We will prove the following claim for the $\dev$ at step~$m$. %

\begin{claim} \label{claim:bound_on_tdev_tau}
It holds that, %
\begin{align}\label{eq:total_displacement_of_large}
 \Pro{ \dev(\tau_{m}) \leq n } \geq 1 - \frac{1}{2} n^{-2}.
\end{align}
\end{claim}

And we will also prove the following claim for $\dsp$ at step $m$. %

\begin{claim} \label{claim:bound_on_phi_m}
It holds that, 
\begin{align}
 & \Pro{\dsp(l_m, \tau_m) \leq 8n \cdot d \cdot k_{\kappa}^{2/3} + 8 \cdot \dev(\tau_{m})} \geq 1 - \frac{1}{2} n^{-2}. \label{eq:tdsp_bound}
\end{align}
\end{claim}

Before proving these claims, we will show how these two imply the conclusion. By \cref{lem:dsp_bounds_dev} and the triangle inequality, we have that
\[
 \dev(\pi_m) \leq \dsp(l_m, \tau_m) + \dev(\tau_m).
\]
By taking the union-bound between~\cref{eq:total_displacement_of_large} and~\cref{eq:tdsp_bound} we get%
\begin{align*}
  \Pro{\dev(\pi_m) \leq 8n \cdot d \cdot k_{\kappa}^{2/3} + n} 
  & \geq 1 - \frac{1}{2} n^{-2} - \frac{1}{2} n^{-2} = 1 - n^{-2},
\end{align*}
and so
\[
   \Pro{\dev(\pi_m) \leq c \cdot n} \geq 1 - n^{-2},
\]
for the constant $c = 9 k_{\kappa}^{2/3}$, since $d$ and $k_{\kappa}^{2/3}$ are constants.

Now, we go ahead and prove these two claims.

\begin{proof}[Proof of \cref{claim:bound_on_tdev_tau}]
For any phase $(t_{i-1}, t_{i}]$, we have that \begin{align*}
\dev(\tau_{t_{i}})
  & \stackrel{(a)}{\leq} 4 \cdot \sum_{\stackrel{j \in [n]:}{|\tau_{t_{i}}'(j) - j| > 2k_i^{2/3}} } |\tau_{t_{i}}'(j) - j| \notag \\
  & \stackrel{(b)}{\leq} 4 \cdot \sum_{\stackrel{j \in [n] :}{ |\sigma_{t_{i}}'(j) - j| > k_i^{2/3}}} |\sigma_{t_{i}}'(j) - j|
  + 8 \cdot \sum_{j \in [n] } |\sigma_{t_{i}}'(j) - \tau_{t_i}'(j)|, \\
  & = 4 \cdot \sum_{\stackrel{j \in [n] :}{ |\sigma_{t_{i}}'(j) - j| > k_i^{2/3}}} | \sigma_{t_{i}}'(j) - j|
  + 8 \cdot \dev(\sigma_{t_i}', \tau_{t_i}'),
\end{align*}
using in $(a)$ the definition of $\theta$-filtering (\cref{def:theta-filtering}) and in $(b)$ that $(i)$ if $|\tau_{t_{i}}'(j) - j| > 2k_i^{2/3}$, then $|\sigma_{t_{i}}'(j) - j| > k_i^{2/3}$ or $|\sigma_{t_{i}}'(j) - \tau_{t_i}'(j)| > k_i^{2/3}$ and $(ii)$ if $|\sigma_{t_{i}}'(j) - j| > k_i^{2/3}$, then 
\[
  |\tau_{t_{i}}'(j) - j| \leq |\sigma_{t_{i}}'(j) - j| + |\sigma_{t_{i}}'(j) - \tau_{t_i}'(j)|,
\]
and otherwise
\begin{align*}
  |\tau_{t_{i}}'(j) - j| 
    & \leq |\sigma_{t_{i}}'(j) - j| + |\sigma_{t_{i}}'(j) - \tau_{t_i}'(j)| 
      \leq 2 \cdot |\sigma_{t_{i}}'(j) - \tau_{t_i}'(j)|.
\end{align*}
Carrying on
\begin{align}
\dev(\tau_{t_{i}}) 
  & \stackrel{(a)}{\leq} 4 \cdot \sum_{\stackrel{j \in [n] :}{ |\sigma_{t_{i}}'(j) - j| > k_i^{2/3}}} |\sigma_{t_{i}}'(j) - j|
  + 8 \cdot \dev(\sigma_{t_{i - 1}}, \tau_{t_{i - 1}}) \notag \\
  & \stackrel{(b)}{=} 4 \cdot \sum_{\stackrel{j \in [n] :}{ |\sigma_{t_{i}}'(j) - j| > k_i^{2/3}}} |\sigma_{t_{i}}'(j) - j|
  + 8 \cdot \dev(\tau_{t_{i-1}}) \notag \\
  & \stackrel{(c)}{=} 4 \cdot \sum_{\stackrel{j \in [n] :}{ \delta_{t_{i}}(j) > k_i^{2/3}}} \delta_{t_{i}}(j)
  + 8 \cdot \dev(\tau_{t_{i-1}}) \notag \\
  & \stackrel{(d)}{\leq} \sum_{s = 0}^{i} 8^{i - s + 1} \sum_{\stackrel{j \in [n] :}{ \delta_{t_s}(j) > k_s^{2/3}}} \delta_{t_s}(j), \label{eq:total_deviation_bound}
\end{align}
using in $(a)$ that the $\adm$ operations cannot increase the deviation between $\sigma$ and $\tau$ during the phase (by \cref{lem:admissible_swaps_are_good}), in $(c)$ that $\sigma_{t_i} = \id_n$, in $(d)$ that $\delta_{t_s}(j) = |(\sigma_{t_s}')^{-1}(j) - j|$
and in $(e)$ an inductive argument and that $\dev(\tau_{t_{0}}) = 0$ since $\tau_{t_0}$ is the identity distribution.
By \cref{lem:sum_large_delta_bound}, we have that
\[
  \Pr\Big[\sum_{j \in [n] : \delta_{t_i}(j) \geq k_i^{2/3}} \delta_{t_i}(j) \leq n \cdot e^{-\frac{1}{2R} \cdot k_i^{1/3}} \Big] \geq 1 - n^{-3}.
\]
By taking the union-bound over all $i = 1, \ldots, \kappa$, we have that
\begin{align*}
  & \Pr\Bigg[ \sum_{i = 0}^{\kappa} 8^{\kappa - i +1}\sum_{j \in [N] : \delta_{t_i}(j) \geq k_i^{2/3}} \delta_{t_i}(j) \leq n \cdot \sum_{i = 0}^{\kappa} 8^{\kappa - i + 1} \cdot e^{-\frac{1}{2R} \cdot k_i^{1/3}} \Bigg]  \geq 1 - n^{-3} \cdot \kappa \geq 1 - \frac{1}{4} n^{-2}.
\end{align*}
By considering the ratio of two consecutive terms for $i$ and $i+1$ in the sum (for $i \geq 1$), we have that
\[
 8 \cdot e^{\frac{1}{2R} \cdot (k_{i+1}^{1/3} - k_i^{1/3})} 
  \stackrel{(a)}{\leq} 8 \cdot e^{\frac{1}{2R} \cdot (-\frac{1}{2} k_{i}^{1/3})} 
  \stackrel{(b)}{\leq} \frac{1}{2},
\]
using in $(a)$ that $k_{i+1}^{1/3} \leq \frac{1}{2} k_i^{1/3}$ (since $k_\kappa \geq (\frac{8 \cdot 16d}{p})^4$) and in $(b)$ that $k_\kappa \geq (64 R \log (16R))^3$.
Hence,
\[
  \Pr\Big[ \sum_{i = 0}^{\kappa} 8^{\kappa - i+1} \sum_{j \in [n] : \delta_{t_i}(j) \geq k_i^{2/3}} \delta_{t_i}(j) \leq n \Big] \geq 1 - \frac{1}{2} n^{-2}. \qedhere
\]
\end{proof}

\begin{proof}[Proof outline of \cref{claim:bound_on_phi_m}]
\begin{align} \label{eq:tdsp_deterministic_bound}
\dsp(l_m, \tau_m) \leq \frac{n}{\tilde{\alpha}} \cdot \log \Big( 1 + \frac{\widetilde{\Phi}_m}{n} \Big),
\end{align}
where we used the convexity of the $\widetilde{\Phi}_m$, i.e.,
\begin{align*}
 & n \cdot \Big(\exp\Big( \tilde{\alpha} \cdot \frac{1}{n} \dsp(l_m, \tau_m) \Big) - 1\Big) = n \cdot \Big(\exp\Big(\tilde{\alpha} \cdot \frac{1}{n} \sum_{j \in [N]} |j - d \cdot \tau_m(l_m[j])| \Big) - 1\Big) \leq \widetilde{\Phi}_m.
\end{align*}

\subparagraph{Analysis for $\Phi$: Proving a $\poly(n)$ bound.} Note that by the choice of $\alpha$ and $d$, it follows that $\alpha (d-1) \geq \log 20$ and so by \cref{lem:new_sorting_step_drift} it satisfies the drop inequality for sorting steps. Further, by \cref{lem:theta-filtering}, we have that \begin{align} \label{eq:max_displacement_filtering}
    \mdev(\tau_{t_i}, \tau_{t_i}') \leq 4k_i^{2/3}.
\end{align}
As in the analysis in \cref{thm:mdev_whp}, we have that for any $r \geq \rho \geq \frac{16d}{p} \cdot n^2$,
\[
 \Ex{\Phi_{r}} \leq 2n \cdot e^{\alpha \cdot d \cdot \frac{1}{\lambda} \sqrt{108 \cdot c' \cdot k_0 \cdot \log n}}.
\]
For the next phase, because each displacement in $\tau$ with respect to $l$ changes $|l[i] - \tau(l[i])|$ by at most $k_0^{2/3}$, we have that
\[
 \Ex{\Phi_{t_0}} \leq 2n \cdot e^{\alpha \cdot d k_0^{2/3}},
\]
and similarly
\[
 \Ex{\Phi_{t_1}} \leq 2n \cdot e^{4\alpha \cdot d k_1^{2/3}}.
\]
Now, for any $i \in [1, \kappa]$, using the steps in~\cref{eq:phi_bound_with_reallocation} %
\begin{align*}
 \Ex{\Phi_{t_i}} 
  & \leq \Ex{\Phi_{t_{i-1}}} \cdot e^{-\frac{p\alpha}{4} \cdot k_i} \cdot e^{4\alpha \cdot d k_{i}^{2/3}} + n \cdot e^{4\alpha \cdot d k_{i}^{2/3}} 
    \leq 2n \cdot e^{4\alpha \cdot d k_{i}^{2/3}},
\end{align*}
where in the last inequality we used that that $k_{i} = \frac{16d}{p} \cdot k_{i-1}^{2/3}$. %
By \cref{lem:new_sorting_step_drift}, $\Phi$ can only decrease between phases, so for any $t \in [t_{i-1}, t_i)$%
\begin{align*}
 \Ex{\Phi_t} 
  & \leq \Ex{\Phi_{t_{i-1}}} \leq 2n \cdot e^{4\alpha \cdot d k_1^{2/3}} \leq n^2,
\end{align*}
using that $k_1 = O(\log n)$. By Markov's inequality for any $t \in [t_1, t_{\kappa}]$,\[
  \Pro{\Phi_t \leq n^7} \geq 1 - n^{-5},
\]
and by taking the union-bound over all steps in $[t_1, t_{\kappa}]$
\begin{align*}
  \Pr\Big[\bigcap_{t \in [t_1, t_{\kappa}]} \left\{ \Phi_t \leq n^7 \right\}\Big] 
  & \geq 1 - n^{-5} \cdot n \cdot O(\log n) \geq 1 - \frac{1}{2} n^{-3}.
\end{align*}
Therefore, for $\tilde{c} = \frac{7}{\alpha}$, it follows that 
\begin{align} \label{eq:small_max_displacement}
\Pr\Big[\bigcap_{t \in [t_1, t_{\kappa}]} \Big\{ \mdsp(l_t, \tau_t) \leq \tilde{c} \cdot \log n \Big\} \Big] \geq 1 - \frac{1}{2} n^{-3}.
\end{align}

From here onwards we will consider the \emph{capped process} that never performs a mixing operation if the maximum displacement exceeds $\tilde{c} \cdot \log n$ in any step in $[t_1, t_{\kappa}]$. In the end, by the union bound, \Whp~the two processes will agree.

\subparagraph{Analysis for $\widetilde{\Phi}$: Proving an $O(n)$ bound.} Recall that $\widetilde{\Phi} = \widetilde{\Phi}(\tilde{\alpha}, d)$ with $\tilde{\alpha} = \frac{1}{6\tilde{c}} = \frac{\alpha}{42}$ for the same $d$ as above. Note again that $\tilde{\alpha} \cdot (d-1) \geq \log 20$ and so as we did with $\Phi_{t_0}$ we deduce  that
\[
  \Ex{\widetilde{\Phi}_{t_0}} \leq 4n \cdot e^{4\alpha \cdot d \cdot k_0^{2/3}}.
\]
By applying \cref{lem:new_sorting_step_drift} and \cref{lem:many_sorting}, at step $t_1$ before the $\adm$ and $\lopt$ steps, we have that
\[
  \Ex{\widetilde{\Phi}_{t_1}} \leq 4n \cdot e^{4\tilde{\alpha} \cdot d \cdot k_0^{2/3}} \cdot e^{-\frac{\tilde{\alpha}p}{4} \cdot k_1} \leq n^{-3},
\]
using that $k_0 = k_1 = \frac{16^2 \cdot 108 \cdot d^2}{p^2} \cdot \frac{c'}{\lambda} \cdot \log n$ (and so $\frac{\tilde{\alpha} p}{4} \cdot k_1 \geq 60 \log n$).

Therefore, by Markov's inequality we have that
\[
 \Pro{\widetilde{\Phi}_{t_1} \leq 1} \geq 1 - n^{-3},
\]
and so, after moving the targets we have that
\begin{align} \label{eq:total_phi_markov}
\Pro{\widetilde{\Phi}_{t_1} \leq 4n \cdot e^{2\alpha \cdot d \cdot k_1^{2/3}} } \geq 1 - n^{-3}.
\end{align}

We will consider the interval $[t_i, t_{i+1}]$ consisting of $n \cdot k_{i+1}$ steps, where in step $t_i$ we have that $\widetilde{\Phi}_{t_i} \leq 4n \cdot e^{2\alpha \cdot d \cdot k_i^{2/3}}$. Our goal is to show that \Whp~$\widetilde{\Phi}_{t_{i+1}} \leq 4n$, by applying Azuma's inequality.

Let $\widetilde{\Phi}_t'$ be the value of the potential before applying the $\adm$ and $\lopt$ operations, which can only lead to a decrease, i.e., $\widetilde{\Phi}_t\leq\widetilde{\Phi}_t'$. 

Recall that since we are in the capped process, we have that $\mdsp(l_t, \tau_t) \leq \tilde{c} \cdot \log n$ holds in every step and so\[
  \widetilde{\phi}_t'(i)  \leq e^{\frac{1}{6 \tilde{c}} \cdot \tilde{c} \cdot \log n} = n^{1/6}.
\]
This also implies that 
\begin{align} \label{eq:phi_poly_bound}
    \widetilde{\Phi}_t' \leq n^{7/6}.
\end{align}

We define the auxiliary potential function at step $t \geq t_i$,
\[
  \widehat{\Phi}_t = \widetilde{\Phi}_t  \cdot \left( 1 - \frac{\tilde{\alpha} }{4 \cdot (n-1)} \right)^{-z(t)},
\] and \[
  \widehat{\Phi}_t' = \widetilde{\Phi}_t'  \cdot \left( 1 - \frac{\tilde{\alpha} }{4 \cdot (n-1)} \right)^{-z(t)} ,
\]
where $z(t)$ is the number of completed sorting steps from step $t_i+1$ up to step $t$. This is a super-martingale, since if $t$ is a sorting step then 
\begin{align*}
& \Ex{\widehat{\Phi}_{t+1} \, \middle| \, \mathcal{F}_t }
  \leq \widetilde{\Phi}_t' \cdot \Big( 1 - \frac{\tilde{\alpha} }{4 \cdot (n-1)} \Big)  \cdot \Big( 1 - \frac{\tilde{\alpha} }{4 \cdot (n-1)} \Big)^{-z(t)-1}  = \widehat{\Phi}_t',
\end{align*}
and if $t$ is a mixing step, then $\widehat{\Phi}_t \leq \widehat{\Phi}_{t+1}'$.

We now proceed to bound the difference $|\widehat{\Phi}_{t+1}' - \widehat{\Phi}_t|$. First, note that %
\begin{align} \label{eq:delta_tilde_phi_bound}
 \left| \widetilde{\Phi}_{t+1}' - \widetilde{\Phi}_t \right| \leq n^{1/6},
\end{align}
and further note that\begin{align} 
\Big( 1 - \frac{\tilde{\alpha} }{4 (n-1)} \Big)^{-z(t)}
  & \stackrel{(a)}{\leq} \Big( 1 + \frac{\tilde{\alpha} }{2 (n-1)} \Big)^{z(t)} 
    \stackrel{(b)}{\leq} e^{\frac{\tilde{\alpha}}{2 (n-1)} \cdot n \cdot k_{i+1}}
    \leq n^{1/6}, \label{eq:factor_bound}
\end{align}
using in $(a)$ that $(1 - \frac{\tilde{\alpha}}{4(n-1)})^{-1} \leq 1 + \frac{\tilde{\alpha}}{2 (n-1)}$ (since $\frac{1}{1 - \eps} \leq 1 + 2\eps$) and in $(b)$ that $z(t) \leq n \cdot k_{i+1}$, $k_{i} = o(\log n)$ (for $i \geq 2$) and $\tilde{\alpha}$ is a constant.

Therefore, combining the above bounds, it follows that 
\begin{align*}
  \left|\widehat{\Phi}_{t+1}' - \widehat{\Phi}_t\right| 
    & \leq \max\Big\{ \Big| \Big( 1 - \frac{\tilde{\alpha}}{4 (n{-}1)} \Big)^{-1} \cdot \widetilde{\Phi}_{t+1}' - \widetilde{\Phi}_t \Big|, \Big| \widetilde{\Phi}_{t+1}' -\widetilde{\Phi}_{t} \Big| \Big\} \cdot \Big( 1 - \frac{\tilde{\alpha}}{4 (n{-}1)} \Big)^{-z(t)}
     \\
    & \stackrel{(a)}{\leq} \Big( \big| \widetilde{\Phi}_{t+1}' - \widetilde{\Phi}_t \big| + \frac{\tilde{\alpha}}{2 (n{-}1)} \cdot \widetilde{\Phi}_{t+1}' \Big) \cdot \Big( 1 - \frac{\tilde{\alpha}}{4 (n{-}1)} \Big)^{{-}z(t)} \\
    & \stackrel{(b)}{\leq} \big( n^{1/6} + 2 n^{1/6} \big) \cdot  n^{1/6} \leq 3n^{1/3},
\end{align*}
using in $(a)$ that $(1 - \frac{\tilde{\alpha}}{4(n-1)})^{-1} \leq 1 + \frac{\tilde{\alpha}}{2 (n-1)}$ (since $\frac{1}{1 - \eps} \leq 1 + 2\eps$) and in $(b)$ the bounds \cref{eq:phi_poly_bound}, \cref{eq:delta_tilde_phi_bound} and \cref{eq:factor_bound}.

Applying Azuma's inequality (\cref{lem:azuma_modified}) for the sub-martingale $\widehat{\Phi}_{t_i}, \widehat{\Phi}_{t_i+1}', \widehat{\Phi}_{t_i+1}, \ldots$ with $\Lambda = n$ and $\Delta = 3n^{1/3}$ 
\begin{align*}
 & \Pro{ \widehat{\Phi}_{t_{i+1}} \leq \widehat{\Phi}_{t_i} + n \, \middle| \, \mathcal{F}_{t_i}, \widetilde{\Phi}_{t_i} \leq z_0} \geq 1 - \exp\Big( - \frac{n^2}{3n \cdot k_{i+1} \cdot (3n^{1/3})^{2} + 3n^{1/3}\cdot z_0} \Big),
\end{align*}
where $z_0 = 4n \cdot e^{4\tilde{\alpha} \cdot d \cdot k_i^{2/3}}$. 
By the definition of $\widehat{\Phi}_t$, we have that
\[
\widehat{\Phi}_{t_i} = \widetilde{\Phi}_{t_i} \leq 4n \cdot e^{4\tilde{\alpha} \cdot d \cdot k_i^{2/3}},
\]
and for $\widetilde{\Phi}_{t_{i+1}}''$ the value of the potential before the $\theta$-filtering, assuming that $\{ \widehat{\Phi}_{t_{i+1}} \leq \widehat{\Phi}_{t_i} + n \}$ holds, we get that
\begin{align*}
 \widetilde{\Phi}_{t_{i+1}}''
   & \leq \Big(4n \cdot e^{4\tilde{\alpha} \cdot d \cdot k_i^{2/3}} + n \Big) \cdot \Big( 1 - \frac{\tilde{\alpha}}{4 (n-1)} \Big)^{p \cdot n \cdot k_{i+1}} \leq 5n \cdot e^{4\tilde{\alpha} \cdot d \cdot k_i^{2/3}} \cdot e^{- \frac{\tilde{\alpha} p}{4 \cdot (n-1)} \cdot n \cdot k_{i+1}} \leq 4n,
\end{align*}
using that $1 + x \leq e^x$ and that $k_{i+1} = \frac{16d}{p} \cdot k_i^{2/3}$.

Therefore, after the $\theta$-filtering of the targets, using~\cref{eq:max_displacement_filtering}, it follows that 
\begin{align*}
& \Pro{\left. \widetilde{\Phi}_{t_{i+1}} \leq 4n \cdot e^{4\tilde{\alpha} \cdot d \cdot k_{i+1}^{2/3}} \, \right| \, \mathcal{F}_{t_i}, \widetilde{\Phi}_{t_i}  \leq 4n \cdot e^{4\tilde{\alpha} \cdot d \cdot k_i^{2/3}}} \geq 1 - n^{-3},
\end{align*}
using the chain rule we obtain
\begin{align*}
  & \Pr\Big[ \bigcap_{i \in [\kappa]}\Big\{ \widetilde{\Phi}_{t_{i}} \leq 4n \cdot e^{4\tilde{\alpha}\cdot d \cdot k_{i}^{2/3}} \Big\}  \Big|  \,\mathcal{F}_{t_0}, \widetilde{\Phi}_{t_0} \leq 4n\cdot e^{4\tilde{\alpha} \cdot d \cdot k_1^{2/3}}\Big]  \geq (1 - n^{-3})^{\kappa},
\end{align*}
and combining with \cref{eq:total_phi_markov}, we get that 
\begin{align*}
 & \Pr\Big[\bigcap_{i \in [\kappa]} \Big\{ \widetilde{\Phi}_{t_{i}} \leq 4n \cdot e^{4\tilde{\alpha} \cdot d \cdot k_{i}^{2/3}} \Big\}\Big] 
   \geq \left( 1 - n^{-3} \right)^{\kappa} \cdot \left( 1 - n^{-3} \right) 
   \geq 1 - \frac{1}{4}n^{-2}.
\end{align*}
Finally taking the union bound with \cref{eq:small_max_displacement} we get that the capped process agrees with the original process  and so using \cref{eq:tdsp_deterministic_bound}, we conclude that 
\[
 \Pro{\dsp(l_m, \tau_m) \leq 8n \cdot d \cdot k_{\kappa}^{2/3}} 
   \geq 1 - \frac{1}{4}n^{-2} \frac{1}{2}n^{-3}s
   \geq 1 - \frac{1}{2} n^{-2}. \qedhere
\]
\end{proof}
This completes the proof of \cref{thm:total_deviation_whp}.
\end{proof}

\section{Lower Bounds}\label{sec:lower_bounds}
In this section, first we prove a lower bound on the maximum deviation which is tight for any $b \in \big[1, \frac{n}{\log^2 n}\big]$ and a lower bound on the total deviation which is tight for any constant $b \geq 1$ (\cref{lem:max_and_total_deviation_lb}). Next, we prove a lower bound on the convergence time for Na\"ive Sort in the setting of~\cite{AnagnostopoulosKMU11} to reach a configuration with $O(b \cdot \log n)$ maximum deviation (\cref{lem:convergence_time_lb}).

\begin{lemma}[Maximum and Total Deviation] \label{lem:max_and_total_deviation_lb} 
Consider Na\"ive Sort under random adjacent rank swaps for any $b \in \big[1, \frac{n}{\log^2 n}\big]$. Then, starting from a sorted permutation $(i)$ it holds that for $m = \frac{1}{100} \cdot (b+1)^2 \cdot n \log n$,
\[
  \Pr\Big[ \bigcup_{t \in [1, m]}\Big\{ \mdev(\pi_t) \geq \frac{1}{800} \cdot (b+1) \cdot \log n\Big\} \Big] \geq 1 - o(1),
\]
and $(ii)$ at step $m = \frac{1}{100} \cdot (b+1)^2 \cdot n$, it holds that
\[
  \Pr\Big[ \dev(\pi_m) \geq \frac{1}{1600} \cdot (b+1) \cdot n \Big] \geq 1 - o(1).
\]
\end{lemma}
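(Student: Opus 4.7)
The plan is to derive both lower bounds from the observation that, starting from the sorted configuration $\pi_0 = \id_n$, mixing steps inject disorder at a rate that Na\"ive Sort (acting on a single uniformly random adjacent pair) cannot keep up with.

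\textbf{Part (i).} I would track, for each item value $e \in [n]$, its displacement $Y_t(e) := \pi_t^{-1}(e) - e$ starting from $Y_0(e) = 0$. A mixing step touches $e$ with probability $2/n$ (the sampled index $i$ must lie in $\{e-1,e,e+1\}$ with a matching sign of~$s$), and since $\pi_t$ stays close to the identity on the relevant time scale the induced increment on $Y_t(e)$ is $\pm 1$ with equal probability. A sorting step touches $e$ with probability at most $2/(n-1)$ and, when triggered, moves $Y_t(e)$ toward~$0$. Treated as a birth--death chain, $Y_t(e)$ has mixing-rate $r_{\mathrm{mix}} = 2b/(n(b+1))$ per step and restoring-rate $r_{\mathrm{sort}} \leq 2/(n(b+1))$ per step, so its stationary distribution has geometric tails with decay rate $\rho = r_{\mathrm{mix}}/(r_{\mathrm{mix}}+2r_{\mathrm{sort}})$ and expectation $\mathbb{E}[|Y_\infty(e)|] = \Theta(r_{\mathrm{mix}}/r_{\mathrm{sort}}) = \Theta(b)$. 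The single-element burn-in time is $O(1/r_{\mathrm{sort}}) = O(n(b+1))$, which is $\ll m = \tfrac{1}{100}(b{+}1)^2 n\log n$, so by $t = m$ each $|Y_t(e)|$ is approximately geometric with mean $\Theta(b)$; taking the maximum over the $n$ items yields $\max_e |Y_t(e)| = \Omega(b\log n) = \Omega((b{+}1)\log n)$ with probability $1 - o(1)$, comfortably exceeding the $\tfrac{1}{800}(b{+}1)\log n$ threshold. Since the claim is about the existence of any $t \in [1,m]$, the union bound over $t$ is free.

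\textbf{Part (ii).} For total deviation I would instead track the Kendall-tau distance $K_t := K(\pi_t,\id_n)$, which by Eq.~\cref{eq:Kvsdis} satisfies $\dev(\pi_t) \geq K_t$. A mixing step changes $K_t$ by exactly $\pm 1$: it creates an inversion with probability $1 - q_t$, where $q_t$ is the fraction of value-adjacent pairs currently inverted, and destroys one otherwise. A sorting step decreases $K_t$ by $1$ with probability $p_t$, the fraction of position-adjacent inversions, and is otherwise a no-op. Starting from $K_0 = 0$ and as long as $K_t \ll n$, both $p_t$ and $q_t$ are bounded by $K_t/(n-1)$ and hence small, so the per-step expected change is at least $\tfrac{b(1 - 2q_t) - p_t}{b+1} \geq \tfrac{b}{2(b+1)}$. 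Summing over $m = \tfrac{1}{100}(b{+}1)^2 n$ steps, $\mathbb{E}[K_m] \geq \tfrac{bm}{2(b+1)} - O(m^2/n) = \Omega((b{+}1)n)$ for every $b \geq 1$, and an Azuma--Hoeffding argument on the $\pm 1$ bounded-difference martingale $K_t - \mathbb{E}[K_t \mid \mathcal{F}_{t-1}]$ promotes this to the high-probability bound $K_m \geq \tfrac{1}{1600}(b{+}1)n$.

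\textbf{Main obstacles.} The trickiest step in part (i) is that the walks $\{Y_t(e)\}_{e\in[n]}$ all live inside a single permutation and are therefore not truly independent when taking the maximum-over-$n$. A clean route is to restrict the maximum to a well-spaced sub-collection of item values (say at positions $\{1, c\log n, 2c\log n,\ldots\}$), whose trajectories remain essentially independent because no single walk drifts across the spacing within the window; alternatively, a negative-correlation / coupling argument for permutations near the identity can be invoked. In part (ii) the tightest case is $b = 1$, where the per-step drift is smallest; there one must verify self-consistently that $p_t, q_t \leq K_t/(n-1) \leq m/(n-1) = O((b{+}1)^2/100)$ throughout the window, ensuring the cumulative sort destruction $\sum p_t$ stays at $O(m^2/n) = o(m)$ and does not cancel the $\Omega(m)$ creation from mixing.
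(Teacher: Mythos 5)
The overall strategy for part (i) — tracking a displacement-like quantity for each element, noting mixing steps drive an unbiased random walk while sorting steps can cancel only a bounded amount, and handling dependence by restricting to well-spaced indices — does match the paper's proof. However, the route you take via an equilibrium birth--death argument is not rigorous as stated. First, the ``restoring rate'' $r_{\mathrm{sort}}$ from sorting is not a constant: a sorting step moves an element toward its target only when the adjacent pair is actually inverted, which is configuration-dependent, so it does not straightforwardly give a geometric stationary distribution. Second, the burn-in estimate $O(n(b+1))$ is wrong at the tail level you need: to diffuse out to displacement $L = \Theta(b\log n)$ takes time on the order of $L^2/r_{\mathrm{mix}} = \Theta\bigl(n(b+1) b \log^2 n\bigr)$, which exceeds $m = \Theta\bigl((b+1)^2 n\log n\bigr)$, so the walk is far from its stationary tail at time $m$. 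The paper instead proves what you need directly: each well-spaced index sees roughly $(b+1)^2\log n /100$ unbiased mixing increments, Gaussian anti-concentration plus a union bound over the $K$ indices produces one excursion of size $\Omega((b{+}1)\log n)$, and a Chernoff bound shows the number of sorting steps hitting that index is at most $2c(b{+}1)\log n$, so the cancellation is a strict fraction of the excursion. Reframing your reasoning that way makes it coincide with the paper's.

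Part (ii) is a genuinely different decomposition (tracking $K_t = K(\pi_t,\id_n)$ rather than per-element displacements), and it would be an elegant alternative for constant $b$, but there is a concrete gap in the range of $b$: the only control you have on $q_t$ and $p_t$ is $q_t, p_t \leq K_t/(n-1) \leq m/(n-1) \approx (b+1)^2/100$, and the per-step drift estimate $\frac{b(1-2q_t)-p_t}{b+1} \geq \frac{b}{2(b+1)}$ requires $2q_t + p_t/b \leq 1/2$, which already fails from this bound once $b \geq 4$. For $b$ anywhere near the upper end of the allowed range $n/\log^2 n$, the bound $m/(n-1)$ is orders of magnitude larger than $1$, so it is vacuous as a probability bound and the drift argument collapses. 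You would need an entirely separate argument to keep $q_t$ bounded away from $1/2$ throughout $[0,m]$, and it is not clear one exists in this generality (the paper's per-index argument avoids the issue entirely, since it never needs to control the fraction of inverted value-adjacent pairs). So the Kendall-tau route, while clean at $b=1$, does not prove the lemma for all $b \in [1, n/\log^2 n]$ without a new idea for taming $q_t$.
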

\begin{proof}
\emph{Statement (i).} Consider $m = c\cdot (b+1)^2 \cdot n   \log n$ where $c = \frac{1}{100}$. We start with the sorted permutation $\pi_0 = \id_n$. For each $i$, we keep track of $(j_t(i))_{t\geq 0}$ defined as follows $j_0(i) = i$, in each mixing step we stay at the same index so $j_t(i) = j_{t-1}(i)$ and in each sorting step we follow the element at $j_{t-1}(i)$, so $j_t(i) \in \{ j_{t-1}(i) - 1, j_{t-1}(i) + 1 \}$. Let $K =(n-1)/D$. We consider the starting points $\mathcal{J}$ which are multiples of $D = \frac{1}{400} \cdot (b+1) \cdot \log n$, i.e., $\mathcal{J} = \{ i \cdot D : i \in [K]\}$.

In these steps $1$ through $m$, the indices in $\mathcal{J}$ will be involved in at least $\frac{1}{2} \cdot m \cdot \frac{n}{K}$ mixing steps. Also during these steps, as long as all indices have a deviation of at most $\frac{1}{800} \cdot (b+1) \cdot \log n$ it means that each $\pi_t(j_t(i))$ is doing an unbiased random walk that does not interfere with the others. Hence, \Whp~there is a particular $j_t(i)$ with $|j_t(i) - i| \geq \frac{1}{4} \sqrt{c} \cdot (b+1) \cdot \log n$, due to the mixing steps (by taking the maximum deviation of the $K \geq \log^2 n$ unbiased random walks). Further, by a Chernoff bound (\cref{lem:chernoff_bound}), \Whp~this value will be chosen in a sorting step at most $2 \frac{m}{n \cdot (b+1)} = 2c \cdot (b+1) \cdot \log n$ times. Even if each of these sorting steps moves the item closer to the target, \Whp~we have that 
\begin{align*}
 \mdev(\pi_m) 
  & \geq \frac{1}{40} \cdot (b+1) \cdot \log n - \frac{1}{50} \cdot (b+1) \cdot \log n  = \frac{1}{200} \cdot (b+1) \cdot \log n,
\end{align*}
which concludes the proof for the first statement.

\paragraph{\it Statement (ii).} Again, we will consider the first $m=c\cdot (b+1)^2 \cdot n$ steps where $c = \frac{1}{100}$. By the Central Limit Theorem \Whp~at least $\frac{1}{4} n$ of the indices $i$ will have $|j_t(i) - i| \geq \frac{1}{4} \cdot \sqrt{c} \cdot (b+1)$ and at most a $\frac{1}{2}$ of these will be involved in more than $2 \cdot c \cdot (b+1)$ sorting steps. Hence, it follows that \Whp
\[
\dev(\pi_m) \geq \frac{1}{8} \cdot \left( \frac{1}{40} - \frac{1}{50}\right) \cdot (b+1) \cdot n,
\]
which concludes the proof for the second statement.
\end{proof}

Next, we prove a lower bound on the convergence time depending on the maximum deviation of the initial configuration.

\begin{lemma}[Convergence Time] \label{lem:convergence_time_lb}
Consider Na\"ive Sort under random adjacent rank swaps for any $b \geq 1$, starting from any configuration $\pi_0$ with $\mdev(\pi_0) \geq 96 \log n$. Then, the time to reach a configuration with $O(\log n)$ maximum deviation is at least $\frac{1}{2} (n-1) \cdot \mdev(\pi_0)$ with probability at least $1 - n^{-1}$.
\end{lemma}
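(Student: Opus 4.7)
The plan is to focus on a single item whose initial deviation equals $\Delta := \mdev(\pi_0) \geq 96 \log n$, and to show that with high probability its deviation cannot drop below a constant multiple of $\log n$ within the first $m := \lfloor \tfrac{1}{2} (n-1)\Delta\rfloor$ steps. Pick any position $j_0$ attaining the max in $\mdev(\pi_0)$, and assume by symmetry that $\pi_0(j_0) > j_0$. I will track the item $s^*$ initially at position $j_0$ via its position $p_t := \nu_t^{-1}(s^*)$ in the maintained order and its rank $r_t := \rho_t^{-1}(s^*)$ in the true order, so $p_0 = j_0$ and $r_0 = \pi_0(j_0) = j_0 + \Delta$. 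A sorting step can change $p_t$ by $\pm 1$ but leaves $r_t$ unchanged; a random adjacent rank swap can change $r_t$ by $\pm 1$ but leaves $p_t$ unchanged. Hence $\delta_t := |p_t - r_t|$ starts at $\Delta$ and changes by at most $1$ per step, and since $\mdev(\pi_t) \geq \delta_t$ it suffices to keep $\delta_t$ above $c \log n$ for a suitable absolute constant $c$.

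The key estimate, which I would verify by a short case analysis, is that for every step $t$ and every history $\mathcal{F}_{t-1}$,
\[
\Pr\bigl[\delta_t = \delta_{t-1}-1 \,\bigm|\, \mathcal{F}_{t-1}\bigr] \le \tfrac{1}{n-1}.
\]
In a sorting step, only one specific choice of the uniform index $i \in [n-1]$ (namely $p_{t-1}$ when $r_{t-1}>p_{t-1}$, or $p_{t-1}-1$ otherwise), combined with the swap predicate actually firing, can push $\delta_t$ down; in a mixing step, only one specific rank ($r_{t-1}-1$ or $r_{t-1}$, depending on the sign) can lower it. Either way, at most one choice out of $n-1$ equally likely ones is favourable, and this bound holds uniformly in the history.

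Let $X_t$ be the number of decrease events in the first $t$ steps. The uniform conditional bound above lets me couple $X_m$ to a binomial variable $Y \sim \mathrm{Bin}\!\bigl(m, 1/(n-1)\bigr)$ with $X_m \le Y$ almost surely, so $\Exp[Y] = \mu = \Delta/2$. Because each decrease lowers $\delta$ by exactly $1$ while increases only help, $\delta_t \ge \Delta - X_t$, so the event $\{\delta_t \le c \log n \text{ for some } t \le m\}$ forces $X_m \ge \Delta - c\log n$. Picking $c$ small enough (any $c \le 24$ suffices, since then $\Delta \ge 96 \log n$ yields $\Delta - c\log n \ge \tfrac{3}{2}\mu$), the multiplicative Chernoff bound gives
\[
\Pr\!\bigl[X_m \ge \Delta - c\log n\bigr] \le \Pr\!\bigl[Y \ge \tfrac{3}{2}\mu\bigr] \le e^{-\Omega(\mu)} \le n^{-\Omega(1)} \le n^{-1}.
\]
Combining, $\mdev(\pi_t) \ge \delta_t > c\log n$ for every $t \le m$ with probability at least $1 - n^{-1}$, which is the stated lower bound.

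The main subtlety to handle carefully is the uniform-in-history case analysis for the decrease probability: one needs to verify the $1/(n-1)$ bound also in the boundary cases $r_{t-1}=p_{t-1}$ (where $\delta_t$ cannot decrease at all), and when the tracked item's position or rank sits at the boundary $1$ or $n$ (where some of the listed favourable choices are simply unavailable, which only tightens the bound).
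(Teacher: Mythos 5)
Your proof is correct, and it takes a genuinely different and in fact cleaner route than the paper's. The paper's proof works inside the merged permutation $\pi_t = \rho_t^{-1}\circ\nu_t$ and tracks a distinguished element $i_t$: since a mixing step is an adjacent \emph{value} swap in $\pi_t$, a single such step can move the tracked element's $\pi$-position by an arbitrary amount, so the paper must occasionally \emph{switch} the tracked element to one of its rank-neighbours to guarantee that the tracked displacement drops by at most $1$ per step, and then argue the per-step decrease probability is $\le 1/(n-1)$. Your proof sidesteps this entirely by keeping $\nu$ and $\rho$ separate and following a single fixed item $s^*$ through its coordinates $p_t = \nu_t^{-1}(s^*)$ and $r_t = \rho_t^{-1}(s^*)$: a sorting step is an adjacent positional swap in $\nu$ (so $p_t$ moves by $\pm 1$, $r_t$ frozen), and a random adjacent rank swap is an adjacent positional swap in $\rho$ (so $r_t$ moves by $\pm 1$, $p_t$ frozen). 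Hence $\delta_t = |p_t - r_t|$ is a bona fide $\pm 1$ walk with no switching device needed, and the at-most-one-favourable-index-in-$[n-1]$ observation is transparent in each case. Since $\mdev(\pi_t) = \max_s |\nu_t^{-1}(s) - \rho_t^{-1}(s)| \ge \delta_t$, the rest follows exactly as you wrote, with the stochastic domination by $\mathrm{Bin}(m, 1/(n-1))$ and a one-line Chernoff bound giving $\delta_t > 24\log n$ for all $t \le m$ with probability $\ge 1 - n^{-4} \ge 1 - n^{-1}$. The $(\nu,\rho)$-split is the real saving here: it converts both step types into adjacent positional swaps in the respective permutation, so the ``decreases by at most $1$, and only via one specific index'' claim becomes a clean symmetric case analysis rather than the paper's more delicate switching argument.
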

\begin{proof}
Consider the first $m = \frac{1}{2} (n-1) \cdot \mdev(\pi_0)$ steps of the algorithm. We will maintain a sequence of elements $(i_t)_{t = 0}^m$, with the property that their displacement decreases by at most $1$ and this happens with probability at most $1/(n-1)$. We start with $i_0$ being an element having the maximum deviation at step $0$. Then, at step $t \geq 0$: 
\begin{itemize}
  \item If $t+1$ is a sorting step, at most one of the two swaps that involve $i_t$, either at index $\pi_t^{-1}(i_t) - 1$ or at $\pi_t^{-1}(i_t)$ will reduce the displacement of $i_t$. So we keep $i_{t+1} = i_t$.
  \item If $t+1$ is a mixing step, it could be that one of the two swaps involving $i_t$ decreases the displacement of $i_t$ by more than $1$. In this case the displacement of either $i_t - 1$ or $i_t +1$ will be at least $|i_t - \pi_t(i_t)| - 1$ (after the swap), so we update $i_{t+1}$ to that element.
\end{itemize}
So, the displacement of element $i_t$ in each step decreases by~$1$ with probability at most $1/(n-1)$. Let~$D_m$ be the amount it decreases in $m$ steps, then, using a Chernoff bound (\cref{lem:chernoff_bound})
\begin{align*}
\Pr\Big[D_m \leq \Big(1 + \frac{1}{2}\Big) \cdot \frac{m}{n-1}\Big] 
    & \geq 1 - e^{-\frac{1}{4} \cdot \frac{1}{2} \cdot\frac13\cdot \mdev(\pi_0)}  
      \geq 1 - e^{- \frac{1}{4} \cdot \frac{1}{3} \cdot \frac{1}{2} \cdot 96 \cdot \log n} 
      = 1 - n^{-1},
\end{align*}
using that $\mdev(\pi_0) \geq 96 \cdot \log n$, and so
\[
\Pro{D_m \leq \frac{3}{4}\mdev(\pi_0) } \geq 1 - n^{-1}.
\]

Finally, using that $\mdev(\pi_m) \geq \mdev(\pi_0) - D_m$, we conclude that
\[
  \Pro{\mdev(\pi_m) \geq \frac{1}{4}\mdev(\pi_0)} 
    \geq 
    1 - n^{-1},
\]
and therefore the process takes at least $m$ steps to converge. 
\end{proof}

\section*{APPENDIX}

\subsection{Tools}

We begin with a standard Chernoff bound.
\begin{lemma}[Chernoff's bound]\label{lem:chernoff_bound}
Let $X_1, \ldots, X_n$ be independent random variables taking values in $\{ 0, 1 \}$. Let $X = \sum_{i = 1}^n X_i$ and $\mu = \Ex{X}$. Then for any $\delta \in (0, 1)$, we have that
\begin{align*}
  & \max\left\{ \Pro{X \geq (1 + \delta) \cdot \mu}, \Pro{X \leq (1 - \delta) \cdot \mu} \right\} \leq e^{-\delta^2 \mu/3}.
\end{align*}
\end{lemma}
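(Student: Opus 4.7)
The plan is to prove both tails via the standard exponential moment method (Bernstein/Chernoff trick). For the upper tail, I would fix a parameter $t > 0$ and apply Markov's inequality to the random variable $e^{tX}$, writing
\[
  \Pro{X \geq (1+\delta)\mu} = \Pro{e^{tX} \geq e^{t(1+\delta)\mu}} \leq e^{-t(1+\delta)\mu} \cdot \Ex{e^{tX}}.
\]
By independence, $\Ex{e^{tX}} = \prod_{i=1}^n \Ex{e^{tX_i}}$, and for each Bernoulli $X_i$ with $p_i = \Pro{X_i = 1}$ we have $\Ex{e^{tX_i}} = 1 + p_i(e^t - 1) \leq e^{p_i(e^t - 1)}$, using $1 + x \leq e^x$. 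Multiplying these estimates yields $\Ex{e^{tX}} \leq e^{\mu(e^t - 1)}$. The symmetric lower-tail argument uses $t > 0$ applied to $-X$ (equivalently, $Y_i = 1 - X_i$) and gives, after the same chain of inequalities, $\Pro{X \leq (1-\delta)\mu} \leq e^{\mu(e^{-t} - 1) + t(1-\delta)\mu}$.

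Next I would optimize the parameter $t$ in each bound. For the upper tail, choosing $t = \ln(1+\delta) > 0$ yields the classical form
\[
  \Pro{X \geq (1+\delta)\mu} \leq \left( \frac{e^{\delta}}{(1+\delta)^{1+\delta}} \right)^{\mu},
\]
and choosing $t = \ln\frac{1}{1-\delta}$ in the lower-tail version yields
\[
  \Pro{X \leq (1-\delta)\mu} \leq \left( \frac{e^{-\delta}}{(1-\delta)^{1-\delta}} \right)^{\mu}.
\]

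The final step, which is the only step requiring any real calculation, is to simplify both right-hand sides into the clean form $e^{-\delta^2 \mu / 3}$. For the upper tail this reduces to showing $\delta - (1+\delta)\ln(1+\delta) \leq -\delta^2/3$ for $\delta \in (0,1)$, and for the lower tail to showing $-\delta - (1-\delta)\ln(1-\delta) \leq -\delta^2/3$ (in fact $\leq -\delta^2/2$, which is stronger than needed) on the same interval. Both are one-variable calculus facts: I would verify them by defining $f(\delta)$ equal to the left-hand side minus the right-hand side, checking $f(0) = f'(0) = 0$, and showing $f''(\delta) \leq 0$ on $(0,1)$ via a direct computation of the second derivative.

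The main obstacle, such as there is one, is purely this last algebraic step; everything before it is automatic once the MGF bound $\Ex{e^{tX_i}} \leq e^{p_i(e^t-1)}$ is in place. The only subtlety to flag is that the lower-tail bound in fact holds with constant $2$ in the exponent, but we state it with constant $3$ so that a single clean inequality covers both tails uniformly for $\delta \in (0,1)$.
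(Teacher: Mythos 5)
The paper states this lemma as a standard tool and does not prove it, so there is no in-paper argument to compare against; your exponential-moment derivation is the textbook route and its structure is sound: Markov's inequality applied to $e^{tX}$, the bound $\Ex{e^{tX_i}} \le e^{p_i(e^t-1)}$, the optimized choices $t=\ln(1+\delta)$ and $t=\ln\frac{1}{1-\delta}$, and the reduction to two scalar inequalities in $\delta$ are all correct.

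The one concrete flaw is in your plan for verifying the upper-tail inequality $\delta-(1+\delta)\ln(1+\delta)\le -\delta^2/3$. Setting $f(\delta)=\delta-(1+\delta)\ln(1+\delta)+\delta^2/3$, you propose to show $f''\le 0$ on $(0,1)$, but $f''(\delta)=\tfrac{2}{3}-\tfrac{1}{1+\delta}$ is negative only for $\delta<\tfrac12$ and positive for $\delta>\tfrac12$, so the concavity argument fails as stated (it does work for the lower tail, where the analogous second derivative is $-\delta/(1-\delta)\le 0$). The inequality itself is true and the patch is easy: $f'(\delta)=\tfrac{2\delta}{3}-\ln(1+\delta)$ satisfies $f'(0)=0$, is decreasing on $(0,\tfrac12)$ and increasing on $(\tfrac12,1)$, and $f'(1)=\tfrac23-\ln 2<0$, so $f'\le 0$ on $(0,1)$ and hence $f\le f(0)=0$. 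Alternatively, use the standard estimate $\ln(1+\delta)\ge \tfrac{2\delta}{2+\delta}$ to get $(1+\delta)\ln(1+\delta)-\delta\ge \tfrac{\delta^2}{2+\delta}\ge \tfrac{\delta^2}{3}$ for $\delta\in(0,1)$. With either fix your proof is complete.
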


Next, we state Azuma's inequality.
\begin{lemma}[Azuma's Inequality for Super-Martingales {\cite[Problem 6.5]{DP09}}]\label{lem:azuma} %
Let $X_0, \ldots, X_n$ be a super-martingale satisfying $|X_i - X_{i-1}| \leq \Delta_i$ for any $i \in [n]$, then for any $\Lambda > 0$,
\[
\Pro{X_n \geq X_0 + \Lambda} \leq \exp\left(- \frac{\Lambda^2}{2 \cdot \sum_{i=1}^n \Delta_i^2} \right).
\]
\end{lemma}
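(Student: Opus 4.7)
The plan is to use the standard exponential moment method (Chernoff-style bound) adapted to super-martingales. Writing $Y_i = X_i - X_{i-1}$, the super-martingale hypothesis gives $\Ex{Y_i \mid X_0,\ldots,X_{i-1}} \leq 0$, while the bounded-difference hypothesis gives $|Y_i| \leq \Delta_i$. The goal will be to control the conditional moment generating function $\Ex{e^{\lambda Y_i} \mid X_0,\ldots,X_{i-1}}$ for $\lambda > 0$, and then telescope.

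First I would bound the conditional MGF of each $Y_i$. Since $e^{\lambda y}$ is convex in $y$, for $|y| \leq \Delta_i$ we have the linear upper bound
\[
e^{\lambda y} \leq \frac{\Delta_i + y}{2\Delta_i}\cdot e^{\lambda \Delta_i} + \frac{\Delta_i - y}{2\Delta_i} \cdot e^{-\lambda \Delta_i}.
\]
Taking conditional expectation and using $\Ex{Y_i \mid X_0,\ldots,X_{i-1}} \leq 0$ together with $\lambda > 0$ (so that the coefficient of $\Ex{Y_i \mid \cdot}$ in the resulting bound is non-negative), one obtains
\[
\Ex{e^{\lambda Y_i} \,\middle|\, X_0,\ldots,X_{i-1}} \leq \cosh(\lambda \Delta_i) \leq e^{\lambda^2 \Delta_i^2 / 2},
\]
where the last step is the familiar Taylor comparison $\cosh(x) \leq e^{x^2/2}$.

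Next I would telescope using the tower property: iterating the MGF bound over $i=1,\ldots,n$ yields
\[
\Ex{e^{\lambda(X_n - X_0)}} \leq \exp\Bigl(\tfrac{\lambda^2}{2}\sum_{i=1}^n \Delta_i^2\Bigr).
\]
Finally, applying Markov's inequality to $e^{\lambda(X_n - X_0)}$ gives, for any $\lambda > 0$,
\[
\Pro{X_n - X_0 \geq \Lambda} \leq \exp\Bigl(-\lambda \Lambda + \tfrac{\lambda^2}{2}\sum_{i=1}^n \Delta_i^2\Bigr),
\]
and optimizing over $\lambda$ at $\lambda = \Lambda/\sum_i \Delta_i^2$ yields the claimed bound.

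There is no real obstacle here; the only subtle point is making sure the super-martingale direction is handled correctly, namely that when we multiply through by $\lambda > 0$ in the convexity bound the sign of $\Ex{Y_i \mid \cdot}$ works in our favor (this is why the one-sided tail bound holds for super-martingales, as opposed to requiring an exact martingale). Since this is a well-known textbook result, I would simply cite \cite{DP09} in place of redoing the computation in full.
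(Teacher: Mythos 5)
Your outline is correct: the convexity-based MGF bound $\Exp\bigl[e^{\lambda Y_i}\mid \mathcal{F}_{i-1}\bigr]\le \cosh(\lambda\Delta_i)\le e^{\lambda^2\Delta_i^2/2}$ uses the super-martingale condition in exactly the right direction (the coefficient of $\Exp[Y_i\mid\cdot]$ is $\sinh(\lambda\Delta_i)/\Delta_i\ge 0$ for $\lambda>0$), and the telescoping/Markov/optimization steps are standard and sound. The paper itself gives no proof of this lemma and simply cites \cite[Problem 6.5]{DP09}, which is also what you conclude you would do, so the approaches coincide.
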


Next, we state and prove a variant of the Azuma's inequality in which every other step the sub-martingale may decrease by an arbitrary amount.

\begin{lemma}[Modified Azuma]\label{lem:azuma_modified}
Let $X_0, \ldots, X_n$ be a non-negative super-martingale with $n$ an even integer. Further, assume it satisfies $|X_{i} - X_{i-1}| \leq \Delta$ for any odd $i \in [n]$ and $X_{i} \leq X_{i-1}$ for any even $i \in [n]$. Then for any $\Lambda > 0$,
\[
\Pro{X_n \geq X_0 + \Lambda} \leq \exp\left(- \frac{\Lambda^2}{3n\Delta^2 + 2X_0 \cdot \Delta} \right).
\]
\end{lemma}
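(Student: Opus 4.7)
The plan is to reduce the statement to the standard Azuma inequality (\cref{lem:azuma}), exploiting the fact that the even-indexed steps are non-positive and can therefore be dropped in the upper tail. Set $Y_i := X_i - X_{i-1}$. The super-martingale hypothesis gives $\Ex{Y_i \mid \mathcal{F}_{i-1}} \leq 0$ for every $i$; additionally, for odd $i$ we have $|Y_i| \leq \Delta$, and for even $i$ we have $Y_i \leq 0$ pointwise.

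The key observation is the deterministic inequality
\[
  X_n - X_0 \;=\; \sum_{i=1}^{n} Y_i \;\leq\; \sum_{j=1}^{n/2} Y_{2j-1} \;=:\; S_{n/2},
\]
so it suffices to bound $\Pro{S_{n/2} \geq \Lambda}$. Let $\mathcal{G}_j := \mathcal{F}_{2j-1}$ and $S_0 := 0$. Then $(S_j)_{j=0}^{n/2}$ is a super-martingale with respect to $(\mathcal{G}_j)$: by the tower property, $\Ex{Y_{2j-1} \mid \mathcal{G}_{j-1}} \leq \Ex{\,\Ex{Y_{2j-1} \mid \mathcal{F}_{2j-2}}\,\mid \mathcal{G}_{j-1}\,} \leq 0$, and the increments satisfy $|S_j - S_{j-1}| = |Y_{2j-1}| \leq \Delta$. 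Applying \cref{lem:azuma} to $(S_j)$ then yields
\[
  \Pro{S_{n/2} \geq \Lambda} \;\leq\; \exp\!\left(-\frac{\Lambda^2}{2 \cdot (n/2) \cdot \Delta^2}\right) \;=\; \exp\!\left(-\frac{\Lambda^2}{n\Delta^2}\right).
\]
Since $(X_i)$ is non-negative we have $X_0 \geq 0$, hence $n\Delta^2 \leq 3 n\Delta^2 + 2 X_0 \Delta$, and the claimed bound follows immediately.

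No substantial obstacle is anticipated: the very reason a ``modified'' Azuma is needed is that the even-indexed steps could push $X$ downward by unbounded amounts, which is fatal for a two-sided Azuma bound but entirely harmless for a one-sided one, since such decreases only contribute $e^{\lambda Y_{2j}} \leq 1$ to the MGF whenever $\lambda > 0$. The extra summand $2 X_0 \Delta$ in the denominator of the stated bound is not needed by the argument above and is strictly weaker than what the reduction delivers; it is presumably chosen because the same inequality also follows from a Bennett- or Freedman-type analysis where the cumulative conditional variance is controlled via the initial value $X_0$, a form that is more convenient in some applications.
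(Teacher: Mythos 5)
Your proof is correct, and it follows a genuinely different (and cleaner) route than the paper. The paper handles the unbounded even decreases by \emph{refining} the process: each even step $X_{2k+1}\to X_{2k+2}$ is split into $O(1+\frac{1}{\Delta}(X_{2k+1}-X_{2k+2}))$ sub-steps of size at most $\Delta$, so that the standard Azuma inequality applies to the refined chain; non-negativity of $X$ is then needed to bound the total number of sub-steps by $\frac{3n}{2}+\frac{X_0}{\Delta}$, which is exactly where the denominator $3n\Delta^2+2X_0\Delta$ comes from. You instead observe that for a \emph{one-sided upper} tail, the even increments $Y_{2j}\leq0$ can simply be discarded, since $X_n-X_0\leq\sum_j Y_{2j-1}=:S_{n/2}$ deterministically. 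The sub-sampled sequence $(S_j)$ is a super-martingale with respect to $\mathcal{G}_j=\mathcal{F}_{2j-1}$ (tower property, as you verify), has differences bounded by $\Delta$, and standard Azuma gives $\Pr[S_{n/2}\geq\Lambda]\leq\exp(-\Lambda^2/(n\Delta^2))$, which is strictly stronger than the stated bound. Two side benefits of your argument are worth noting: you never actually use non-negativity of $X$ except in the final cosmetic step of weakening $n\Delta^2$ to $3n\Delta^2+2X_0\Delta$, and the $X_0$-term drops out of the bound entirely, which is nicer when $X_0$ is random. Your speculation about the $2X_0\Delta$ term coming from a Bennett/Freedman analysis is off, though; as just noted, it arises in the paper from counting the step-splittings in their refinement construction, and is an artifact of that particular reduction rather than a deliberate choice.
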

\begin{proof}
We cannot directly apply Azuma's inequality as it requires the absolute value of the differences to be bounded, but here at an even $i$ the value of~$X_i$ could drop all the way to $0$. 

We will define an auxiliary sub-martingale $Z$ and show that this satisfies $|Z_i - Z_{i-1}| \leq \Delta$ for any $i$. To this end, for any $k \in \{0,...,n/2-1\}$, we define $\tau_k = \lfloor \frac{1}{\Delta}(X_{2k + 2} - X_{2k + 1})\rfloor$ and set \begin{align*}
Y_0^{(k)} &= X_{2k}, \\
Y_1^{(k)} &= X_{2k + 1}, \\ 
Y_2^{(k)} &= X_{2k + 1} + d_1,\\
Y_3^{(k)} &= X_{2k + 1} + d_1 + d_2, \\
& \vdots \\
Y_{2 + \tau_k}^{(k)} &= X_{2k + 2}, 
\end{align*}
where 
\[
  d_i = \begin{cases}
      \Delta, & i < \tau_k, \\
      X_{2k + 2} - X_{2k + 1}  - \Delta \cdot (\tau_k - 1), & i = \tau_k.
  \end{cases}
\]
Then, we set \[
Z = Y_0^{(0)}, Y_1^{(0)}, \ldots, Y_0^{(1)}, Y_1^{(1)}, \ldots , Y_0^{(\frac12n-1)}, Y_1^{(\frac12n-1)}.
\]
Note that $Y^{(k)}$ forms a sub-martingale with respect to $\mathcal{F}_0' = \mathcal{F}_{2k}$, $\mathcal{F}_1' = \mathcal{F}_{2k+1}$ and $\mathcal{F}_{2+t}' = (X_{2k+1}, t, X_{2k+2})$ for $t \in [0, \tau_k]$, and so $Z$ also forms a sub-martingale, and further it satisfies the bounded difference inequality for $\Delta$.

Throughout the $n$ steps $X$ can increase by at most $n/2 \cdot \Delta$. Hence, in total there can be at most $2 \cdot (\frac{n}{2} + \frac{1}{\Delta}X_0)$ extra decrease steps. So $Z$ has at most $n' = \frac{3}{2} \cdot n +\frac{1}{\Delta}X_0$ steps. By padding $Z$ with the same last value we can ensure that it always has exactly $n'$ steps, and it remains a super-martingale. Therefore, applying Azuma's inequality (\cref{lem:azuma}) for any $\Lambda > 0$ and $\Delta_i = \Delta$, we get that
\[
\Pro{Z_{n'} \geq Z_0 + \Lambda} \leq \exp\left(- \frac{\Lambda^2}{3n\Delta^2 + 2X_0 \cdot \Delta} \right).
\]
Using that $Z_0 = X_0$ and $Z_{n'} = X_n$, we obtain the conclusion.
\end{proof}

\subsection{Omitted Material}
\label{apndx}
In this Appendix, we first restate and establish a result whose proof was omitted. Then, we state and prove lemmas which were not included in the main text.

\DevDspLem*
\begin{proof}
    Let $i$ be such that $|i-\pi(i)| = \mdev(\pi)$, and suppose without loss of generality that $i < \pi(i)$.
    Suppose also that $l[j] = \pi(i)$.
    We distinguish two cases.
    If $j\leq d\cdot \pi(i) - \frac d2\cdot \mdev(\pi)$,
    then
    \[
        \mdsp(l)
        \geq
        |d\cdot l[j] - j|
        =
        d\cdot \pi(i) - j
        \geq
        \frac d2\cdot \mdev(\pi)
        .
    \]
    Suppose now that $j > d\cdot \pi(i) - \frac d2\cdot \mdev(\pi)$.
    Let $i' = \min\{\pi(k) \colon i\leq k \leq n\}$, thus $\pi(i')\leq i$.
    Let also $j'$ be such that $l[j'] = \pi(i')$.
    Note that $j'\geq j$, since the relative order of non-empty items in $l$ is the same as the order in $\pi$.
    Then
    \begin{align*}
        \mdsp(l)
        &\geq
        |d\cdot l[j'] - j'|
        =
        j' - d\cdot \pi(i')
        \geq
        j  - d\cdot i
        \geq
        \frac d2\cdot \mdev(\pi)
        ,
    \end{align*}
    since $j > d\cdot \pi(i) - \frac d2\cdot \mdev(\pi)$ and $i = \pi(i) - \mdev(\pi)$.
    This proves the first inequality of the lemma.

Next we prove the bound on the total deviation. 
Let $T = \{j \cdot d : j \in [n] \}$ be the subset of all target positions in list $l$. We first move all $n$ items to positions in $T$, such that if an item $j\in[n]$ is at a position  $x \in (i \cdot d, (i+1) \cdot d)$, we move $j$ to position $i \cdot d$ if $j \leq i$ else we move it to $(i+1) \cdot d$; this may result in more than one element mapped to the same position. Note that this step can only decrease the displacement of each element (and thus also the total displacement).

Let $L_i = \{ j \cdot d\colon j \in [1, i] \}$ and $R_i = \{ j \cdot d\colon j \in (i, n] \}$. Let $s_i$ be the \emph{total} number of items mapped to positions in $L_i$. Note that with respect to partition $\{L_i,R_i\}$ of $T$, there are at least $|s_i - i|$ elements $j$ whose target position $j\cdot d$ is in a different part of the partition than the one at which element $j$ is mapped to.

It follows that by ``ignoring'' all positions $x \in (i \cdot d, (i+1) \cdot d)$ when computing the total displacement, the total displacement reduces by at least $|s_i - i|\cdot (d-1)$. And if we ignore all positions except for those in $T$, the total displacement decreases by at least $2\sum_{i \in [n]} |s_i - i|\cdot (d-1)$.

The last step is to redistribute the items so that exactly one item is mapped to each position in $T$. This may increase the total displacement by at most $2\sum_{i \in [n]} |s_i-i|$ as we explain next, and combining this with the previous point we get that overall the total displacement decreases, since $d > 1$.

To map one item to each position in $T$, for each $i\in [n-1]$, we have to move exactly $|s_i - i|$ items between $i \cdot d$ and $(i+1) \cdot d$: from $i \cdot d$ to $(i+1) \cdot d$ if $s_i>i$ and in the opposite direction otherwise. The total displacement of the resulting list is equal to the total deviation of $\pi$, and so the conclusion follows.
\end{proof}

\begin{lemma}
\label{lem:admissible_swaps_are_good}
Consider any reals $a, b, c, d$ such that $a \leq b$ and $c \leq d$.  Then, $(i)$
\[
 |a - c| + |b - d| \leq |a - d| + |b - c|, 
\]
and $(ii)$
\[
 \max\{|a - c|, |b - d|\} \leq \max\{|a - d|, |b - c| \}.
\]
\end{lemma}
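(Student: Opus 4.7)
The plan is to prove (i) and (ii) separately; both reduce to the monotonicity hypotheses $a\le b$ and $c\le d$, but they call for slightly different case analyses.

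For (i), the cleanest route is the integral representation
$|x-y|=\int_{\mathbb{R}}|\mathbf{1}_{s<x}-\mathbf{1}_{s<y}|\,ds$.
Set $u(s)=\mathbf{1}_{s<a}$, $v(s)=\mathbf{1}_{s<b}$, $x(s)=\mathbf{1}_{s<c}$, $y(s)=\mathbf{1}_{s<d}$, so that $|a-c|=\int|u-x|\,ds$ and similarly for the other three absolute values. The hypotheses $a\le b$ and $c\le d$ translate into pointwise inequalities $u(s)\le v(s)$ and $x(s)\le y(s)$ for every $s$. It therefore suffices to verify the pointwise Boolean inequality $|u-x|+|v-y|\le|u-y|+|v-x|$ for all $(u,v,x,y)\in\{0,1\}^4$ with $u\le v$ and $x\le y$. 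There are nine admissible configurations; a direct inspection shows equality in eight of them and a strict gap (left side $0$, right side $2$) in the remaining configuration $(u,v,x,y)=(0,1,0,1)$. Integrating over $s$ yields~(i).

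For (ii), a direct case split is cleanest. Assume first that $|a-c|\ge|b-d|$, so the LHS equals $|a-c|$. If $a\le c$, then $c\le d$ gives $|a-c|=c-a\le d-a=|a-d|$; if $a>c$, then $c<a\le b$ gives $|a-c|=a-c\le b-c=|b-c|$. In either sub-case the RHS is at least $|a-c|$. The symmetric situation $|b-d|>|a-c|$ is handled by the same pattern: if $b\le d$, then $a\le b\le d$ gives $|b-d|=d-b\le d-a=|a-d|$; if $b>d$, then $c\le d<b$ gives $|b-d|=b-d\le b-c=|b-c|$. Each of the four sub-cases yields the desired bound.

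No genuine obstacle arises. The only care needed is to avoid missing a case in the nine-configuration Boolean check for (i), and to spell out both halves of the WLOG in (ii) rather than appealing to a symmetry that does not preserve the hypotheses $a\le b$ and $c\le d$ (negation swaps the roles but re-interprets which term is maximal, so a second explicit case is cleaner than invoking symmetry).
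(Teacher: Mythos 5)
Your proof is correct, but part (i) takes a genuinely different route than the paper's. The paper handles both parts simultaneously by a single case analysis on the ordering of the four reals: it assumes WLOG $a \leq c$ and then splits into three cases according to where $b$ sits relative to $c$ and $d$ (namely $a\leq b\leq c\leq d$, $a\leq c\leq b\leq d$, $a\leq c\leq d\leq b$), checking both (i) and (ii) directly in each. Your part (i) instead uses the layer-cake representation $|x-y|=\int_{\mathbb{R}}|\mathbf{1}_{s<x}-\mathbf{1}_{s<y}|\,ds$ to reduce to a Boolean inequality over $\{0,1\}^4$ constrained by $u\le v$ and $x\le y$; I verified the nine configurations and the single strict case $(0,1,0,1)$, so this is sound. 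The layer-cake reduction is more conceptual and generalizes cleanly to, e.g., sums over more pairs or to any monotone functional, at the cost of introducing machinery that is heavier than the statement warrants; the paper's three-case check is more pedestrian but handles both halves at once. Your part (ii) is essentially a case analysis reorganized around which term achieves the maximum on the left, which is in the same spirit as the paper's argument, and your remark about not invoking a naive symmetry (since negation does not preserve $a\le b$, $c\le d$ in a way that swaps the maximand trivially) is apt.
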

\begin{proof}
Without loss of generality assume that $a \leq c$. Then, we consider the three valid cases separately:

\medskip

\textbf{Case 1 [$a \leq b \leq c \leq d$]:} In this case,
\begin{align*}
|a - c| + |b - d| 
 & = (c - a) + (d - b) = (d - a) + (c - b) = |a - d| + |b - c|.
\end{align*}
And similarly,
\begin{align*}
\max\{|a - c|, |b - d| \} 
  & = \max\{ c - a, b -d  \} \leq d - a = \max\{|a - d|, |b - c| \}.
\end{align*}

\textbf{Case 2 [$a \leq c \leq b \leq d$]:} In this case,
\begin{align*}
|a - c| + |b - d| 
 & = (c - a) + (d - b) = (d - a) - (b - c) = |a - d| - |b - c| \leq |a - d| + |b - c|.
\end{align*}
And similarly,
\begin{align*}
\max\{|a - c|, |b - d| \} 
  & = \max\{ c - a, d - b \} \leq d - a = \max\{|a - d|, |b - c| \}.
\end{align*}

\textbf{Case 3 [$a \leq c \leq d \leq b$]:} In this case,
\begin{align*}
|a - c| + |b - d| 
 & = (c - a) + (b - d) \leq (d - a) + (b - c)  = |a - d| + |b - c|.
\end{align*}
And similarly,
\begin{align*}
\max\{|a - c|, |b - d| \} 
  & = \max\{ c - a, d - b \} 
    \leq \max\{ d - a, b - c \} 
    = \max\{|a - d|, |b - c| \}. \qedhere
\end{align*}
\end{proof}

\begin{lemma} \label{lem:mgf_taylor}
Consider any integer-valued random variable $S \geq 0$ for which there exists $\lambda >0$ and $c' > 0$ such that $\Ex{e^{3\lambda S}} \leq c'$. Then, for any $\alpha \in [-\lambda, \lambda]$, it holds that
\[
  \Ex{e^{\alpha S}} 
    \leq 1 + \alpha \cdot \Ex{S} + \alpha^2 \cdot \frac{c'}{\lambda^2}.
\]
\end{lemma}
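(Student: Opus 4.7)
The plan is to use a second-order Taylor expansion of $e^{\alpha S}$ with an explicit remainder, and then bound the remainder using the given moment generating function estimate.

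First, I would establish the pointwise inequality $e^{y} \leq 1 + y + \tfrac{y^2}{2}\,e^{|y|}$ for every real $y$. This follows from the integral form of Taylor's remainder, $e^{y} = 1 + y + \int_0^y (y-t)\,e^t\,dt$: for $y\geq 0$ the integrand is at most $(y-t)e^{y}$, giving a remainder of at most $\tfrac{y^2}{2}e^{y}$, while for $y<0$ one flips the limits and bounds $e^t\leq 1$ to get a remainder of at most $\tfrac{y^2}{2}$. Applying this with $y = \alpha S$ and taking expectations yields
\[
\Ex{e^{\alpha S}} \leq 1 + \alpha\Ex{S} + \frac{\alpha^2}{2}\,\Ex{S^2\, e^{|\alpha|\,S}}.
\]

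Next I would reduce the remainder term to the assumed MGF bound at $3\lambda$. Since $|\alpha|\leq \lambda$ and $S\geq 0$, we have $e^{|\alpha|S}\leq e^{\lambda S}$. Moreover, using the trivial bound $e^{z}\geq \tfrac{z^2}{2}$ for $z\geq 0$ with $z = 2\lambda S$, we get $S^2 \leq \tfrac{1}{2\lambda^2}\,e^{2\lambda S}$. Multiplying these two bounds gives $S^2\, e^{|\alpha| S} \leq \tfrac{1}{2\lambda^2}\,e^{3\lambda S}$, and therefore
\[
\Ex{S^2\, e^{|\alpha|\,S}} \leq \frac{1}{2\lambda^2}\,\Ex{e^{3\lambda S}} \leq \frac{c'}{2\lambda^2}.
\]

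Plugging this back in gives $\Ex{e^{\alpha S}}\leq 1+\alpha\Ex{S}+\tfrac{\alpha^2 c'}{4\lambda^2}$, which is stronger than the stated bound $1+\alpha\Ex{S}+\tfrac{\alpha^2 c'}{\lambda^2}$. There is no real obstacle here; the only mildly delicate point is the case $\alpha S<0$ in the Taylor remainder, which is why one needs the $e^{|y|}$ factor rather than $e^{y}$ in the pointwise inequality — otherwise the argument would not cover negative $\alpha$ in the range $[-\lambda,0)$. Everything else is a routine application of the factor-of-three slack built into the hypothesis $\Ex{e^{3\lambda S}}\leq c'$, which is exactly what lets one absorb both the exponential weight $e^{|\alpha|S}$ and the polynomial factor $S^2$ into a single exponential moment.
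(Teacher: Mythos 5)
Your proof is correct and follows essentially the same strategy as the paper's: expand to second order, then absorb the quadratic-times-exponential remainder $S^2 e^{|\alpha|S}$ into $e^{3\lambda S}/\lambda^2$ using the factor-of-three slack in the hypothesis. The only (minor) difference is that you apply a pointwise Taylor bound to $e^{\alpha S}$ and then take expectations, whereas the paper applies Taylor's theorem with Lagrange remainder to the MGF $f(\alpha)=\Ex{e^{\alpha S}}$ itself; your variant is slightly more elementary (no need to justify differentiating under the expectation) and in fact yields the sharper constant $c'/(4\lambda^2)$.
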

\begin{proof}
Consider the function 
\[
 f(\alpha) = \Ex{e^{\alpha S}}.
\]
Then, by Taylor's theorem we have that for any $\alpha \in [-\lambda, \lambda]$, there exists $\xi \in (-\lambda, \lambda)$ such that
\begin{align*}
f(\alpha) & = f(0) + \alpha \cdot f'(0) + \frac{\alpha^2}{2} \cdot f''(\xi) \\
  & = 1 + \alpha \cdot \Ex{S} + \frac{\alpha^2}{2} \cdot \Ex{S^2 \cdot e^{\xi S}} \\
  & \stackrel{(a)}{\leq} 1 + \alpha \cdot \Ex{S} + \alpha^2 \cdot \Ex{S^2 \cdot e^{\lambda S}} \\
  & \stackrel{(b)}{\leq} 1 + \alpha \cdot \Ex{S} + \alpha^2 \cdot \Ex{\left( \frac{e^{\lambda S}}{\lambda} \right)^2 \cdot e^{\lambda S}} \\
  & \stackrel{(c)}{\leq} 1 + \alpha \cdot \Ex{S} + \alpha^2 \cdot \frac{c'}{\lambda^2},
\end{align*}
using in $(a)$ that $\xi < \lambda$, in $(b)$ that $S \leq \frac{e^{\lambda S}}{\lambda}$ and in $(c)$ that $\Ex{e^{3\lambda S}} \leq c'$.
\end{proof}

\begin{lemma} \label{lem:many_sorting}
Consider a process with bounded average rate of mixing steps for any $b \geq 1$. Then, for any $m \geq (b+1) \cdot n$ consecutive steps, at least a $\frac{1}{4(b+1)}$ fraction are sorting steps.
\end{lemma}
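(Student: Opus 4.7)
The plan is to reduce the statement to a simpler sub-claim: every window of exactly $(b+1)n$ consecutive steps contains at least $n$ sorting steps. Once this is proved, the main inequality follows by tiling the window of $m$ steps with disjoint sub-windows of length $(b+1)n$.

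To prove the sub-claim, I would fix an arbitrary window of length $(b+1)n$ and argue by contradiction: suppose it contains only $s \leq n-1$ sorting steps. Let $k$ denote the number of sorting steps that occurred before the window starts. Then the mixing steps inside the window are drawn from at most $s+1$ consecutive ``gaps'': at most a suffix of $b_k$ before the first sorting step in the window, the full gaps $b_{k+1},\ldots,b_{k+s-1}$ between consecutive sorting steps in the window, and at most a prefix of $b_{k+s}$ after the last one. Since the $b_i$'s are nonnegative, the total number of mixing steps in the window is bounded by the sum of at most $s+1 \leq n$ consecutive $b_i$'s. Extending this sum on the right to a block of $n$ consecutive $b_i$'s only increases it, and \cref{def:bounded average rate if mixing steps} gives an upper bound of $bn$. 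Hence the window has length at most $s + bn \leq (n-1)+bn < (b+1)n$, contradicting its assumed length. (The edge case $k=0$, where there are no mixing steps before the first sorting step of the process, is only easier, since the sum then has only $s \leq n-1$ terms.)

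For the second step, write $q = \lfloor m / ((b+1)n) \rfloor$. Since $m \geq (b+1)n$ we have $q \geq 1$, and the first $q \cdot (b+1)n$ steps of the window split into $q$ disjoint sub-windows of length $(b+1)n$, each contributing at least $n$ sorting steps by the sub-claim. Therefore the number of sorting steps in the full window is at least $qn$. To conclude, note that by the definition of $q$ we have $m < (q+1)(b+1)n$, so
\[
\frac{m}{4(b+1)} < \frac{(q+1)n}{4} \leq qn,
\]
where the last inequality uses $(q+1)/4 \leq q$, which holds whenever $q \geq 1$. This gives the desired fraction $1/(4(b+1))$ of sorting steps.

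There is no real obstacle here; the content of the lemma is just a counting argument. The only subtlety is in bookkeeping the partial gaps at the two endpoints of the window so as to use the constraint $\sum_{i=t+1}^{t+n} b_i \leq bn$ on a block of exactly $n$ consecutive indices, and in choosing the constant $4$ large enough that the loss from the floor in $q = \lfloor m/((b+1)n) \rfloor$ is absorbed uniformly in $b$.
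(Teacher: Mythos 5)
Your proof is correct and follows essentially the same approach as the paper: bound the number of sorting steps in a window of length $(b+1)n$ using the constraint of \cref{def:bounded average rate if mixing steps}, then tile a longer window with $(b+1)n$-length blocks and absorb the leftover in the constant. The only cosmetic difference is that you establish the sub-claim by contradiction while the paper counts forward from the last sorting step before the window; the bookkeeping of the $s+1$ partial and full gaps is the same in both.
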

\begin{proof}
Consider any interval $[s, t)$ with length $|t - s| = (b+1) \cdot n$. Let $u$ be the last sorting step before $s$, then by \cref{def:bounded average rate if mixing steps}, it follows that the next $n-1$ sorting steps are in the interval $[s, t)$. Hence, in any interval of length $(b+1) \cdot n$, there is at least a fraction of $\frac{n-1}{(b+1) \cdot n} \geq \frac{1}{2\cdot (b+1)}$ sorting steps.

For intervals of length at least $(b+1) \cdot n$, by splitting it into sub-intervals of length exactly $(b+1) \cdot n$ (except for possibly the last one), we get that at least a $\frac{1}{2} \cdot \frac{1}{2(b+1)}$ fraction of the steps are sorting.
\end{proof}

\bibliographystyle{alphaurl}
\bibliography{sorting}

\newcommand{\etalchar}[1]{$^{#1}$}
\begin{thebibliography}{BVDE{\etalchar{+}}18b}

\bibitem[AKM{\etalchar{+}}12]{AnagnostopoulosKMUV12}
Aris Anagnostopoulos, Ravi Kumar, Mohammad Mahdian, Eli Upfal, and Fabio
  Vandin.
\newblock Algorithms on evolving graphs.
\newblock In {\em Proc.\ Innovations in Theoretical Computer Science, {ITCS}},
  pages 149--160, 2012.
\newblock \href {https://doi.org/10.1145/2090236.2090249}
  {\path{doi:10.1145/2090236.2090249}}.

\bibitem[AKMU09]{AnagnostopoulosKMU09}
Aris Anagnostopoulos, Ravi Kumar, Mohammad Mahdian, and Eli Upfal.
\newblock Sort me if you can: {H}ow to sort dynamic data.
\newblock In {\em Proc.\ 36th Internatilonal Colloquium on Automata, Languages
  and Programming, {ICALP}}, pages 339--350, 2009.
\newblock \href {https://doi.org/10.1007/978-3-642-02930-1\_28}
  {\path{doi:10.1007/978-3-642-02930-1\_28}}.

\bibitem[AKMU11]{AnagnostopoulosKMU11}
Aris Anagnostopoulos, Ravi Kumar, Mohammad Mahdian, and Eli Upfal.
\newblock Sorting and selection on dynamic data.
\newblock {\em Theor. Comput. Sci.}, 412(24):2564--2576, 2011.
\newblock \href {https://doi.org/10.1016/j.tcs.2010.10.003}
  {\path{doi:10.1016/j.tcs.2010.10.003}}.

\bibitem[ALL{\etalchar{+}}16]{AnagnostopoulosALLLM16}
Aris Anagnostopoulos, Jakub Lacki, Silvio Lattanzi, Stefano Leonardi, and
  Mohammad Mahdian.
\newblock Community detection on evolving graphs.
\newblock In {\em Proc.\ Annual Conference on Neural Information Processing
  Systems, {NIPS}}, pages 3522--3530, 2016.
\newblock URL:
  \url{https://proceedings.neurips.cc/paper_files/paper/2016/file/8698ff92115213ab187d31d4ee5da8ea-Paper.pdf}.

\bibitem[AM22]{AcharyaM22}
Aditya Acharya and David~M. Mount.
\newblock Optimally tracking labels on an evolving tree.
\newblock In {\em Proc.\ 34th Canadian Conference on Computational Geometry,
  {CCCG}}, pages 1--8, 2022.
\newblock URL:
  \url{https://www.torontomu.ca/content/dam/canadian-conference-computational-geometry-2022/papers/CCCG2022_paper_51.pdf}.

\bibitem[BBHM05]{Benjamini2005}
Itai Benjamini, Noam Berger, Christopher Hoffman, and Elchanan Mossel.
\newblock Mixing times of the biased card shuffling and the asymmetric
  exclusion process.
\newblock {\em Trans. American Mathematical Society}, 357(8):3013--3029, 2005.
\newblock URL: \url{http://www.jstor.org/stable/3845086}.

\bibitem[BKMU12]{BahmaniKMU12}
Bahman Bahmani, Ravi Kumar, Mohammad Mahdian, and Eli Upfal.
\newblock Pagerank on an evolving graph.
\newblock In {\em Proc.\ 18th {ACM} {SIGKDD} International Conference on
  Knowledge Discovery and Data Mining, {KDD}}, pages 24--32, 2012.
\newblock \href {https://doi.org/10.1145/2339530.2339539}
  {\path{doi:10.1145/2339530.2339539}}.

\bibitem[BM08]{BravermanM08}
Mark Braverman and Elchanan Mossel.
\newblock Noisy sorting without resampling.
\newblock In {\em Proc.\ 19th Annual {ACM-SIAM} Symposium on Discrete
  Algorithms, {SODA}}, pages 268--276, 2008.
\newblock URL: \url{http://dl.acm.org/citation.cfm?id=1347082.1347112}.

\bibitem[BVDE{\etalchar{+}}18a]{VialDEGJ18}
Juan~Jos{\'{e}} Besa~Vial, William~E. Devanny, David Eppstein, Michael~T.
  Goodrich, and Timothy Johnson.
\newblock Optimally sorting evolving data.
\newblock In {\em Proc.\ 45th International Colloquium on Automata, Languages,
  and Programming, {ICALP}}, pages 81:1--81:13, 2018.
\newblock \href {https://doi.org/10.4230/LIPIcs.ICALP.2018.81}
  {\path{doi:10.4230/LIPIcs.ICALP.2018.81}}.

\bibitem[BVDE{\etalchar{+}}18b]{VialDEGJ18a}
Juan~Jos{\'{e}} Besa~Vial, William~E. Devanny, David Eppstein, Michael~T.
  Goodrich, and Timothy Johnson.
\newblock Quadratic time algorithms appear to be optimal for sorting evolving
  data.
\newblock In {\em Proc.\ Twentieth Workshop on Algorithm Engineering and
  Experiments, {ALENEX}}, pages 87--96, 2018.
\newblock \href {https://doi.org/10.1137/1.9781611975055.8}
  {\path{doi:10.1137/1.9781611975055.8}}.

\bibitem[DEGI10]{EppsteinGI10}
Camil Demetrescu, David Eppstein, Zvi Galil, and Giuseppe~F. Italiano.
\newblock Dynamic graph algorithms.
\newblock In {\em Algorithms and Theory of Computation Handbook: General
  Concepts and Techniques}. Chapman \& Hall/CRC, 2 edition, 2010.
\newblock \href {https://doi.org/10.1201/9781584888239}
  {\path{doi:10.1201/9781584888239}}.

\bibitem[DG77]{DiakonisG77}
Persi Diaconis and Ronald~L. Graham.
\newblock Spearman's footrule as a measure of disarray.
\newblock {\em Journal of the Royal Statistical Society: Series B
  (Methodological)}, 39(2):262--268, 1977.
\newblock \href {https://doi.org/10.1111/j.2517-6161.1977.tb01624.x}
  {\path{doi:10.1111/j.2517-6161.1977.tb01624.x}}.

\bibitem[DP09]{DP09}
Devdatt~P. Dubhashi and Alessandro Panconesi.
\newblock {\em Concentration of measure for the analysis of randomized
  algorithms}.
\newblock Cambridge University Press, Cambridge, 2009.
\newblock \href {https://doi.org/10.1017/CBO9780511581274}
  {\path{doi:10.1017/CBO9780511581274}}.

\bibitem[DR00]{DiaconisR00}
Persi Diaconis and Arun Ram.
\newblock Analysis of systematic scan {M}etropolis algorithms using
  {I}wahori-{H}ecke algebra techniques.
\newblock {\em Michigan Mathematical Journal}, 48(1):157 -- 190, 2000.
\newblock \href {https://doi.org/10.1307/mmj/1030132713}
  {\path{doi:10.1307/mmj/1030132713}}.

\bibitem[ELS15]{EpastoLS15}
Alessandro Epasto, Silvio Lattanzi, and Mauro Sozio.
\newblock Efficient densest subgraph computation in evolving graphs.
\newblock In {\em Proc.\ 24th International Conference on World Wide Web,
  {WWW}}, pages 300--310, 2015.
\newblock \href {https://doi.org/10.1145/2736277.2741638}
  {\path{doi:10.1145/2736277.2741638}}.

\bibitem[FRPU94]{FeigeRPU94}
Uriel Feige, Prabhakar Raghavan, David Peleg, and Eli Upfal.
\newblock Computing with noisy information.
\newblock {\em {SIAM} J. Comput.}, 23(5):1001--1018, 1994.
\newblock \href {https://doi.org/10.1137/S0097539791195877}
  {\path{doi:10.1137/S0097539791195877}}.

\bibitem[Gli02]{Glickman02}
Mark~E. Glickman.
\newblock Parameter estimation in large dynamic paired comparison experiments.
\newblock {\em Journal of the Royal Statistical Society Series C: Applied
  Statistics}, 48(3):377--394, 01 2002.
\newblock \href {https://doi.org/10.1111/1467-9876.00159}
  {\path{doi:10.1111/1467-9876.00159}}.

\bibitem[GSS23]{GasieniecSS23}
Leszek Gasieniec, Paul~G. Spirakis, and Grzegorz Stachowiak.
\newblock New clocks, optimal line formation and self-replication population
  protocols.
\newblock In {\em Proc.\ 40th International Symposium on Theoretical Aspects of
  Computer Science, {STACS}}, pages 33:1--33:22, 2023.
\newblock \href {https://doi.org/10.4230/LIPICS.STACS.2023.33}
  {\path{doi:10.4230/LIPICS.STACS.2023.33}}.

\bibitem[GX23]{GuX23}
Yuzhou Gu and Yinzhan Xu.
\newblock Optimal bounds for noisy sorting.
\newblock In {\em Proc.\ 55th Annual {ACM} Symposium on Theory of Computing,
  {STOC}}, pages 1502--1515, 2023.
\newblock \href {https://doi.org/10.1145/3564246.3585131}
  {\path{doi:10.1145/3564246.3585131}}.

\bibitem[HLSZ17]{HuangLSZ17}
Qin Huang, Xingwu Liu, Xiaoming Sun, and Jialin Zhang.
\newblock Partial sorting problem on evolving data.
\newblock {\em Algorithmica}, 79(3):960--983, 2017.
\newblock \href {https://doi.org/10.1007/S00453-017-0295-3}
  {\path{doi:10.1007/S00453-017-0295-3}}.

\bibitem[Ken38]{K38}
M.~G. Kendall.
\newblock A new measure of rank correlation.
\newblock {\em Biometrika}, 30(1/2):81--93, 1938.
\newblock URL: \url{http://www.jstor.org/stable/2332226}.

\bibitem[KLM16]{KanadeLM16}
Varun Kanade, Nikos Leonardos, and Fr{\'{e}}d{\'{e}}ric Magniez.
\newblock Stable matching with evolving preferences.
\newblock In {\em Proc.\ Approximation, Randomization, and Combinatorial
  Optimization. Algorithms and Techniques, {APPROX/RANDOM}}, pages 36:1--36:13,
  2016.
\newblock \href {https://doi.org/10.4230/LIPICS.APPROX-RANDOM.2016.36}
  {\path{doi:10.4230/LIPICS.APPROX-RANDOM.2016.36}}.

\bibitem[Knu98]{Knuth98a}
Donald~Ervin Knuth.
\newblock {\em The art of computer programming, Volume III, 2nd Edition}.
\newblock Addison-Wesley, 1998.
\newblock \href {https://doi.org/https://dl.acm.org/doi/10.5555/280635}
  {\path{doi:https://dl.acm.org/doi/10.5555/280635}}.

\bibitem[LS22]{LS22Queries}
Dimitrios Los and Thomas Sauerwald.
\newblock Balanced allocations with incomplete information: {T}he power of two
  queries.
\newblock In {\em Proc.\ 13th Innovations in Theoretical Computer Science
  Conference, {ITCS}}, pages 103:1--103:23, 2022.
\newblock \href {https://doi.org/10.4230/LIPICS.ITCS.2022.103}
  {\path{doi:10.4230/LIPICS.ITCS.2022.103}}.

\bibitem[Mah14]{talk2014}
Mohammad Mahdian.
\newblock Algorithms on evolving data sets.
\newblock Talk given at ICERM/Brown Workshop on Stochastic Graph Models, March
  17--21, 2014.
\newblock URL:
  \url{{https://icerm.brown.edu/materials/Slides/sp-s14-w2/Algorithms_on_Evolving_Data_Sets_]_Mohammad_Mahdian,_Google_INC.pdf}}.

\bibitem[ML21]{MoL21}
Dingheng Mo and Siqiang Luo.
\newblock Agenda: {R}obust personalized pageranks in evolving graphs.
\newblock In {\em Proc.\ 30th {ACM} International Conference on Information and
  Knowledge Management, {CIKM}}, pages 1315--1324, 2021.
\newblock \href {https://doi.org/10.1145/3459637.3482317}
  {\path{doi:10.1145/3459637.3482317}}.

\bibitem[OMK15]{OhsakaMK15}
Naoto Ohsaka, Takanori Maehara, and Ken{-}ichi Kawarabayashi.
\newblock Efficient pagerank tracking in evolving networks.
\newblock In {\em Proc.\ 21th {ACM} {SIGKDD} International Conference on
  Knowledge Discovery and Data Mining, {KDD}}, pages 875--884, 2015.
\newblock \href {https://doi.org/10.1145/2783258.2783297}
  {\path{doi:10.1145/2783258.2783297}}.

\end{thebibliography}

\end{document}